\numberwithin{equation}{section}
\newtheorem{theorem}{Theorem}[section]
\newtheorem{definition}[theorem]{Definition}
\newtheorem{proposition}[theorem]{Proposition}
\newtheorem{corollary}[theorem]{Corollary}
\newtheorem{lemma}[theorem]{Lemma}
\newtheorem{remark}[theorem]{{\bf Remark}}
\def\cH{{\mathcal H}}
\newcommand{\Norm}{\|\,\raisebox{0.5 ex}{.}\,\|}
\newcommand{\Skindef}{[\raisebox{0.5 ex}{.},\raisebox{0.5 ex}{.}]}
\newcommand{\Skdef}{(\raisebox{0.5 ex}{.},\raisebox{0.5 ex}{.})}
\newcommand{\PT}{{\cal PT\,}}
\newcommand{\TP}{{\cal TP\,}}
\newcommand{\Pt}{{\cal P\,}}
\def\dR{{\mathbb R}}
   \def\cD{{\mathcal D}}   \def\cH{{\mathcal H}}
\def\cP{{\mathcal P}}    \def\cT{{\mathcal T}}
\def\dom{{\rm dom\,}}
\def\mmm{{{\rm max}}}
  \def\dom{{\rm dom\,}}
  \def\dim{{\rm dim\,}}
  \def\ker{{\rm ker\,}}
\begin{document}

\title{$\PT$ Symmetric, Hermitian and $\mathcal P$-Self-Adjoint Operators Related to
Potentials in $\PT$ Quantum Mechanics}

\author{Tomas Ya.\ Azizov
and Carsten Trunk}

 \maketitle

\begin{abstract}
In the recent years a generalization
 $H=p^2 +x^2(ix)^\epsilon$ of the harmonic oscillator using a complex deformation was investigated, where $\epsilon$ is a real
parameter.
Here, we will consider the most simple case:  $\epsilon$ even and $x$ real.
We will give a complete characterization
of three different classes of operators associated with the differential  expression $H$:
The class of all  self-adjoint (Hermitian) operators, the class of all
 $\PT$ symmetric operators  and the class of all $\mathcal P$-self-adjoint
 operators. Surprisingly, some of the $\PT$ symmetric operators associated
 to this expression have no resolvent set.
\end{abstract}

\section{Introduction}

In the well-known paper  \cite{BB98} from 1998 C.M.\ Bender and S.\
Boettcher considered the following Hamiltonians $\tau_{\epsilon}$,
\begin{equation}\label{Foundation}
\tau_{\epsilon} (y) (x) := - y^{\prime \prime}(x) + x^2
(ix)^{\epsilon} y(x), \quad {\epsilon} >0.
\end{equation}
This gave
rise to a mathematically consistent complex extension of
conventional quantum mechanics  into $\PT$ quantum mechanics, see,
 e.g., the review paper \cite{Bender}. During
the past ten years these $\PT$ models have been analyzed intensively.

Starting from the pioneering work of C.M.\ Bender and S.\ Boettcher
\cite{BB98}, the above Hamiltonian $\tau_{\epsilon}$ was always
understood as a complex extension of the harmonic oscillator
$H=\frac{d^2}{dx^2} + x^2$ defined along an appropriate complex
contour within  Stokes wedges. In \cite{M05} the problem was mapped
back to the real axis using a real parametrization of a suitable
contour within the Stokes wedges and in \cite{alle,JM1,JM2} this
approach was extended to different parametrizations and contours.

Usually, see, e.g., \cite{BB98,Bender,BBM99,CCG,CGS05}, a closed densely
defined  operator $H$ in the Hilbert space $L^2(\dR)$ is called
$\PT$ symmetric if $H$ commutes with $\PT$, where
$\cP$ represents parity reflection and the operator $\cT$ represents
time reversal, i.e.
\begin{equation}\label{PTDef}
(\cP f)(x) = f(-x)\quad \mbox{and} \quad (\cT f)(x) =
\overline{f(x)}, \quad f\in L^2(\dR).
\end{equation}
Via the parity operator $\cP$ an indefinite inner product is given by
 \begin{equation*}
[f,g]:=\int_\dR f(x)\overline{(\cP g)(x)}\,dx = \int_\dR
f(x)\overline{g(-x)}\,dx, \quad f,g\in L^2(\dR).
\end{equation*}
With respect to this inner product, $L^2(\dR)$ becomes a Krein
space and, as usual, a closed densely
defined  operator $H$  is called
$\cP$-self-adjoint if $H$ coincides with its $\Skindef$-adjoint,
see, e.g., \cite{AI,B,Krein,L65,LT04}.
For the study of $\PT$ symmetric operators in the frame work of 
self-adjoint operators in Krein spaces we refer to 
 \cite{LT04,M1,AK1,J02,GSZ,T06,T06a}.

 For unbounded
operators both notions, $\PT$ symmetry and $\cP$-self-adjointness, are also conditions on the domains. These two
notions will be of central interest in this paper, therefore we emphasize them
 in the following definition.
 We denote by $\dom H$ the
domain of the operator $H$.

\begin{definition} \label{PTDEF}
A closed densely defined  operator $H$ in $L^2(\dR)$
 is said to be $\PT$ symmetric if for all $f \in \dom H$ we have
$$
\PT f \in \dom H \quad \mbox{and} \quad \PT Hf = H\PT f.
$$
It is called  $\cP$-self-adjoint if  we have
$$
 \dom H = \dom H^*\cP  \quad \mbox{and} \quad Hf=\cP H^* \cP f \quad
 \mbox{for } f \in \dom H.
$$
\end{definition}
Clearly, a $\cP$-self-adjoint operator $H$ is also $\cP$-symmetric, that is, we have
$$
[Hf,g] =[f,Hg] \quad
 \mbox{for all } f,g \in \dom H.
$$

Here we will restrict ourselves to the most simple case: We will
consider the differential
expression $\tau_\epsilon$ in  \eqref{Foundation}  only for real $x$.
Moreover, if $\epsilon$ is even, we obtain a real-valued potential, i.e.\ if
$\epsilon =4n$ the above differential expression $\tau_{4n}$,
$n\in \mathbb N$, in \eqref{Foundation} will be of the form
\begin{equation}\label{EBurg1}
\tau_{4n} (y) (x) := - y^{\prime \prime}(x) + x^{4n+2} y(x),  \quad x\in \mathbb R.
\end{equation}
and  it will be of the form
\begin{equation}\label{EBurg2}
\tau_{4n+2} (y) (x) := - y^{\prime \prime}(x) - x^{4n+4}
 y(x), \quad x\in \mathbb R.
\end{equation}
in case $\epsilon=4n+2$.
In this situation we can make use
of the well-developed theory of Sturm-Liouville operators
(see, e.g., \cite{LS,W,W03,Z05}). Namely, it turns out that
the expression $\tau_{4n}$ is in the limit point case at $\infty$ and
at $-\infty$, hence there is only one self adjoint operator connected to
$\tau_{4n}$ which is also $\PT$ symmetric and $\cP$-self-adjoint.

The more interesting case is $\epsilon=4n+2$. The differential
expression  $\tau_{4n+2}$  is then in limit circle case at $+\infty$ and
at $-\infty$ and it admits many different extensions.
These extensions are described via restrictions of the
maximal domain  $\cD_{\mmm}$ by ``boundary conditions at $+\infty$
and $-\infty$''.
Therefore, we will
consider the differential expression $\tau_\epsilon$ only in the
case of $\epsilon =4n+2$, $n\in \mathbb N$.

Actually, we will consider
a slightly more general case which includes the case of $\tau_{4n+2}$.
For this,
we will always assume that $q$ is a real valued function from
 $L_{loc}^1(\mathbb R)$ which is even, that is,
 $$
 q(x)=q(-x)\quad  \mbox{for all }
 x\in \mathbb R,
 $$
  such that the differential equation
\begin{equation}\label{FoundationNew}
\tau_{q} (y) (x) := - y^{\prime \prime}(x) - q(x)
 y(x), \quad x\in \mathbb R.
\end{equation}
is in limit  circle case at $+\infty$ and
$-\infty$.

It is the aim of this paper
to specify three classes of operators
connected with the
differential expression $\tau_{q}$ in (\ref{FoundationNew}):
$\PT$ symmetric operators, $\mathcal P$-self-adjoint operators
and self-adjoint (Hermitian) operators. The main result of
this paper is a full characterization of these classes, which,
in addition, enables one to precisely describe the intersection of these
classes. In this sense, it is a continuation of \cite{AT10}, where all
self-adjoint (Hermitian)  and at the same time $\PT$ symmetric
operators in $L^2(\dR)$ associated with $\tau_{\epsilon}$
were described.

Surprisingly, it turns out that with the
differential expression $\tau_{q}$ in (\ref{FoundationNew}) there
are $\PT$ symmetric operators which correspond to
one- and three-dimensional extensions of the minimal operator which
are neither Hermitian nor $\mathcal P$-self-adjoint and which
\begin{center}
{\it possesses an empty resolvent set.}
\end{center}

In a next step we will consider complex
deformations, which are, from the mathematical point of view,
less understood. These questions will be treated in a subsequent note.
However, in our opinion even the most "simple" case (i.e.\ $\epsilon =4n+2$,
$x$ real) contains enough unsolved
questions and possesses a rich structure which one needs to understand first.

This paper is organized as follows: After introducing the basic notions
like minimal/maximal operator associated with  $\tau_{q}$
and bi-extensions in Section \ref{One},
we consider $2$-dimensional extensions in Section \ref{Two},  $3$-dimensional extensions in Section \ref{Three} and
 $1$-dimensional extensions in Section \ref{Four}.
 In the case of  $2$-dimensional extensions in Section \ref{Two}
we describe all bi-extensions  which are $\PT$-symmetric or $\cP$-self-adjoint. In the case of $3$-dimensional extensions and
$1$-dimensional extensions there are no $\cP$-self-adjoint nor Hermitian
extensions, but there exists $\PT$-symmetric extensions with empty resolvent
set, cf.\ Sections \ref{Three} and  \ref{Four}.

\section{Preliminaries: Operators in Krein spaces and bi-extensions} \label{One}

Recall that a complex linear space $\mathcal H$ with a hermitian
nondegenerate sesquilinear form $\Skindef$ is called a {\em Krein
space} if there exists a
 so called {\it fundamental decomposition} (cf.\ \cite{AI,B,Krein})
\begin{equation}\label{decomp}
 \cH = \cH_+ \oplus \cH_-
\end{equation}
with subspaces $\cH_\pm$ being orthogonal to each other with respect
to $\Skindef$ such that $(\cH_\pm, \pm\Skindef)$ are Hilbert spaces.
Then
\begin{equation*}\label{decompHilbert}
(x,x):=[x_+,x_+] -  [x_-,x_-], \quad x=x_++x_- \in\mathcal H \quad
\mbox{with } x_\pm \in \cH_\pm,
\end{equation*}
 is an inner product and $({\mathcal H},\Skdef)$ is a Hilbert space.
All topological notions are understood with respect to some Hilbert
space norm $\Norm$ on ${\mathcal H}$ such that $\Skindef$ is
$\Norm$-continuous. Any two such norms are equivalent, see
\cite[Proposition I.1.2]{L}.
 Denote by $P_+$ and $P_-$ the
orthogonal projections onto $\cH_+$ and $\cH_-$, respectively. The
operator $J:= P_+-P_-$ is called the {\it fundamental symmetry}
corresponding to the decomposition (\ref{decomp}) and we have
$$
[f,g] = (Jf,g)\quad\text{for all }f,g\in\mathcal H.
$$
For a detailed treatment of Krein spaces and operators therein we refer to the monographs \cite{AI} and \cite{B}.
If ${\cal L}$ is an arbitrary subset of a Krein space $({\cal H},\Skindef)$
we set
\begin{displaymath}
{\cal L}^{[\perp]} := \{x \in {\cal H} : [x,y] = 0
 \mbox{ for all } y \in {\cal L} \}.
\end{displaymath}
In the sequel we will make use of the following proposition.
\begin{proposition}\label{ort}
Let $\mathcal L$, $\mathcal M$ be  closed subspaces of a Krein space $({\cal H},\Skindef)$ and let $\mathcal L\subset \mathcal M$. Then
$\dim\mathcal L^{[\perp]}/\mathcal M^{[\perp]}=\dim \mathcal M/\mathcal L$.
\end{proposition}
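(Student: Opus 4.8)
The plan is to reduce the statement to a finite-dimensional linear-algebra fact by passing to the quotient $\mathcal{M}/\mathcal{L}$, which is finite-dimensional by hypothesis (the claim is only meaningful, and I will tacitly assume, that $\dim\mathcal{M}/\mathcal{L}<\infty$; otherwise both sides are infinite and there is nothing to prove). First I would recall that, since $\Skindef$ is nondegenerate and both $\mathcal{L}$ and $\mathcal{M}$ are closed, one has $\mathcal{L}^{[\perp][\perp]}=\mathcal{L}$ and $\mathcal{M}^{[\perp][\perp]}=\mathcal{M}$; this is the standard double-orthogonal-complement property in a Krein space (it follows from the Hilbert-space fact applied to $J\mathcal{L}$, $J\mathcal{M}$ together with continuity of $\Skindef$). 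From $\mathcal{L}\subset\mathcal{M}$ we immediately get $\mathcal{M}^{[\perp]}\subset\mathcal{L}^{[\perp]}$, so both quotients $\mathcal{M}/\mathcal{L}$ and $\mathcal{L}^{[\perp]}/\mathcal{M}^{[\perp]}$ make sense.

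Next I would set up the pairing that realizes one quotient as the dual of the other. Define a sesquilinear form
$$
\beta:\ \mathcal{M}/\mathcal{L}\ \times\ \mathcal{L}^{[\perp]}/\mathcal{M}^{[\perp]}\ \to\ \dC,\qquad
\beta\big(x+\mathcal{L},\,y+\mathcal{M}^{[\perp]}\big):=[x,y].
$$
This is well defined: if $x\in\mathcal{L}$ then $[x,y]=0$ because $y\in\mathcal{L}^{[\perp]}$, and if $y\in\mathcal{M}^{[\perp]}$ then $[x,y]=0$ because $x\in\mathcal{M}$. The key point is that $\beta$ is nondegenerate in both arguments. Suppose $\beta(x+\mathcal{L},\cdot)\equiv 0$; then $[x,y]=0$ for all $y\in\mathcal{L}^{[\perp]}$, i.e.\ $x\in\mathcal{L}^{[\perp][\perp]}=\mathcal{L}$, so $x+\mathcal{L}=0$. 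Symmetrically, if $\beta(\cdot,y+\mathcal{M}^{[\perp]})\equiv 0$, then $[x,y]=0$ for all $x\in\mathcal{M}$, hence $y\in\mathcal{M}^{[\perp]}$ and $y+\mathcal{M}^{[\perp]}=0$. Thus $\beta$ is a nondegenerate pairing of two vector spaces, at least one of which ($\mathcal{M}/\mathcal{L}$) is finite-dimensional; a nondegenerate pairing forces the two spaces to have the same (finite) dimension, which is exactly the asserted equality. Concretely: nondegeneracy in the first argument embeds $\mathcal{M}/\mathcal{L}$ into the algebraic dual of $\mathcal{L}^{[\perp]}/\mathcal{M}^{[\perp]}$, forcing the latter to be finite-dimensional of dimension $\ge\dim\mathcal{M}/\mathcal{L}$; nondegeneracy in the second argument gives the reverse inequality.

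The main obstacle is the well-definedness and nondegeneracy of $\beta$, and behind that the identity $\mathcal{L}^{[\perp][\perp]}=\mathcal{L}$ for closed subspaces. In a Krein space this is not quite as automatic as in a Hilbert space because $\Skindef$ is indefinite, but it does hold for \emph{closed} subspaces: writing $[f,g]=(Jf,g)$ with $J$ the fundamental symmetry, one has $\mathcal{L}^{[\perp]}=(J\mathcal{L})^{\perp}$, and since $J$ is a bounded bijection mapping closed subspaces to closed subspaces, $\mathcal{L}^{[\perp][\perp]} = J\big((J\mathcal{L})^{\perp\perp}\big)\cap(\cdots)$ unwinds to $\mathcal{L}$ using the Hilbert-space closed-complement identity $(J\mathcal{L})^{\perp\perp}=\overline{J\mathcal{L}}=J\mathcal{L}$. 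Once this is in hand the rest is the routine duality argument above. I would present the proof in the order: (i) recall $\mathcal{L}^{[\perp][\perp]}=\mathcal{L}$, $\mathcal{M}^{[\perp][\perp]}=\mathcal{M}$ and the inclusion $\mathcal{M}^{[\perp]}\subset\mathcal{L}^{[\perp]}$; (ii) define $\beta$ and check it is well defined; (iii) verify nondegeneracy on each side using step (i); (iv) conclude equality of dimensions from the nondegenerate pairing.
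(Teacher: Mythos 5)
Your argument is correct, but it is genuinely different from the one in the paper. The paper's proof conjugates by the fundamental symmetry $J$: since $J\mathcal N^{[\perp]}=\mathcal N^{\perp}$ and $J$ preserves dimensions of quotients, the Krein-space statement collapses to the Hilbert-space identity $\dim \mathcal L^{\perp}/\mathcal M^{\perp}=\dim\mathcal M/\mathcal L$ for closed subspaces, which is invoked as known. You instead build the intrinsic nondegenerate pairing $\beta(x+\mathcal L,\,y+\mathcal M^{[\perp]})=[x,y]$ between $\mathcal M/\mathcal L$ and $\mathcal L^{[\perp]}/\mathcal M^{[\perp]}$ and deduce equality of dimensions by duality; the price is that you must establish the biorthogonality $\mathcal L^{[\perp][\perp]}=\mathcal L$ for closed subspaces, which you in turn reduce to the Hilbert-space double-complement identity via $J$. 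So both proofs ultimately lean on $J$ to import a Hilbert-space fact, but yours makes the mechanism (the duality between the two quotients) explicit and coordinate-free, while the paper's is shorter because it outsources everything to the known Hilbert-space codimension formula. Two small points to tidy up: (i) in your concluding sentence the roles are swapped --- it is nondegeneracy in the \emph{second} argument that embeds $\mathcal L^{[\perp]}/\mathcal M^{[\perp]}$ into the dual of the finite-dimensional space $\mathcal M/\mathcal L$ and hence forces it to be finite-dimensional with $\dim\le\dim\mathcal M/\mathcal L$, whereas nondegeneracy in the first argument then gives the reverse inequality; since you prove both nondegeneracies the conclusion is unaffected. (ii) Your displayed unwinding of $\mathcal L^{[\perp][\perp]}$ is garbled (the stray ``$\cap(\cdots)$''); the clean computation is $\mathcal L^{[\perp]}=(J\mathcal L)^{\perp}$, hence $\mathcal L^{[\perp][\perp]}=\bigl(J(J\mathcal L)^{\perp}\bigr)^{\perp}=\bigl(\mathcal L^{\perp}\bigr)^{\perp}=\mathcal L$, using $JM^{\perp}=(JM)^{\perp}$ and closedness of $\mathcal L$.
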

\begin{proof}
Let $J$ be a canonical symmetry in the Krein space $\mathcal K$.
For subspaces $X,Y,Z$ of ${\cal H}$ with $X+Z=Y$ and $X\cap Z =\{0\}$ we obtain
$JX+JZ=JY$ and therefore
$$
\dim Y/X = \dim Z = \dim JZ= \dim JY/JX.
$$
Set $Y:=\mathcal L^{[\perp]}$ and $X:=\mathcal M^{[\perp]}$ we see
$$
\dim \mathcal L^{[\perp]}/\mathcal M^{[\perp]}= \dim J\mathcal L^{[\perp]}/J\mathcal M^{[\perp]}.
$$
As for each subspace $\mathcal N$ the equality  $J\mathcal N^{[\perp]}=\mathcal N^\perp$ holds,
we conclude
 $$
\dim \mathcal L^{[\perp]}/\mathcal M^{[\perp]}= \dim \mathcal L^{\perp}/\mathcal M^{\perp}
=\dim \mathcal M/\mathcal L.
$$
\end{proof}

Let $T$ be a densely defined linear operator in $\mathcal H$. The adjoint of $T$ in the Krein space $(\mathcal H,\Skindef)$ is defined by
\begin{equation}\label{Daissis}
T^+ := JT^*J,
\end{equation}
where $T^*$ denotes the adjoint of $T$ in the Hilbert space  $(\mathcal H,\Skdef)$. We have
$$
[Tf,g] = [f,T^+g]\quad\text{for all }f\in\dom T,\,g\in\dom T^+.
$$
The operator $T$ is called {\it selfadjoint} (in the Krein space  $(\mathcal H,\Skindef)$ ) if $T = T^+$.

\ \\
In what follows, we will consider extensions of a closed densely defined symmetric operator in a Hilbert space $\mathcal H$. As we will consider also non-symmetric extensions,
we will emphazise this in the following definition.
\begin{definition}
A closed extension $\widetilde{A}$ of a closed densely defined symmetric operator $A$ in a Hilbert space $\mathcal H$ is called a bi-extension if
$$
A\subset \widetilde{A}\subset A^*.
$$
For  $r\in \mathbb N$ a bi-extension $\widetilde{A}$ is called
a $r$-dimensional bi-extension, if
$$
\dim (\dom \widetilde{A}/ \dom A )= r.
$$
\end{definition}
For a bi-extension $\widetilde{A}$ of $A$ we have
 $$
 A\subset \widetilde{A}^*\subset A^*.
 $$
Hence both $\widetilde{A}$ and  $\widetilde{A}^*$ are extensions of $A$.

\begin{proposition}\label{ext}
Let $A$ be a closed densely defined symmetric operator in a Hilbert space $\mathcal H$ with the defect indices $(m,n)$ and $p=m+n<\infty$. Then $\widetilde{A}$ is an $r$-dimensional bi-extension of $A$ if and only if $\widetilde{A}^*$ is a $(p-r)$-dimensional bi-extension of $\widetilde{A}$.
\end{proposition}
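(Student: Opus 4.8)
The plan is to transport the statement to the level of the \emph{graphs} of the operators, placed inside a suitable Krein space built on $\mathcal H\oplus\mathcal H$, and then to read it off from Proposition~\ref{ort}. The single idea the argument rests on is the choice of Krein-space structure on $\mathcal H\oplus\mathcal H$ for which passing from an operator to its Hilbert-space adjoint corresponds to passing, on graphs, to the orthogonal companion $(\,\cdot\,)^{[\perp]}$; once that is set up, the whole proposition reduces to the classical identity $\dim(\dom A^*/\dom A)=m+n=p$ (von Neumann) together with dimension counting.

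First I would equip $\mathcal H\oplus\mathcal H$ with the Hermitian sesquilinear form
\[
\bigl[(f_1,f_2),(g_1,g_2)\bigr]:=i\bigl((f_1,g_2)-(f_2,g_1)\bigr),
\]
and check, by a one-line computation, that with respect to the canonical Hilbert inner product of $\mathcal H\oplus\mathcal H$ its Gram operator is $J=\left(\begin{smallmatrix}0&-iI\\ iI&0\end{smallmatrix}\right)$, a bounded self-adjoint involution. Hence $\mathcal H\oplus\mathcal H$ equipped with $[\,\cdot\,,\,\cdot\,]$ is a Krein space with fundamental symmetry $J$, so Proposition~\ref{ort} is available for it. Next I would record the elementary identity $\Gamma(T)^{[\perp]}=\Gamma(T^*)$, valid for every densely defined operator $T$, where $\Gamma(T):=\{(f,Tf):f\in\dom T\}$: a pair $(g_1,g_2)$ is $[\,\cdot\,,\,\cdot\,]$-orthogonal to $\Gamma(T)$ exactly when $(Tf,g_1)=(f,g_2)$ for all $f\in\dom T$, i.e.\ exactly when $g_1\in\dom T^*$ and $T^*g_1=g_2$. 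Applying this to $A$ and to $\widetilde A$, and using $A^{**}=A$ as well as $\widetilde A^{**}=\widetilde A$ (legitimate, since $\widetilde A$ is closed by definition of a bi-extension and densely defined because it extends $A$), the chain of closed subspaces $\Gamma(A)\subset\Gamma(\widetilde A)\subset\Gamma(A^*)$ of this Krein space satisfies
\[
\Gamma(A)^{[\perp]}=\Gamma(A^*),\qquad\Gamma(\widetilde A)^{[\perp]}=\Gamma(\widetilde A^*),\qquad\Gamma(A^*)^{[\perp]}=\Gamma(A).
\]

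Then I would apply Proposition~\ref{ort} with $\mathcal L=\Gamma(\widetilde A)$ and $\mathcal M=\Gamma(A^*)$, which yields $\dim\bigl(\Gamma(\widetilde A^*)/\Gamma(A)\bigr)=\dim\bigl(\Gamma(A^*)/\Gamma(\widetilde A)\bigr)$. Since $f\mapsto(f,Bf)$ is a linear isomorphism of $\dom B$ onto $\Gamma(B)$ taking $\dom A$ onto $\Gamma(A)$ and $\dom\widetilde A$ onto $\Gamma(\widetilde A)$, this reads
\[
\dim(\dom\widetilde A^*/\dom A)=\dim(\dom A^*/\dom\widetilde A).
\]
Now, combining with $\dim(\dom A^*/\dom A)=p$ and the chain $\dom A\subset\dom\widetilde A\subset\dom A^*$: if $\widetilde A$ is an $r$-dimensional bi-extension of $A$, then $\dim(\dom A^*/\dom\widetilde A)=p-r$, hence $\dim(\dom\widetilde A^*/\dom A)=p-r$; since in addition $A\subset\widetilde A^*\subset A^*$ (already noted before the proposition), this is exactly the assertion that $\widetilde A^*$ is a $(p-r)$-dimensional bi-extension. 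The converse implication is obtained by applying the same statement to the bi-extension $\widetilde A^*$ in place of $\widetilde A$, using $(\widetilde A^*)^*=\widetilde A$.

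I do not expect a genuine obstacle here, since the proposition is essentially a repackaging of the von Neumann count together with Proposition~\ref{ort}. The two points that must be handled with some care are the verification that the graph form above really turns $\mathcal H\oplus\mathcal H$ into a Krein space — so that Proposition~\ref{ort} genuinely applies — and the small bookkeeping that $\widetilde A$ and $\widetilde A^*$ are closed and densely defined, which is precisely what makes $\Gamma(\,\cdot\,)^{[\perp]}$ behave as an involution on the subspaces in play.
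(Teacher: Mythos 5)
Your proposal is correct and follows essentially the same route as the paper: both pass to graphs in the Krein space $\mathcal H\times\mathcal H$ with the (up to a harmless constant multiple) same indefinite sesquilinear form, use $\Gamma(T)^{[\perp]}=\Gamma(T^*)$, and invoke Proposition~\ref{ort} together with $\dim(\dom A^*/\dom A)=p$. Your extra care in exhibiting the fundamental symmetry $J$ and in noting that the conclusion really concerns $\widetilde A^*$ as a bi-extension of $A$ (the ``of $\widetilde A$'' in the statement being a slip) only makes the argument more complete than the paper's.
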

\begin{proof}
Let us consider the space $\mathcal K:=\mathcal H\times \mathcal H$ as a Krein space with the indefinite metric
$$
[x,y]=\frac{(x_2,y_1)-(x_1,y_2)}{2i},\quad x=
\begin{pmatrix}
 x_1\\ x_2
\end{pmatrix}
,\ y=
\begin{pmatrix}
 y_1\\ y_2
\end{pmatrix}
\in \mathcal H\times\mathcal H.
$$
Hence the symmetry of $A$ implies that the graph $\Gamma_A$ of $A$  is a neutral subspace in $\mathcal K$. Moreover $\Gamma_{A^*}=(\Gamma_A)^{[\perp]}$. The assumption that $\widetilde{A}$ is an  $r$-dimensional bi-extension of $A$ is equivalent to $\Gamma_A\subset \Gamma_{\widetilde{A}}\subset\Gamma_{A^*}$ and $\dim{\Gamma_{\widetilde{A}}}/\Gamma_A=r$. By Proposition \ref{ort} with $\mathcal L=\Gamma_A$ and $\mathcal M=\Gamma_{\widetilde{A}}$ we obtain that $\dim\Gamma_{A^*}/\Gamma_{\widetilde{A}^*}=r$ and therefore from
$\Gamma_A\subset \Gamma_{\widetilde{A}^*}\subset\Gamma_{A^*}$ it follows that $\widetilde{A}^*$ is a $(p-r)$-dimensional bi-extension of $A$.

\end{proof}

\begin{remark}{\rm
 If $A$ is a closed densely defined symmetric operator in a Hilbert space $\mathcal H$ with defect indices $(2,2)$,  then $\widetilde{A}$ is a $1$-dimensional bi-extension of $A$ if and only if $\widetilde{A}^*$ is a $3$-dimensional bi-extension of $A$  and    $\widetilde{A}$  is a $2$-dimensional extension of $A$ if and only if  $\widetilde{A}^*$ is also a $2$-dimensional bi-extension of $A$ .
 }
 \end{remark}

\section{The  Hamiltonian $\tau_{q}$} \label{OneOne}

By $L^2(\dR)$ we denote the space of all equivalence
classes of complex valued,
measurable functions $f$ defined on $\dR$ for which $\int_\dR\vert
f(x)\vert^2 dx$ is finite. We equip $L^2(\dR)$ with the
 usual Hilbert scalar product
\begin{equation*}
(f,g):=\int_\dR f(x)\overline{g(x)}\,dx,\quad f,g\in L^2(\dR).
\end{equation*}
Let $\cP$ represents parity reflection and $\cT$ represents
time reversal as in \eqref{PTDef}.
Then $\cP^2 = \cT^2 = (\PT)^2= I$ and $\PT = \cT\cP$. Observe that the operator $\cT$ is nonlinear.
The operator $\cP$  gives in a natural
way rise to an indefinite inner product $\Skindef$ which will play
an important role in the following. We equip $L^2(\dR)$ with the
indefinite inner product
\begin{equation}\label{Pindef}
[f,g]:=\int_\dR f(x)\overline{(\cP g)(x)}\,dx = \int_\dR
f(x)\overline{g(-x)}\,dx, \quad f,g\in L^2(\dR).
\end{equation}
With respect to this inner product, $L^2(\dR)$ becomes a Krein
space. Observe that in this case the operator $\cP$ serves as a
fundamental symmetry in the Krein space $(L^2(\dR), \Skindef)$. In
the situation where $\Skindef$ is given as in (\ref{Pindef}), it is
easy to see that  the set of all even functions can be chosen
as the positive component $\cH_+$ and the set of all odd functions
can be chosen as the negative component $\cH_-$  in a
decomposition (\ref{decomp}).
We easily see that the $\cP$-self-adjointness from
Definition \ref{PTDEF} coincides with self-adjointness in the Krein
space $(L^2(\mathbb R)\Skindef)$, see \eqref{Daissis}.

\begin{lemma}\label{la}
Let $\widetilde{A}$ be a bi-extension of a
 closed densely defined symmetric operator $A$ in $L^2(\dR)$
  and let $A^*$ be a $\PT$ symmetric operator.
 Then $\widetilde{A}$ is  $\PT$-symmetric if and only if $\widetilde{A}^*$ is $\PT$ symmetric.
\end{lemma}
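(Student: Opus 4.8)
The core of the argument is the interplay between the Krein-space adjoint (taken with respect to the parity form $\Skindef$, which by \eqref{Daissis} equals $\cP A^* \cP$ since $\cP$ is the fundamental symmetry) and the antilinear operator $\PT=\cT\cP$. I would begin by recording the algebraic identity that governs everything: for the nonlinear (antilinear) operator $\PT$ and the Hilbert-space adjoint, one has $\PT\, T^*\, \PT = (\PT\, T\, \PT)^*$ on the appropriate domains, because conjugation by $\cT$ turns the Hilbert inner product into its complex conjugate. Combined with $\PT = \cT\cP$ and $\cP^*=\cP=\cP^{-1}$, this yields the key relation
\begin{equation*}
\PT\, \bigl(\cP T^* \cP\bigr)\, \PT \;=\; \cP\, \bigl(\PT\, T\, \PT\bigr)^*\, \cP,
\end{equation*}
that is, the Krein-space adjoint of $\PT T\PT$ is $\PT(\cP T^*\cP)\PT$, with domains matching. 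I expect verifying this identity carefully — tracking that $\PT$ maps $\dom T$ onto $\dom(\PT T\PT)$ and likewise for adjoints, and that the nonlinearity of $\cT$ does not spoil the domain bookkeeping — to be the main technical obstacle.

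Next I would translate ``$\PT$ symmetric'' into this language: a (closed, densely defined) operator $B$ is $\PT$ symmetric exactly when $\PT\, B\, \PT = B$ as operators, i.e.\ $\PT$ commutes with $B$ including on domains, which is precisely the condition in Definition \ref{PTDEF}. So the hypothesis that $A^*$ is $\PT$ symmetric reads $\PT A^* \PT = A^*$, equivalently (since $\cP$ and $\cT$ both leave $\tau_q$ formally invariant when $q$ is even, but here we only need the abstract statement) $\PT \cP A \cP \PT$-type manipulations are available; more directly, taking adjoints in $\PT A^*\PT = A^*$ gives that $A$ itself is $\PT$ symmetric as well, since $\PT A \PT = \PT (A^*)^* \PT = (\PT A^* \PT)^* = (A^*)^* = A$ using the identity above with $\cP=I$ is not quite right — I should instead note directly that $B$ is $\PT$ symmetric iff $B^*$ is, by the same adjoint-of-$\PT$-conjugate identity with $\cP$ replaced by the identity.

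Now I carry out the actual equivalence. Suppose $\widetilde A$ is a bi-extension of $A$, so $A\subset\widetilde A\subset A^*$. Assume $\widetilde A$ is $\PT$ symmetric, i.e.\ $\PT\widetilde A\PT = \widetilde A$. Applying Hilbert-space adjoint and the identity $\PT T^*\PT = (\PT T\PT)^*$ (valid because $A$, $\widetilde A$, $A^*$ all have closed graphs and the defect indices are finite, so no domain subtleties beyond what Proposition \ref{ext} already handles) gives $\PT\, \widetilde A^*\, \PT = (\PT\widetilde A\PT)^* = \widetilde A^*$; hence $\widetilde A^*$ is $\PT$ symmetric. The converse is identical with $\widetilde A$ and $\widetilde A^*$ interchanged, using $(\widetilde A^*)^*=\widetilde A$ (closedness). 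The only place the hypothesis ``$A^*$ is $\PT$ symmetric'' is used is to guarantee that $\widetilde A^*$ is indeed a bi-extension to which the statement applies and that $\PT$ maps $\dom A^*$ to itself (so that the restrictions make sense), which is automatic once $A^*$ commutes with $\PT$; by Proposition \ref{ext}, $\widetilde A^*$ sits between $A$ and $A^*$, so $\PT$-invariance of the two endpoints plus the computed relation $\PT\widetilde A^*\PT=\widetilde A^*$ is exactly the assertion. I would close by remarking that this lemma is the $\PT$-analogue of Proposition \ref{ext} and will be used in Sections \ref{Three} and \ref{Four} to produce $\PT$ symmetric extensions with empty resolvent set by pairing a bi-extension with its adjoint.
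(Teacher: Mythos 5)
Your proposal is correct, and the mechanism at its heart --- moving the antiunitary operator $\PT$ across the inner product and invoking the defining property of the Hilbert-space adjoint --- is the same one the paper uses; the packaging, however, differs in a way worth recording. The paper argues element-wise: for $f\in\dom\widetilde{A}^*$ and $g\in\dom\widetilde{A}$ it runs the chain $(\widetilde{A}g,\PT f)=(g,A^*\PT f)=(g,\PT A^*f)=(g,\PT\widetilde{A}^*f)$, explicitly threading the hypothesis that $A^*$ is $\PT$ symmetric through the middle equality, and then reads off $\PT f\in\dom\widetilde{A}^*$. You instead isolate the general conjugation identity $(\PT T\PT)^*=\PT T^*\PT$, valid for any densely defined $T$ because $\PT$ is an antiunitary involution, and obtain the lemma by taking adjoints in $\PT\widetilde{A}\PT=\widetilde{A}$ together with $\widetilde{A}^{**}=\widetilde{A}$. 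This buys two things: the domain bookkeeping is done once and for all inside the identity, and it becomes visible that the hypothesis ``$A^*$ is $\PT$ symmetric'' is not actually needed for this particular equivalence (it is needed elsewhere, e.g.\ to reduce $\PT$ symmetry of a bi-extension to the single domain condition $\PT\dom\widetilde{A}=\dom\widetilde{A}$, as in the proof of Proposition \ref{1412}). Two small remarks: the verification of $(\PT T\PT)^*=\PT T^*\PT$, which you defer as the ``main technical obstacle,'' is the routine computation $(\PT u,\PT v)=(v,u)$ applied to the defining relation of the adjoint and requires neither closedness of $\widetilde{A}$ beyond what is assumed nor finiteness of the defect indices; and the self-correcting digression in your middle paragraph about whether $A$ itself is $\PT$ symmetric should simply be deleted, since your final argument never uses it.
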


\begin{proof}
Let $\widetilde{A}$ be  $\PT$-symmetric.
We will show that $\PT \dom\widetilde{A}=\dom\widetilde{A}$ implies
$\PT \dom\widetilde{A}^*=\dom\widetilde{A}^*$ and $\PT \widetilde{A}^*f
=\widetilde{A}^*\PT f$ for all $f\in \dom\widetilde{A}^*$.

Let us note that $f\in\dom\widetilde{A}^*$ if and only if $(\widetilde{A}g,f)=(g,A^*f)$ for all $g\in\dom\widetilde{A}$. For   $g\in\dom\widetilde{A}$ and $f\in\dom\widetilde{A}^*$
the $\PT$ symmetry of $A^*$ implies
\begin{eqnarray*}
(\widetilde{A} g, \PT f)  =(g, A^* \PT f)=(g, \PT A^* f)= (g, \PT \widetilde{A}^* f).
\end{eqnarray*}
From this we conclude $\PT f \in \dom\widetilde{A}^*$ and $\PT \widetilde{A}^* f =
A^* \PT f = \widetilde{A}^*\PT f$. Hence, $\PT\dom\widetilde{A}^* \subset \dom\widetilde{A}^*$
and from $(\PT)^2=I$ we derive $\PT \dom\widetilde{A}^*=\dom\widetilde{A}^*$
and the operator $\widetilde{A}^*$ is $\PT$ symmetric.

If $\widetilde{A}^*$ is $\PT$ symmetric, then, by the first part of the proof,
${\widetilde{A}}^{**} =\widetilde{A}$ is also $\PT$ symmetric.
\end{proof}

\begin{corollary}
  Let $\widetilde{A}$ be a bi-extension of $A$ and let $A^*$ be a $\PT$ symmetric operator. Then $\widetilde{A}$ is  $\PT$-symmetric if and only if $\widetilde{A}^+$ is $\PT$ symmetric.
\end{corollary}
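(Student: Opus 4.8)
The plan is to deduce the corollary from Lemma~\ref{la}. Recall that in the Krein space $(L^2(\dR),\Skindef)$ the parity operator $\cP$ is the fundamental symmetry, so by \eqref{Daissis} the Krein space adjoint of the bi-extension $\widetilde A$ is $\widetilde A^+=\cP\widetilde A^*\cP$. Lemma~\ref{la} already tells us that, under the standing hypothesis that $A^*$ is $\PT$ symmetric, $\widetilde A$ is $\PT$-symmetric if and only if $\widetilde A^*$ is $\PT$-symmetric. Hence it remains only to show that $\widetilde A^*$ is $\PT$ symmetric if and only if $\widetilde A^+=\cP\widetilde A^*\cP$ is $\PT$ symmetric; in other words, that $\PT$-symmetry is preserved under conjugation by $\cP$.

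First I would record the elementary identities $\cP^2=I$ and $\cP\,\PT=\PT\,\cP$; the latter is immediate from \eqref{PTDef}, since $(\cP\cT f)(x)=\overline{f(-x)}=(\cT\cP f)(x)$ for $f\in L^2(\dR)$. Because $\cP$ is a bounded bijection on $L^2(\dR)$ with $\cP^{-1}=\cP$, for any densely defined operator $B$ one has $\dom(\cP B\cP)=\cP(\dom B)$, and $\cP B\cP$ is again closed and densely defined whenever $B$ is. Then I would check directly: if $B$ is $\PT$-symmetric, then, using $\PT\cP=\cP\,\PT$, $\PT\,\dom(\cP B\cP)=\cP\,\PT(\dom B)=\cP(\dom B)=\dom(\cP B\cP)$, and for $f\in\dom(\cP B\cP)$, so that $\cP f\in\dom B$,
\[
\PT(\cP B\cP)f=\cP\,\PT B(\cP f)=\cP B\,\PT(\cP f)=\cP B\cP\,\PT f=(\cP B\cP)\,\PT f,
\]
where the second equality uses the $\PT$-symmetry of $B$ applied to the vector $\cP f\in\dom B$. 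Thus $\cP B\cP$ is $\PT$-symmetric, and the converse follows by replacing $B$ with $\cP B\cP$ and using $\cP(\cP B\cP)\cP=\cP^2B\cP^2=B$.

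Applying this observation with $B=\widetilde A^*$ yields that $\widetilde A^*$ is $\PT$-symmetric if and only if $\widetilde A^+=\cP\widetilde A^*\cP$ is $\PT$-symmetric, and combining with Lemma~\ref{la} completes the proof. I do not anticipate a genuine obstacle here: the only point requiring a little care is the domain bookkeeping in the conjugation step — one must remember that $\cT$ is antilinear, so the manipulations have to be routed through the bounded linear involution $\cP$ rather than through $\PT$ directly — but once $\cP\,\PT=\PT\,\cP$ and $\cP^2=I$ are available the verification is purely formal.
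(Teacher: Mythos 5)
Your proof is correct and follows essentially the same route as the paper: reduce to $\widetilde{A}^*$ via Lemma~\ref{la}, then transfer $\PT$-symmetry to $\widetilde{A}^+=\cP\widetilde{A}^*\cP$ using $\cP\,\PT=\PT\,\cP$, $\cP^2=I$ and $\dom\widetilde{A}^+=\cP\dom\widetilde{A}^*$. The only (harmless) difference is that you verify the intertwining relation $\PT(\cP B\cP)f=(\cP B\cP)\PT f$ explicitly for a general operator $B$, whereas the paper checks only the domain invariance and treats that as equivalent to $\PT$-symmetry, implicitly using that $\widetilde{A}^+$ is a restriction of the $\PT$-symmetric operator $A^*$.
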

\begin{proof}
 Assume $\widetilde{A}$ is a $\PT$ symmetry. Then  Lemma \ref{la} implies $\PT\dom\widetilde{A}^*=\dom\widetilde{A}^*$.  Since $\PT=\TP$ and $\dom\widetilde{A}^+=\Pt\dom\widetilde{A}^*$ we have
\begin{eqnarray*}
\begin{split}
 \PT \dom\widetilde{A}^+&=\PT\Pt \dom\widetilde{A}^*=\Pt \PT \dom\widetilde{A}^*=\Pt \dom\widetilde{A}^*\\
 &=\dom\widetilde{A}^+,
 \end{split}
 \end{eqnarray*}
what is equivalent to $\PT$ symmetry of $\widetilde{A}^+$.

If $\widetilde{A}^+$ is $\PT$ symmetric, then, by the first part of the proof,
${\widetilde{A}}^{++} =\widetilde{A}$ is also $\PT$ symmetric.
\end{proof}

In the following, we
consider the differential expression $\tau_{q}$.
We assume that $q$ is a real valued function from
 $L_{loc}^1(\mathbb R)$ which is even, that is,
 $$
 q(x)=q(-x)\quad  \mbox{for all }
 x\in \mathbb R,
 $$
  such that the differential equation
\begin{equation}\label{FoundationNew2}
\tau_{q} (y) (x) := - y^{\prime \prime}(x) - q(x)
 y(x), \quad x\in \mathbb R.
\end{equation}
is in limit  circle case at $+\infty$ and
$-\infty$.

From \cite[Remark 7.4.2 (2)]{Z05}\footnote{In the formulation of
\cite[Example 7.4.1]{Z05} and, hence, in \cite[Remark 7.4.2
(2)]{Z05}  a minus sign is missing.} we see that, e.g., $\tau_{q}$
is in limit  circle case at $+\infty$ and
$-\infty$ for all $\delta >0$ and for
$$
q(t) = t^{2+\delta}.
$$
 Hence, the
 differential expression $\tau_{4n+2}$ in \eqref{EBurg2} is
in the limit circle case at $\infty$ and at $-\infty$.

Recall that $\tau_{q}$ is called in limit circle at $\infty$ (at
$-\infty$) if all solutions of the equation $\tau_{q}(y) -
\lambda y=0$, $\lambda \in \mathbb C$, are in $L^2((a,\infty))$
(resp.\ $L^2((-\infty,a))$) for some $a\in \mathbb R$, cf.\
e.g.\ \cite[Chapter 7]{Z05}, \cite[Section 5]{W} or \cite[Section 13.3]{W03}.

With the differential expression $\tau_{q}$ we will associate an operator $A_{\mmm}$
defined on the maximal domain $\cD_{\mmm}$, i.e.,
$$
\cD_{\mmm}:= \{y \in L^2(\dR) : y,y^\prime \in AC_{loc}(\dR),
 \tau_{q}y \in L^2(\dR)\},
$$
via
$$
\dom A_\mmm := \cD_{\mmm}, \quad  A_\mmm y := \tau_{q}(y) \quad \mbox{for } y
\in \cD_{\mmm}.
$$
Here and in the following $AC_{loc}(\dR)$ denotes the space of all
complex valued functions which are
 absolutely continuous on all compact
 subsets of $\dR$. As usual, with \eqref{FoundationNew2}, there is also connected
 the so-called {\it pre-minimal} operator $A_0$ defined on the  domain
 $$
\cD_{0}:= \{y \in  \cD_{\mmm} : y \mbox{ has compact support} \}
$$
and defined via
$$
\dom A_0 := \cD_{0}, \quad  A_0 y := \tau_{q}(y) \quad \mbox{for } y
\in \cD_{0}.
$$
The operator $A_0$ is symmetric and not closed but it is closable. Its closure
$\overline{A_0}$ is called the {\it minimal} operator and we denote
it by $A$,
$$
A:= \overline{A_0}.
$$
It turns out that the maximal operator is precisely the adjoint of the minimal
operator, see, e.g., \cite[Theorem 3.9]{W} or \cite[Lemma 10.3.1]{Z05},
$$
A^* =(\overline{A_0})^*= A_\mmm .
$$
Obviously, by the definition of the maximal and the minimal operator the
following lemma holds true.
\begin{lemma}\label{Jetztdoch}
The operators $A$ and $A^*$ are $\PT$-symmetric.
\end{lemma}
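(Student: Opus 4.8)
The statement to prove is Lemma \ref{Jetztdoch}: the minimal operator $A$ and the maximal operator $A^* = A_\mmm$ are $\PT$-symmetric. The plan is to verify the two conditions of Definition \ref{PTDEF} directly from the explicit description of the domains $\cD_0$, $\cD_\mmm$ and the action of the differential expression $\tau_q$, using only the evenness of $q$.

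First I would treat $A^* = A_\mmm$. Fix $y \in \cD_\mmm$ and set $z := \PT y = \cT\cP y$, i.e.\ $z(x) = \overline{y(-x)}$. The goal is to show $z \in \cD_\mmm$ and $\tau_q z = \PT(\tau_q y)$. For the domain condition: the maps $\cP$ and $\cT$ each preserve $L^2(\dR)$, local absolute continuity of $y$ and $y'$ (composition with the smooth bijection $x \mapsto -x$ and complex conjugation both preserve $AC_{loc}(\dR)$), so $z, z' \in AC_{loc}(\dR)$; it remains to check $\tau_q z \in L^2(\dR)$, which follows from the pointwise identity below. For the action: since $q$ is real-valued and even, a short computation gives, for a.e.\ $x$,
\[
\tau_q z (x) = -z''(x) - q(x) z(x) = -\overline{y''(-x)} - q(x)\overline{y(-x)} = \overline{\bigl(-y''(-x) - q(-x) y(-x)\bigr)} = \overline{(\tau_q y)(-x)} = \bigl(\PT (\tau_q y)\bigr)(x),
\]
using $z''(x) = \overline{y''(-x)}$ (two applications of the chain rule kill the two sign changes) and $q(x) = q(-x)$. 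This simultaneously shows $\tau_q z \in L^2(\dR)$, hence $z \in \cD_\mmm$, and $A_\mmm \PT y = \PT A_\mmm y$. Since $(\PT)^2 = I$ and $\PT$ maps $\cD_\mmm$ into itself, it maps $\cD_\mmm$ onto itself, so $\PT \dom A^* = \dom A^*$, and $A^*$ is $\PT$-symmetric.

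Next I would handle $A = \overline{A_0}$. The same pointwise computation applies to any $y \in \cD_0$, and $\PT$ clearly preserves the property of having compact support (reflection sends a compact set to a compact set). Hence $\PT \cD_0 = \cD_0$ and $\PT A_0 y = A_0 \PT y$ for $y \in \cD_0$, i.e.\ $A_0$ is $\PT$-symmetric. To pass to the closure $A = \overline{A_0}$: the operator $\cP$ is bounded on $L^2(\dR)$ and $\cT$ is an antilinear isometry, so $\PT = \cT\cP$ is an (antilinear) isometry of $L^2(\dR)$; therefore $\PT \times \PT$ is a homeomorphism of $L^2(\dR) \times L^2(\dR)$ that maps the graph $\Gamma_{A_0}$ onto itself, hence maps its closure $\Gamma_{\overline{A_0}} = \Gamma_A$ onto itself. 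Unpacking this graph statement gives exactly $\PT \dom A = \dom A$ and $\PT A f = A \PT f$ for $f \in \dom A$, so $A$ is $\PT$-symmetric.

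The only mild subtlety — and the one step worth spelling out — is the second-derivative identity $z''(x) = \overline{y''(-x)}$ for $z = \PT y$, where one must be careful that the two sign changes from differentiating $x \mapsto -x$ twice cancel, so that no spurious minus sign appears in $-z''$; everything else (boundedness of $\cP$, antilinear isometry property of $\cT$, stability of $AC_{loc}$ and of compact support under reflection and conjugation) is routine. Indeed, as the excerpt already remarks, this lemma is "obvious by the definition of the maximal and the minimal operator," so the write-up can be kept brief.
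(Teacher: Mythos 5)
Your proof is correct and is exactly the argument the paper has in mind: the paper offers no written proof, dismissing the lemma as obvious from the definitions of the minimal and maximal operators, and your write-up simply supplies the routine details (the pointwise commutation of $\tau_q$ with $\PT$ via evenness of $q$, preservation of $\cD_\mmm$ and $\cD_0$, and passage to the closure via the graph). No gaps.
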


Moreover, by \cite[Theorem 10.4.1]{Z05} or \cite[Theorem 5.7]{W}, we obtain a
statement on the deficiency indices of $A$.

\begin{lemma} \label{ArnstadtSued}
The closed symmetric operator $A$ has deficiency indices $(2,2)$, i.e.\
$\dim \ker (A^*-i)= 2= \dim \ker (A^*+i)$. In particular, we have
$$
\dim\ \dom A^*/\dom{A}=4.
$$
\end{lemma}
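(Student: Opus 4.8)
The plan is to invoke the general Sturm--Liouville theory for an expression which is in the limit circle case at both endpoints $+\infty$ and $-\infty$. The key point is that for a Sturm--Liouville operator the deficiency index contributed by each endpoint equals the number of linearly independent $L^2$-solutions of $\tau_q(y)=\lambda y$ \emph{near that endpoint} (for one, hence every, non-real $\lambda$), which in the limit circle case is maximal, i.e.\ equals $1$ for a second order expression. Since $\tau_q$ is assumed to be in the limit circle case at $+\infty$ and at $-\infty$, each of the two endpoints contributes $1$ to each deficiency number. First I would recall that $\ker(A^*-\lambda)$ for $\IM\lambda\neq 0$ consists of the $L^2(\dR)$-solutions of $\tau_q(y)=\lambda y$; the solution space of this ODE on all of $\dR$ is two-dimensional, and a solution lies in $L^2(\dR)$ if and only if it lies in $L^2$ near $+\infty$ \emph{and} near $-\infty$. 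In the limit circle case \emph{every} solution is square integrable near each endpoint separately, hence every solution of the ODE on $\dR$ lies in $L^2(\dR)$, so $\dim\ker(A^*-\lambda)=2$ for every non-real $\lambda$.

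Concretely, I would cite \cite[Theorem 10.4.1]{Z05} or \cite[Theorem 5.7]{W}, which state exactly that under the stated limit circle assumptions at both endpoints the minimal operator $A$ has deficiency indices $(2,2)$, that is $\dim\ker(A^*-i)=\dim\ker(A^*+i)=2$. Since $q$ is real valued, the minimal operator $A$ is symmetric with equal deficiency numbers (as is also clear from the $\PT$-symmetry established in Lemma \ref{Jetztdoch} together with $\cP$-symmetry, or from complex conjugation), so it suffices to record the common value $2$.

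For the last assertion I would use the standard von Neumann decomposition $\dom A^* = \dom A \dotplus \ker(A^*-i) \dotplus \ker(A^*+i)$ (a direct, though not orthogonal, sum), from which
$$
\dim\,\dom A^*/\dom A = \dim\ker(A^*-i) + \dim\ker(A^*+i) = 2+2 = 4.
$$
Alternatively one may simply quote that for a closed symmetric operator with deficiency indices $(m,n)$ one has $\dim\,\dom A^*/\dom A = m+n$, which here gives $4$.

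I do not expect any genuine obstacle: the whole statement is a direct application of the cited Sturm--Liouville results once the limit circle hypothesis at both endpoints is in force, and the dimension count is the elementary von Neumann formula. The only point requiring a word of care is that the limit circle case is assumed at $+\infty$ \emph{and} at $-\infty$ separately, so that the contributions add; this is precisely what the references \cite[Theorem 10.4.1]{Z05} and \cite[Theorem 5.7]{W} are tailored to.
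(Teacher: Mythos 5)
Your proposal is correct and follows essentially the same route as the paper, which establishes the lemma simply by citing \cite[Theorem 10.4.1]{Z05} and \cite[Theorem 5.7]{W} for the limit circle case at both endpoints; your additional explanation of why every solution is square integrable and the von Neumann count $\dim\dom A^*/\dom A=m+n=4$ is the standard reasoning behind those references.
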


Hence, by Lemma \ref{ArnstadtSued}, bi-extensions $\widetilde{A}$ of $A$
are either trivial, that is, they equal $A$ or $A^*$ or they fell into one
of the following cases
\begin{itemize}
  \item  $\dim \dom \widetilde{A}/\dom{A}=1$; this case is discussed in Section \ref{Four},
  \item  $\dim \dom\widetilde{A}/\dom{A}=2$; this case is discussed in Section \ref{Two},
  \item  $\dim \dom\widetilde{A}/\dom{A}=3$; this case is discussed in Section \ref{Three}.
\end{itemize}

It is our aim to describe all bi-extensions $\widetilde{A}$ of $A$.
For this we define for functions
$g,f \in AC_{loc}(\dR)$ with continuous derivative, the expression
$[f,g]_x$ for $x \in \dR$ via
$$
[f,g]_x := \overline{f(x)}g^\prime(x) - \overline{f^\prime(x)}g(x).
$$
Note that if $f$ and $g$ are real valued, then $[f,g]_x$ is the
Wronskian $W(f,g)$. It is well known (e.g.\ \cite[Lemma 10.2.3]{Z05},
\cite[Theorem 3.10]{W})
that the limit of $[g,f]_x$ as
$x\to \infty$ and $x\to -\infty$ exists for $f,g \in \cD_{\mmm}$,
  We set
$$
[f,g]_\infty := \lim_{x \to \infty}[f,g]_x, \quad
[f,g]_{-\infty} := \lim_{x \to -\infty}[f,g]_x
$$
and
$$
[f,g]_{-\infty}^\infty=[f,g]_{\infty}-[f,g]_{-\infty}.
$$
By \cite[Section 10.4.4]{Z05}, \cite[Theorem 3.10]{W}, we have for $f,g \in \cD_{\mmm}$
\begin{equation}\label{Wolfsburg}
(g,A^*f)-(A^*g,f) = [f,g]_{-\infty}^\infty.
\end{equation}

The following Lemmas \ref{odd} and \ref{alpha_1} are from \cite{AT10}, where
they are proved for the differential expression $\tau_{4n+2}$. However, it is easy to see that  the proofs in \cite{AT10} also applies to the differential expression $\tau_{q}$ due to the assumption that $q$ is an even function.
\begin{lemma}\label{odd}
 There exist real valued solutions $w_1,w_2 \in \cD_{\mmm}$ of
the equation
$$
\tau_{q}(y) =0
$$
such that $w_1$ is an odd and $w_2$ an even function with
$$
[w_1,w_2]_{-\infty}= [w_1,w_2]_\infty =1
$$
and
$$
[w_1,w_1]_{-\infty}= [w_1,w_1]_\infty =[w_2,w_2]_{-\infty}=
[w_2,w_2]_\infty =0.
$$
\end{lemma}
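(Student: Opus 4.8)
The plan is to exploit the limit-circle hypothesis together with the evenness of $q$. First I would fix any solution $y_0$ of $\tau_q(y)=0$ that is not identically zero on a neighbourhood of $+\infty$; by the limit-circle assumption at $+\infty$ all solutions of $\tau_q(y)=0$ lie in $L^2((a,\infty))$, and symmetrically at $-\infty$, so every solution of $\tau_q(y)=0$ is globally in $L^2(\dR)$ and, being a solution, automatically has $y,y'\in AC_{loc}(\dR)$ with $\tau_q(y)=0\in L^2(\dR)$; hence every such solution lies in $\cD_{\mmm}$. Thus it suffices to produce a fundamental system with the stated parity and Wronskian normalisations. Since $q$ is even, if $y(x)$ solves $\tau_q(y)=0$ then so does $y(-x)$; consequently the maps $y(x)\mapsto \tfrac12(y(x)-y(-x))$ and $y(x)\mapsto\tfrac12(y(x)+y(-x))$ send solutions to solutions, and the solution space (two-dimensional) splits into the odd and even solutions. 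I would pick a nonzero odd solution $\widetilde w_1$ and a nonzero even solution $\widetilde w_2$; these are linearly independent (one is odd, the other even, neither is zero), so their Wronskian $W(\widetilde w_1,\widetilde w_2)$ is a nonzero constant $c$ on all of $\dR$ (constancy because $q$ has no first-order term: $W'\equiv 0$ for solutions of $-y''-qy=0$). Rescaling, say $w_1:=\widetilde w_1$ and $w_2:=c^{-1}\widetilde w_2$, gives $W(w_1,w_2)\equiv 1$, hence $[w_1,w_2]_x\equiv 1$ for real-valued $w_1,w_2$ (the bracket $[f,g]_x=\overline{f(x)}g'(x)-\overline{f'(x)}g(x)$ coincides with the Wronskian in the real-valued case), and in particular $[w_1,w_2]_{-\infty}=[w_1,w_2]_{\infty}=1$.

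For the remaining identities, note that for any real-valued $f\in\cD_{\mmm}$ the bracket $[f,f]_x=f(x)f'(x)-f'(x)f(x)=0$ for every $x$, so trivially $[w_1,w_1]_x\equiv 0$ and $[w_2,w_2]_x\equiv 0$; taking limits as $x\to\pm\infty$ gives $[w_1,w_1]_{\pm\infty}=[w_2,w_2]_{\pm\infty}=0$. (Alternatively one invokes \eqref{Wolfsburg} with $f=g=w_j$, which gives $0=[w_j,w_j]_{-\infty}^{\infty}$, but the pointwise vanishing is cleaner and simultaneously pins down both endpoints.) This completes the construction: $w_1,w_2\in\cD_{\mmm}$, $w_1$ odd, $w_2$ even, and the four stated bracket values hold.

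I do not expect a serious obstacle here; the lemma is essentially bookkeeping. The one point that deserves a sentence of care is the existence of the endpoint limits $[f,g]_{\pm\infty}$, but this is exactly the cited fact (\cite[Lemma 10.2.3]{Z05}, \cite[Theorem 3.10]{W}) valid for all $f,g\in\cD_{\mmm}$, and in our case the bracket is even constant in $x$ along solutions, so the limits are immediate. The only thing to double-check is that a nonzero odd solution and a nonzero even solution actually exist and are independent — this follows because the decomposition of the two-dimensional solution space into even and odd parts is a direct sum of one-dimensional pieces (neither can be $\{0\}$, else the solution space would be purely even or purely odd, contradicting dimension two once one exhibits, e.g., the solutions with initial data $(y(0),y'(0))=(0,1)$ and $(1,0)$, which are odd and even respectively by uniqueness and the symmetry $x\mapsto -x$). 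As the excerpt notes, this argument is precisely the one in \cite{AT10} for $\tau_{4n+2}$, and it uses $q$ only through its evenness.
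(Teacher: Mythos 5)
Your proof is correct: the parity decomposition of the two-dimensional solution space (via the solutions with initial data $(0,1)$ and $(1,0)$ at $x=0$), the constancy of the Wronskian, the real rescaling, and the observation that $[f,f]_x\equiv 0$ for real $f$ together give exactly the stated normalisations, and the limit-circle hypothesis places all solutions in $\cD_{\mmm}$. The paper itself does not reprove this lemma but imports it from \cite{AT10}, noting only that the argument uses $q$ solely through its evenness; your reconstruction is precisely that standard argument.
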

  For simplicity we set for $f \in \cD_{\mmm}$
$$
\begin{array}{ll}
 \alpha_1(f):= [w_1,f]_{-\infty},\quad & \quad
 \alpha_2(f):= [w_2,f]_{-\infty}, \\
\beta_1(f):= [w_1,f]_{\infty}, \quad & \quad \beta_2(f):=
[w_2,f]_{\infty}.
\end{array}
$$
We obtain (see, e.g.\ \cite[Satz 13.21]{W03})
\begin{equation}\label{Sevilla}
\dom A = \{ f\in  \cD_{\mmm} : \alpha_1(f) =\alpha_2(f)=
\beta_1(f)=\beta_2(f)=0 \}.
\end{equation}

\begin{lemma}\label{HornoEnSeville}
To each vector $z=(z_1,z_2,z_3,z,_4)^\top$ in $\mathbb C^4$ there exists a
function $f_z$ from the domain $ \cD_{\mmm}$ of the maximal operator $A^*$
with
\begin{eqnarray*}
\begin{split} \alpha_1(f_z)&= z_1,\\
\beta_1(f_z)&= z_3,
\end{split}
\quad & \quad
\begin{split}
\alpha_2(f_z)&= z_2 , \\
 \beta_2(f_z)&=z_4.
\end{split}
\end{eqnarray*}
\end{lemma}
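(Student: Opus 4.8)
The plan is to produce the function $f_z$ by superposing the two solutions $w_1,w_2$ of $\tau_q(y)=0$ from Lemma \ref{odd} with suitable cutoffs near $+\infty$ and near $-\infty$, so that $f_z$ agrees with a specified linear combination of $w_1,w_2$ near each end and vanishes in the middle, and then reading off the four boundary functionals $\alpha_1,\alpha_2,\beta_1,\beta_2$ from the normalization in Lemma \ref{odd}. Concretely, fix a smooth cutoff $\chi_+$ with $\chi_+\equiv 1$ on $[a,\infty)$ and $\chi_+\equiv 0$ on $(-\infty,a-1]$, and $\chi_-$ with $\chi_-\equiv 1$ on $(-\infty,-a]$ and $\chi_-\equiv 0$ on $[-a+1,\infty)$, where $a>0$ is chosen large enough that the supports of $\chi_+$ and $\chi_-$ are disjoint. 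Then set
$$
f_z := \chi_-\,(z_1 w_1 + z_2 w_2) + \chi_+\,(z_3 w_1 + z_4 w_2).
$$

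Next I would check that $f_z\in\cD_{\mmm}$. Since $w_1,w_2\in\cD_{\mmm}$, both $w_i$ and $w_i'$ lie in $AC_{loc}(\dR)$, hence so do $f_z$ and $f_z'$ (the cutoffs are smooth). Moreover $f_z\in L^2(\dR)$ because it has compact-support modifications of $L^2$ functions, and $\tau_q f_z\in L^2(\dR)$: on the two tails $f_z$ equals an exact solution of $\tau_q(y)=0$ so $\tau_q f_z=0$ there, while on the bounded middle region $\tau_q f_z$ is a continuous-times-$L^1_{loc}$ combination supported on a compact set, hence in $L^2$. Thus $f_z\in\cD_{\mmm}=\dom A^*$.

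Finally I would evaluate the boundary forms. Near $+\infty$ we have $f_z\equiv z_3 w_1 + z_4 w_2$, so by bilinearity of $[\,\cdot\,,\cdot\,]_x$ in the second slot and Lemma \ref{odd},
$$
\beta_1(f_z)=[w_1,f_z]_\infty = z_3[w_1,w_1]_\infty + z_4[w_1,w_2]_\infty = z_4\cdot 0 \text{ — wait, } = z_3\cdot 0 + z_4\cdot 1 = z_4?
$$
Here one must be careful with which Wronskian is which: $[w_1,w_1]_\infty=0$ and $[w_1,w_2]_\infty=1$ give $\beta_1(f_z)=z_4[w_1,w_2]_\infty=z_4$, but the statement wants $\beta_1(f_z)=z_3$; so in fact one should write $f_z:=\chi_-(z_2 w_1 - z_1 w_2)+\chi_+(z_4 w_1 - z_3 w_2)$ (or simply relabel), using $[w_1,w_1]=[w_2,w_2]=0$, $[w_1,w_2]=1$ and antisymmetry $[w_2,w_1]=-1$, to get $\alpha_1(f_z)=[w_1,f_z]_{-\infty}=z_2 w_1\text{-coeff}\cdot 0 - z_1[w_1,w_2]_{-\infty}\cdot(-1)=\dots$; the point is that the $2\times 2$ linear system relating the tail coefficients to $(\alpha_1,\alpha_2)$ and to $(\beta_1,\beta_2)$ has determinant $\pm 1$ by Lemma \ref{odd}, hence is invertible, and a fixed relabeling of coefficients achieves exactly $\alpha_1(f_z)=z_1$, $\alpha_2(f_z)=z_2$, $\beta_1(f_z)=z_3$, $\beta_2(f_z)=z_4$. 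The main (and only real) obstacle is bookkeeping: getting the sign conventions and the pairing between the four coordinates $z_1,\dots,z_4$ and the four functionals correct, which amounts to inverting two $2\times 2$ matrices with determinant $\pm 1$ coming from the normalization in Lemma \ref{odd}.
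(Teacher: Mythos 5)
Your proposal is correct and takes essentially the same approach as the paper: the paper likewise patches $w_1,w_2$ to zero across a compact middle interval and reads off the four functionals from the Wronskian normalization of Lemma \ref{odd}, writing the explicit combination $f_z=-z_4u_1+z_3u_2-z_2v_1+z_1v_2$ (where $u_j$, resp.\ $v_j$, equals $w_j$ near $+\infty$, resp.\ $-\infty$, and vanishes near the other end), which settles the sign bookkeeping you left open. Your fallback argument --- that the two $2\times2$ Wronskian matrices have determinant $\pm1$ and are hence invertible --- is a valid substitute for exhibiting these coefficients explicitly.
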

\begin{proof}
We consider functions $u_1,u_2,v_1,v_2$ from $\cD_{\mmm}$ such that
$u_j$, $j=1,2$ equal $w_j$ on the interval $(1,\infty)$, equal zero
on the interval $(-\infty,-1)$ and the functions $v_j$, $j=1,2$
equal $w_j$ on the interval $(-\infty,-1)$ and equal zero on the
interval $(1,\infty)$. Then
$$
f_z:= -z_4 u_1 + z_3 u_2 - z_2 v_1 + z_1
v_2.
$$
is the function with the desired properties.
\end{proof}

The next lemma\footnote{Here we mention that in the second part of the statement of \cite[Lemma 4]{AT10} $\cT$ should be replaced by $\PT$.} describes the behaviour of the above numbers under
the operators $\cP$ and $\cT$.

\begin{lemma}\label{alpha_1}
For $f \in \cD_{\mmm}$ we have
\begin{eqnarray}
\begin{split}
\begin{split} \alpha_1(\cP f)&= \beta_1(f),\\
\beta_1(\cP f)&= \alpha_1(f),
\end{split}
\quad & \quad
\begin{split}
\alpha_2(\cP f)&= -\beta_2(f) , \\
 \beta_2(\cP f)&=
-\alpha_2(f),
\end{split}
\end{split} \label{p}\\[0.5cm]
\begin{split}
 \alpha_1(\PT f)= \overline{\beta_1(f)},\quad & \quad
\alpha_2(\PT f)= -\overline{\beta_2(f)} , \\
\beta_1(\PT f)= \overline{\alpha_1(f)}, \quad & \quad
\beta_2(\PT
f)= -\overline{\alpha_2(f)}.
\end{split}\label{pt}
\end{eqnarray}
\end{lemma}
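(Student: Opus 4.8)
The plan is to compute each Wronskian-type quantity $\alpha_j(\cP f)$, $\beta_j(\cP f)$, $\alpha_j(\PT f)$, $\beta_j(\PT f)$ directly from the definition $[g,h]_x = \overline{g(x)}h'(x) - \overline{g'(x)}h(x)$ together with the explicit parity of the functions $w_1, w_2$ from Lemma \ref{odd}. First I would record the elementary facts that, for the parity operator, $(\cP f)(x) = f(-x)$ gives $(\cP f)'(x) = -f'(-x)$, and for $\PT$ one has $(\PT f)(x) = \overline{f(-x)}$ and $(\PT f)'(x) = -\overline{f'(-x)}$. Since $w_1$ is odd, $w_1(-x) = -w_1(x)$ and $w_1'(-x) = w_1'(x)$; since $w_2$ is even, $w_2(-x) = w_2(x)$ and $w_2'(-x) = -w_2'(x)$. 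These sign patterns are the whole engine of the proof.

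Next I would carry out the substitution $x \mapsto -x$ in the limit defining each quantity. For the parity statement \eqref{p}: from the definition,
\begin{equation*}
[w_1,\cP f]_x = \overline{w_1(x)}\,(\cP f)'(x) - \overline{w_1'(x)}\,(\cP f)(x) = -\overline{w_1(x)}\,f'(-x) - \overline{w_1'(x)}\,f(-x).
\end{equation*}
Using $w_1(x) = -w_1(-x)$ and $w_1'(x) = w_1'(-x)$ (and that $w_1$ is real) this equals $\overline{w_1(-x)}\,f'(-x) - \overline{w_1'(-x)}\,f(-x) = [w_1,f]_{-x}$, so letting $x \to -\infty$ gives $\alpha_1(\cP f) = \lim_{x\to-\infty}[w_1,f]_{-x} = [w_1,f]_{\infty} = \beta_1(f)$, and letting $x \to +\infty$ gives $\beta_1(\cP f) = \alpha_1(f)$. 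The same computation with $w_2$ even introduces one extra minus sign in both terms of $[w_2,\cP f]_x$, yielding $[w_2,\cP f]_x = -[w_2,f]_{-x}$ and hence $\alpha_2(\cP f) = -\beta_2(f)$ and $\beta_2(\cP f) = -\alpha_2(f)$. This establishes \eqref{p}.

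For \eqref{pt} I would proceed identically but track the extra complex conjugation coming from $\cT$. Writing $(\PT f)(x) = \overline{f(-x)}$, $(\PT f)'(x) = -\overline{f'(-x)}$, and using that $w_1, w_2$ are real valued, one gets for instance
\begin{equation*}
[w_1,\PT f]_x = -w_1(x)\,\overline{f'(-x)} + w_1'(x)\,\overline{f(-x)} = \overline{\,\overline{w_1(-x)}\,f'(-x) - \overline{w_1'(-x)}\,f(-x)\,} = \overline{[w_1,f]_{-x}},
\end{equation*}
where I used $w_1(x) = -w_1(-x)$, $w_1'(x) = w_1'(-x)$, and that a real function is its own conjugate. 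Taking $x \to \mp\infty$ gives $\alpha_1(\PT f) = \overline{\beta_1(f)}$ and $\beta_1(\PT f) = \overline{\alpha_1(f)}$; the even function $w_2$ contributes the extra sign, so $\alpha_2(\PT f) = -\overline{\beta_2(f)}$ and $\beta_2(\PT f) = -\overline{\alpha_2(f)}$. (Alternatively one can deduce \eqref{pt} from \eqref{p} by writing $\PT = \cT\cP$ and checking how $\cT$ acts on the brackets, namely $[w_j, \cT f]_x = \overline{[w_j, f]_x}$ for real $w_j$; this is a shorter route and worth mentioning.)

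The only genuine point requiring care — the "main obstacle," though it is mild — is the bookkeeping of signs and conjugates: one must keep straight that differentiation turns the parity of $w_j$ from even/odd into odd/even, that the chain rule applied to $f(-x)$ contributes a $-1$, and that $\cT$ commutes past the real functions $w_j$ but conjugates $f$. A clean way to avoid errors is to verify consistency by checking $(\PT)^2 = I$ and $\cP^2 = I$ against the formulas: applying \eqref{p} twice must return $\alpha_j, \beta_j$, and likewise for \eqref{pt} (where the two conjugations cancel and the two minus signs in the $j=2$ rows cancel), which they do. I would include one such consistency remark at the end rather than belabour all eight computations in full.
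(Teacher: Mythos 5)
Your computation is correct and is exactly the argument the paper intends: the paper gives no proof of Lemma \ref{alpha_1} itself but defers to \cite{AT10}, remarking only that the evenness of $q$ (which yields the real odd/even solutions $w_1,w_2$ of Lemma \ref{odd}) is what makes the argument carry over --- precisely the sign patterns you isolate and exploit. One trivial slip: in your displayed chain for $[w_1,\PT f]_x$ the middle expression should read $-w_1(x)\,\overline{f'(-x)} - w_1'(x)\,\overline{f(-x)}$ (minus, not plus, in the second term); since both the left-hand side and the final expression $\overline{[w_1,f]_{-x}}$ are correct, this is only a typo and the conclusion is unaffected.
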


\section{$2$-dimensional extensions} \label{Two}
First we will consider $2$-dimensional extensions $\widetilde{A}$.
Their domain is given by
\begin{equation}\label{2dim}
\dom\widetilde{A}=\left\{f\in \cD_\mmm \mid \begin{bmatrix}
  a_1&b_1\\
  c_1&d_1
\end{bmatrix}
\begin{pmatrix}
 \alpha_1(f)\\ \alpha_2(f)
\end{pmatrix}  =\begin{bmatrix}
  e&f\\
  g&h
\end{bmatrix}
\begin{pmatrix}
 \beta_1(f)\\ \beta_2(f)
\end{pmatrix}  \right\}
\end{equation}
with
$$
\rm rank\ \begin{bmatrix}
  a_1&b_1&e&f\\
  c_1&d_1&g&h
\end{bmatrix}=2.
$$
There are 3 possibilities:
\begin{enumerate}
 \item[$(i)$] The matrix $\begin{bmatrix}
  a_1&b_1\\
  c_1&d_1
\end{bmatrix}$ is nondegenerate. Then we can express
$\alpha_1(f),\ \alpha_2(f)$ via $\beta_1(f),\ \beta_2(f)$:
\begin{equation}\label{ab}
\begin{pmatrix}
 \alpha_1(f)\\ \alpha_2(f)
\end{pmatrix}  =\begin{bmatrix}
  a&b\\
  c&d
\end{bmatrix}
\begin{pmatrix}
 \beta_1(f)\\ \beta_2(f)
\end{pmatrix},
\end{equation}
where
$$
\begin{bmatrix}
  a&b\\
  c&d
\end{bmatrix}=\begin{bmatrix}
  a_1&b_1\\
  c_1&d_1
\end{bmatrix}^{-1}\, \begin{bmatrix}
  e&f\\
  g&h
\end{bmatrix}.
$$
Hence
\begin{equation}\label{mix1}
\dom\widetilde{A}=\left\{
f\in\cD_{\mmm}\mid \begin{pmatrix}
 \alpha_1(f)\\ \alpha_2(f)
\end{pmatrix}  =\begin{bmatrix}
  a&b\\
  c&d
\end{bmatrix}
\begin{pmatrix}
 \beta_1(f)\\ \beta_2(f)
\end{pmatrix}
\right\} .
\end{equation}

 \item[$(ii)$] The matrix $\begin{bmatrix}
  e&f\\
  g&h
\end{bmatrix}$ is nondegenerate. Then we can express
$\beta_1(f),\ \beta_2(f)$ via $\alpha_1(f),\ \alpha_2(f)$ and rewrite \eqref{2dim} in the form:
\begin{equation}\label{ba}
\begin{pmatrix}
 \beta_1(f)\\ \beta_2(f)
\end{pmatrix}
  =\begin{bmatrix}
  a&b\\
  c&d
\end{bmatrix}
\begin{pmatrix}
 \alpha_1(f)\\ \alpha_2(f)
\end{pmatrix},
\end{equation}
where in this case
$$
\begin{bmatrix}
  a&b\\
  c&d
\end{bmatrix}=\begin{bmatrix}
  e&f\\
  g&h
\end{bmatrix}^{-1}\,\begin{bmatrix}
  a_1&b_1\\
  c_1&d_1
\end{bmatrix}.
$$
Hence
\begin{equation}\label{mix2}
\dom\widetilde{A}=\left\{
f\in\cD_{\mmm}\mid \begin{pmatrix}
 \beta_1(f)\\ \beta_2(f)
\end{pmatrix}
  =\begin{bmatrix}
  a&b\\
  c&d
\end{bmatrix}
\begin{pmatrix}
 \alpha_1(f)\\ \alpha_2(f)
\end{pmatrix}
\right\} .
\end{equation}
  \item[$(iii)$] Both matrices $\begin{bmatrix}
  a_1&b_1\\
  c_1&d_1
\end{bmatrix}$ and $\begin{bmatrix}
  e&f\\
  g&h
\end{bmatrix}$ are degenerate. Then they are both of rank = 1 and therefore there exist  numbers $a ,\, b,\, c,\, d$ with $|a|+|b |\neq 0$ and $|c|+|d|\neq 0$ such that one can rewrite \eqref{2dim} as a system:
\begin{equation}\label{abc}
\left\{
\begin{array}{lcr}
 a\,\alpha_1(f)+b\,\alpha_2(f)&=&0,\\
 c\, \beta_1(f)+d\, \beta_2 (f)&=&0.
\end{array}
\right.
\end{equation}
\end{enumerate}

Let us recall that \eqref{ab} and \eqref{ba} are called mixed boundary conditions and \eqref{abc} is called separated boundary conditions.
By our assumptions, cases $(i)$ and $(iii)$ can not occur simultaneously.
Similarly,  cases $(ii)$ and $(iii)$ can not occur simultaneously.

We normalize \eqref{abc} and rewrite for this case \eqref{2dim} as
\begin{equation}\label{separate}
 \dom \widetilde{A}=\left\{f\in \cD_\mmm\mid   \begin{array}{lcr}
  \alpha_1(f)\,\xi \cos \alpha - \alpha_2(f)\,\sin \alpha&=&0,\\
   \beta_1(f)\, \eta \cos \beta -\beta_2 (f)\,  \sin \beta&=&0.
  \end{array}\right\}
  \end{equation}
  Here $|\xi|=|\eta| =1$ and $\alpha,\, \beta\,\in [0,2\pi)$.

Note that in the case of separated boundary conditions there exist vectors $f_1,f_2\in\dom\widetilde{A}$ such that:
\begin{equation}\label{nonzero}
|\alpha_1(f_1)|\,+\,|\alpha_2(f_1)|\neq 0,\qquad  |\beta_1(f_2)|\,+\,|\beta_2(f_2)|\neq 0.
\end{equation}
which is due to the fact that $\widetilde{A}$ is a 2-dimensional
extension of $A$.

In both cases, separated and mixed boundary conditions, extensions $\widetilde{A}$ are bi-extensions since
$A\subset \widetilde{A}\subset A^*$. Our aim in this section is to describe adjoint and $\cP$-adjoint operators to the  extension $\widetilde{A}$ and give criteria when this extension is $\PT$ symmetry, selfadjoint and $\Pt$-selfadjoint.

For this we need the following  result.
\begin{lemma}\label{lab}
Let $f,g\in \cD_\mmm$. Then
\begin{equation}\label{eqfg}
[g,f]_{-\infty}^\infty=\overline{\beta_2(g)}\beta_1(f)-
\overline{\alpha_2(g)}\alpha_1(f)-
\overline{\beta_1(g)}\beta_2(f)+\overline{\alpha_1(g)}\alpha_2(f).
\end{equation}
\end{lemma}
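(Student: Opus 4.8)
\textbf{Proof plan for Lemma \ref{lab}.}
The plan is to expand the right-hand side of \eqref{eqfg} in terms of the Wronskian-type brackets $[f,g]_{\pm\infty}$ using the basis functions $w_1,w_2$ from Lemma \ref{odd}, and to recognize the result as $[g,f]_{-\infty}^\infty$. First I would recall that for fixed $x\in\dR$ the expression $[\cdot,\cdot]_x$ is a sesquilinear form on the two-dimensional quotient $\cD_\mmm/\dom A$ when restricted near $+\infty$ (and likewise near $-\infty$), because by \eqref{Sevilla} the value of $[f,g]_x$ for $x$ near $+\infty$ depends on $f,g$ only through the data $\beta_1,\beta_2$. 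More precisely, near $+\infty$ every $f\in\cD_\mmm$ agrees with $\beta_2(f)\,w_1+\beta_1(f)\,w_2$ modulo a function whose bracket with anything vanishes at $+\infty$; this is exactly the normalization $[w_1,w_2]_\infty=1$, $[w_1,w_1]_\infty=[w_2,w_2]_\infty=0$ of Lemma \ref{odd}, together with $\beta_j(w_i)=\delta_{ij}$-type relations.

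The key computational step is then the bilinear expansion at $+\infty$. Writing $f\sim \beta_2(f)w_1+\beta_1(f)w_2$ and $g\sim \beta_2(g)w_1+\beta_1(g)w_2$ near $+\infty$, and using conjugate-linearity of $[\cdot,\cdot]_x$ in the first slot, one gets
\begin{equation*}
[g,f]_\infty = \overline{\beta_2(g)}\beta_1(f)\,[w_1,w_2]_\infty + \overline{\beta_1(g)}\beta_2(f)\,[w_2,w_1]_\infty = \overline{\beta_2(g)}\beta_1(f) - \overline{\beta_1(g)}\beta_2(f),
\end{equation*}
since $[w_2,w_1]_\infty = -\overline{[w_1,w_2]_\infty} = -1$ (the bracket is skew-hermitian in the sense $[h,k]_x=-\overline{[k,h]_x}$, and $w_1,w_2$ are real). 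The identical argument at $-\infty$, using $[w_1,w_2]_{-\infty}=1$ and the vanishing of the diagonal brackets, yields $[g,f]_{-\infty} = \overline{\alpha_2(g)}\alpha_1(f) - \overline{\alpha_1(g)}\alpha_2(f)$. Subtracting, $[g,f]_{-\infty}^\infty = [g,f]_\infty - [g,f]_{-\infty}$ is precisely the right-hand side of \eqref{eqfg}.

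The main obstacle is making rigorous the informal ``$f\sim\beta_2(f)w_1+\beta_1(f)w_2$ near $+\infty$'' step: one must justify that $[h,k]_\infty=0$ whenever $\beta_1(h)=\beta_2(h)=0$, for arbitrary $k\in\cD_\mmm$ (and symmetrically), i.e. that the bracket at $+\infty$ factors through the quotient by the functions with vanishing data at $+\infty$. This follows from the Lagrange-identity/Green's-formula framework: for $f$ agreeing with $\beta_2(f)w_1+\beta_1(f)w_2$ near $+\infty$, the difference $f-\beta_2(f)w_1-\beta_1(f)w_2$ has $[\,\cdot\,,\,\cdot\,]_x\to 0$ as $x\to\infty$, by the defining property \eqref{Sevilla} and the existence of the limits $[g,f]_{\pm\infty}$ recorded just before \eqref{Wolfsburg}. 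Granting this reduction, the rest is the routine $2\times 2$ bilinear bookkeeping carried out above. Alternatively, and perhaps more cleanly, one can avoid the reduction step by applying Lemma \ref{HornoEnSeville}: fix $g$, and verify \eqref{eqfg} first for $f\in\{u_1,u_2,v_1,v_2\}$ (the building blocks in the proof of Lemma \ref{HornoEnSeville}, whose boundary data are unit coordinate vectors), then extend by linearity in $f$ and conjugate-linearity in $g$, checking the four cases $g\in\{u_1,u_2,v_1,v_2\}$ against the known brackets $[w_i,w_j]_{\pm\infty}$; this reduces the whole lemma to finitely many bracket evaluations supplied by Lemma \ref{odd}.
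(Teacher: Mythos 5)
Your route is genuinely different from the paper's. The paper proves \eqref{eqfg} by the exact pointwise (Pl\"ucker-type) identity
\begin{equation*}
[g,f]_x\,[w_1,w_2]_x=[g,w_2]_x\,[w_1,f]_x-[g,w_1]_x\,[w_2,f]_x,
\end{equation*}
valid for every $x$, and then simply lets $x\to\pm\infty$ using $[w_1,w_2]_{\pm\infty}=1$ from Lemma \ref{odd}. This one-line identity is precisely what shows that the endpoint forms $[g,f]_{\pm\infty}$ factor through the boundary data, so the issue you single out as the main obstacle never arises in the paper's argument.

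In your primary argument that obstacle is a genuine gap. The claim that $[h,k]_\infty=0$ for every $k\in\cD_\mmm$ whenever $\beta_1(h)=\beta_2(h)=0$ is essentially the content of the lemma at the endpoint $+\infty$, and your justification of it is circular: \eqref{Sevilla} characterizes $\dom A$ by the vanishing of all four boundary data and says nothing about the form $[\,\cdot\,,\,\cdot\,]_\infty$ factoring through $(\beta_1,\beta_2)$ alone (a function with $\beta_1(h)=\beta_2(h)=0$ need not even lie in $\dom A$), and the mere existence of the limits does not force them to vanish. There is also a sign slip: since $[w_2,w_1]_{\pm\infty}=-1$, the normal form determined by Lemma \ref{odd} is $f\sim-\beta_2(f)\,w_1+\beta_1(f)\,w_2$ rather than $f\sim\beta_2(f)\,w_1+\beta_1(f)\,w_2$, so the signs in your endpoint formulas need rechecking (an overall sign is immaterial for Corollary \ref{c1}, but you should not arrive at the stated formula from an incorrect decomposition). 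Your alternative route is the one that can actually be completed with the tools at hand, provided you work with the combined quantity $[\,\cdot\,,\,\cdot\,]_{-\infty}^\infty$ rather than with the two endpoints separately: write $f=f_0+\lambda_1u_1+\lambda_2u_2+\mu_1v_1+\mu_2v_2$ with $f_0\in\dom A$ (this is \eqref{Sevilla} plus Lemma \ref{HornoEnSeville}), kill the $f_0$- and $g_0$-contributions via \eqref{Wolfsburg} together with the definition of the adjoint, and evaluate the finitely many brackets $[u_i,u_j]_{-\infty}^\infty$, $[u_i,v_j]_{-\infty}^\infty$, $[v_i,v_j]_{-\infty}^\infty$ from Lemma \ref{odd}. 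That version is correct, but considerably longer than the paper's determinant identity.
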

\begin{proof}
Consider the function
\begin{equation}\label{eF}
F(x;g,f,w_1,w_2)=[g,f]_x[w_1,w_2]_x.
\end{equation}
 A direct calculation shows that
\begin{equation}\label{eFF}
F(x;g,f,w_1,w_2)=[g,w_2]_x[w_1,f]_x-[g,w_1]_x[w_2,f]_x.
\end{equation}
Since $[w_1,w_2]_{-\infty}=[w_1,w_2]_\infty=1$, on the one hand from \eqref{eF} it follows that
$$
\lim\limits_{x\to \infty}F(x;g,f,w_1,w_2)-\lim\limits_{x\to -\infty}F(x;g,f,w_1,w_2)=[g,f]_{-\infty}^\infty,
$$
and from the other hand side \eqref{eFF} implies
$$
\lim\limits_{x\to \infty}F(x;g,f,w_1,w_2)-\lim\limits_{x\to -\infty}F(x;g,f,w_1,w_2)=
$$
$$
\overline{\beta_2(g)}\beta_1(f)-
\overline{\alpha_2(g)}\alpha_1(f)-
\overline{\beta_1(g)}\beta_2(f)+\overline{\alpha_1(g)}\alpha_2(f) .
$$
Therefore  \eqref{eqfg} holds.
\end{proof}
\begin{corollary}\label{c1}
 Let $\widetilde{A}$ be a bi-extension of $A$. Then $g\in \dom\widetilde{A}^*$ if and only if
 \begin{equation}\label{w}
  \overline{\beta_1(g)}\beta_2(f)-
\overline{\beta_2(g)}\beta_1(f)=\overline{\alpha_1(g)}\alpha_2(f)
-\overline{\alpha_2(g)}\alpha_1(f)
 \end{equation}
 for all $f\in\dom\widetilde{A}$.
\end{corollary}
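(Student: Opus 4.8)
The plan is to derive the characterization directly from Lemma \ref{lab} together with the defining relation \eqref{Wolfsburg}. Recall that $g\in\dom\widetilde{A}^*$ precisely when the functional $f\mapsto (\widetilde{A}f,g)$ is continuous on $\dom\widetilde{A}$, equivalently when there is $h\in L^2(\dR)$ with $(\widetilde{A}f,g)=(f,h)$ for all $f\in\dom\widetilde{A}$. Since $A\subset\widetilde{A}\subset A^*$, every such $g$ automatically lies in $\dom A^*=\cD_\mmm$ (because $\dom\widetilde{A}^*\subset\dom A^*$), so we may assume $g\in\cD_\mmm$ from the outset and then $h=A^*g$. Hence, using \eqref{Wolfsburg} with the roles of $f,g$ as there, the condition $(\widetilde{A}f,g)=(A^*g,f)$ rewritten as $(g,A^*f)-(A^*g,f)=0$ is exactly $[f,g]_{-\infty}^\infty=0$ for all $f\in\dom\widetilde{A}$.

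First I would make this precise: for $g\in\cD_\mmm$, one has $g\in\dom\widetilde{A}^*$ if and only if $[f,g]_{-\infty}^\infty=0$ for every $f\in\dom\widetilde{A}$. The ``only if'' direction is immediate from \eqref{Wolfsburg} since $\widetilde{A}f=A^*f$ and $\widetilde{A}^*g=A^*g$ on the relevant domains. For the ``if'' direction, if $[f,g]_{-\infty}^\infty=0$ for all $f\in\dom\widetilde{A}$, then \eqref{Wolfsburg} gives $(\widetilde{A}f,g)=(f,A^*g)$ for all $f\in\dom\widetilde{A}$, so the functional $f\mapsto(\widetilde{A}f,g)$ is represented by $A^*g\in L^2(\dR)$, whence $g\in\dom\widetilde{A}^*$. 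This is the conceptual core; the only subtlety is to confirm that $\dom\widetilde{A}^*\subset\cD_\mmm$, which follows because $\widetilde{A}\supset A$ forces $\widetilde{A}^*\subset A^*$ and $\dom A^*=\cD_\mmm$.

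Next I would substitute the explicit formula from Lemma \ref{lab}: for $f,g\in\cD_\mmm$,
\[
[g,f]_{-\infty}^\infty=\overline{\beta_2(g)}\beta_1(f)-\overline{\alpha_2(g)}\alpha_1(f)-\overline{\beta_1(g)}\beta_2(f)+\overline{\alpha_1(g)}\alpha_2(f).
\]
Since $[f,g]_{-\infty}^\infty=\overline{[g,f]_{-\infty}^\infty}$ (this is the standard conjugate-symmetry of the Wronskian-type bracket, or can be read off by swapping $f$ and $g$ in the formula of Lemma \ref{lab} and conjugating), the vanishing of $[f,g]_{-\infty}^\infty$ for all $f\in\dom\widetilde{A}$ is equivalent to the vanishing of $[g,f]_{-\infty}^\infty$ for all such $f$. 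Setting $[g,f]_{-\infty}^\infty=0$ and rearranging the four terms by grouping the $\beta$-terms on one side and the $\alpha$-terms on the other gives exactly \eqref{w}. Conversely, \eqref{w} rearranged yields $[g,f]_{-\infty}^\infty=0$. This completes the equivalence.

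The step that needs the most care is the first one—pinning down that membership in $\dom\widetilde{A}^*$ is equivalent to the bracket condition, and in particular ruling out elements of $\dom\widetilde{A}^*$ outside $\cD_\mmm$; once that is settled, the rest is a direct algebraic rearrangement of the identity in Lemma \ref{lab}. I would also double-check the bookkeeping of which bracket, $[f,g]_{-\infty}^\infty$ or $[g,f]_{-\infty}^\infty$, appears in \eqref{Wolfsburg} versus in Lemma \ref{lab}, since the two differ by complex conjugation and a sign, but this is harmless because we are setting the expression equal to zero.
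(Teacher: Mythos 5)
Your proof is correct and takes essentially the same route as the paper: identity \eqref{Wolfsburg} shows that $g\in\dom\widetilde{A}^*$ exactly when the boundary bracket vanishes for all $f\in\dom\widetilde{A}$ (using $\dom\widetilde{A}^*\subset\dom A^*=\cD_\mmm$), and Lemma \ref{lab} converts that vanishing into \eqref{w}. One minor slip: the conjugate symmetry is $[f,g]_{-\infty}^\infty=-\overline{[g,f]_{-\infty}^\infty}$ rather than $+\overline{[g,f]_{-\infty}^\infty}$, but, as you yourself note, the sign is immaterial once the expression is set equal to zero.
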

\begin{proof}
This follows immediately from Lemma \ref{lab} and the fact that  $g\in \dom\widetilde{A}^*$ if and only if $[g,f]_{-\infty}^\infty=0$ for all $f\in\dom\widetilde{A}$, see \eqref{Wolfsburg}.
\end{proof}
\begin{proposition}\label{domT*}
 $(1)$\ If $\widetilde{A}$ is given by \eqref{separate}, then
 \begin{equation}\label{domT*sep}
  \dom  \widetilde{A}^*=\left\{g\in \cD_\mmm\mid   \begin{array}{lcr}
  \alpha_1(g) \cos \alpha - \alpha_2(g)\xi\sin \alpha&=&0,\\
   \beta_1(g)   \cos \beta -\beta_2 (g) \eta \sin \beta&=&0,
  \end{array}\right\}
 \end{equation}

\begin{equation}\label{domT+sep}
  \dom  \widetilde{A}^+=\left\{g\in \cD_\mmm\mid   \begin{array}{lcr}
  \alpha_1(g) \cos \beta + \alpha_2(g)\eta\sin \beta&=&0,\\
   \beta_1(g)   \cos \alpha +\beta_2 (g) \xi \sin \alpha&=&0.
  \end{array}\right\}
 \end{equation}

 $(2)$\ \ If $\widetilde{A}$ is given by \eqref{mix2}, then
 \begin{equation}\label{domT*mixed}
   \dom  \widetilde{A}^*=\left\{g\in \cD_\mmm\mid \begin{pmatrix}
     \alpha_1(g)\\ \alpha_2(g)
   \end{pmatrix}=\begin{bmatrix}
    \overline{d}& -\overline{b}\\
    -\overline{c}&\overline{a}
   \end{bmatrix}\begin{pmatrix}
     \beta_1(g)\\
     \beta_2(g)
   \end{pmatrix}\right\},
  \end{equation}

  \begin{equation}\label{domT+mixed}
\dom  \widetilde{A}^+=\left\{g\in \cD_\mmm\mid \begin{pmatrix}
     \beta_1(g)\\ \beta_2(g)
   \end{pmatrix}=\begin{bmatrix}
    \overline{d}& \overline{b}\\
    \overline{c}&\overline{a}
   \end{bmatrix}\begin{pmatrix}
     \alpha_1(g)\\
     \alpha_2(g)
   \end{pmatrix}\right\}.
  \end{equation}

$(3)$\ If $\widetilde{A}$ satisfy \eqref{mix1} then
\begin{equation}\label{domT*mixed+}
   \dom  \widetilde{A}^*=\left\{g\in \cD_\mmm\mid
   \begin{pmatrix}
     \beta_1(g)\\
     \beta_2(g)
   \end{pmatrix}
   =\begin{bmatrix}
    \overline{d}& -\overline{b}\\
    -\overline{c}&\overline{a}
   \end{bmatrix}\right\}\begin{pmatrix}
     \alpha_1(g)\\ \alpha_2(g)
   \end{pmatrix},
  \end{equation}
\begin{equation}\label{domT+mixed+}
\dom  \widetilde{A}^+=\left\{g\in \cD_\mmm\mid \begin{pmatrix}
     \alpha_1(g)\\
     \alpha_2(g)
   \end{pmatrix}=\begin{bmatrix}
    \overline{d}& \overline{b}\\
    \overline{c}&\overline{a}
   \end{bmatrix}\begin{pmatrix}
     \beta_1(g)\\ \beta_2(g)
   \end{pmatrix}
\right\}.
  \end{equation}
\end{proposition}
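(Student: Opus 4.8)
The plan is to compute $\dom\widetilde{A}^*$ directly from Corollary \ref{c1}, which says that $g\in\dom\widetilde{A}^*$ if and only if
$$
\overline{\beta_1(g)}\beta_2(f)-\overline{\beta_2(g)}\beta_1(f)=\overline{\alpha_1(g)}\alpha_2(f)-\overline{\alpha_2(g)}\alpha_1(f)
$$
holds for \emph{all} $f\in\dom\widetilde{A}$, and then to obtain $\dom\widetilde{A}^+$ from $\dom\widetilde{A}^*$ via $\dom\widetilde{A}^+=\cP\dom\widetilde{A}^*$ together with the transformation rules \eqref{p} of Lemma \ref{alpha_1}. So the proof splits into the three cases and, within each, into the two assertions about $\widetilde{A}^*$ and $\widetilde{A}^+$.

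For part $(1)$, separated boundary conditions: by Lemma \ref{HornoEnSeville} one can realise, inside $\dom\widetilde{A}$, vectors $f$ with prescribed boundary data subject only to the two constraints in \eqref{separate}. Choosing first an $f$ supported near $+\infty$ (so $\alpha_1(f)=\alpha_2(f)=0$) with $(\beta_1(f),\beta_2(f))$ ranging over the line $\eta\cos\beta\,\beta_1(f)=\sin\beta\,\beta_2(f)$, equation \eqref{w} forces $\overline{\beta_1(g)}\beta_2(f)=\overline{\beta_2(g)}\beta_1(f)$ for all such $f$; feeding in the explicit direction vector of that line gives exactly $\beta_1(g)\cos\beta=\eta\sin\beta\,\beta_2(g)$ (after taking conjugates and using $|\eta|=1$). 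Symmetrically, an $f$ supported near $-\infty$ yields $\alpha_1(g)\cos\alpha=\xi\sin\alpha\,\alpha_2(g)$. Conversely, if $g$ satisfies \eqref{domT*sep} then for every $f\in\dom\widetilde{A}$ both sides of \eqref{w} vanish separately, so $g\in\dom\widetilde{A}^*$; this proves \eqref{domT*sep}. For \eqref{domT+sep} one applies $\cP$: by \eqref{p}, $\alpha_1(\cP g)=\beta_1(g)$, $\beta_1(\cP g)=\alpha_1(g)$, $\alpha_2(\cP g)=-\beta_2(g)$, $\beta_2(\cP g)=-\alpha_2(g)$, so substituting $g\leftarrow\cP g$ in \eqref{domT*sep} and relabelling turns the two equations into $\beta_1(g)\cos\alpha+\xi\sin\alpha\,\beta_2(g)=0$ and $\alpha_1(g)\cos\beta+\eta\sin\beta\,\alpha_2(g)=0$, which is \eqref{domT+sep}.

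For parts $(2)$ and $(3)$, mixed boundary conditions: here $\dom\widetilde{A}$ is a graph, say \eqref{mix2} with $(\beta_1(f),\beta_2(f))=M(\alpha_1(f),\alpha_2(f))$ for $M=\begin{bmatrix}a&b\\c&d\end{bmatrix}$, and by Lemma \ref{HornoEnSeville} the pair $(\alpha_1(f),\alpha_2(f))$ ranges freely over $\dC^2$. Plugging this into \eqref{w} gives, for all $(\alpha_1(f),\alpha_2(f))\in\dC^2$, an identity that is linear in $(\alpha_1(f),\alpha_2(f))$; comparing coefficients yields a linear relation between $(\alpha_1(g),\alpha_2(g))$ and $(\beta_1(g),\beta_2(g))$ whose matrix is exactly $\begin{bmatrix}\overline d&-\overline b\\-\overline c&\overline a\end{bmatrix}$ — the appearance of this matrix is just the statement that $\widetilde{A}^*$ is governed by the $J$-adjoint of $M$ with respect to the symplectic form $\overline{\alpha_1(g)}\alpha_2(f)-\overline{\alpha_2(g)}\alpha_1(f)$. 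This gives \eqref{domT*mixed}; the case \eqref{mix1} is identical with the roles of the $\alpha$'s and $\beta$'s interchanged, giving \eqref{domT*mixed+}. Then \eqref{domT+mixed} and \eqref{domT+mixed+} follow by the same $\cP$-substitution as before, the sign flips in the second row/column of the $\cP$-rules being exactly what converts $\begin{bmatrix}\overline d&-\overline b\\-\overline c&\overline a\end{bmatrix}$ into $\begin{bmatrix}\overline d&\overline b\\\overline c&\overline a\end{bmatrix}$.

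The routine part is the coefficient-matching bookkeeping; the one place to be careful — and the main obstacle — is the separated case, where $\dom\widetilde{A}$ is \emph{not} a graph over all of $\dC^4$, so one cannot simply vary all four boundary functionals independently. The argument must instead use the two witness vectors $f_1,f_2$ from \eqref{nonzero} (equivalently, functions supported near one end only, via Lemma \ref{HornoEnSeville}) to decouple the conditions at $+\infty$ and at $-\infty$, and must check that the normalisations $|\xi|=|\eta|=1$ are used correctly when conjugating. Once that decoupling is in place, each half reduces to a one-dimensional computation and the stated formulas drop out.
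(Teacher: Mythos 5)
Your proof is correct and arrives at all six formulas, but it establishes them by a route that differs from the paper's in one structural point. Both arguments share the same engine: Corollary \ref{c1} (equivalently Lemma \ref{lab}) drives the computation of the $*$-domains, and the transformation rules \eqref{p} together with $\dom\widetilde{A}^+=\cP\dom\widetilde{A}^*$ give the $+$-domains; your bookkeeping in all three cases checks out, including the use of $|\xi|=|\eta|=1$ when conjugating in the separated case. The difference is in how \emph{equality} of the two sets (rather than one inclusion) is obtained. The paper verifies only the easy inclusion --- that every $g$ satisfying the candidate conditions obeys \eqref{w} against every $f\in\dom\widetilde{A}$, which ``directly follows from \eqref{eqfg}'' --- and then concludes equality by a dimension count: the candidate domain is visibly a $2$-dimensional extension of $\dom A$, and by Proposition \ref{ext} the adjoint of a $2$-dimensional bi-extension is again $2$-dimensional, so the inclusion is forced to be an equality. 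You instead prove necessity directly, feeding into \eqref{w} test functions supplied by Lemma \ref{HornoEnSeville} (functions vanishing at one end in the separated case; arbitrary $(\alpha_1(f),\alpha_2(f))\in\dC^2$ in the mixed case) and extracting the conditions on $g$ by comparing coefficients. Your route is a little longer but self-contained --- it never invokes Proposition \ref{ext} --- and it explains conceptually why the adjoint is governed by the symplectic adjugate $\left[\begin{smallmatrix}\overline d&-\overline b\\ -\overline c&\overline a\end{smallmatrix}\right]$; the paper's route buys brevity by leaning on the abstract bi-extension machinery. You also correctly isolated the only delicate point, namely that in the separated case the boundary data are not a graph over $\dC^4$ and the two ends must be decoupled before the one-dimensional comparison can be made.
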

\begin{proof}
Since  $\widetilde{A}$ and operators with domains \eqref{domT*sep}, \eqref{domT*mixed}, and \eqref{domT*mixed+}
correspond to two dimensional extensions of $A$ for a proof of the statement it is sufficient to check \eqref{w}   for $f\in\dom \widetilde{A}$ and $g$ from \eqref{domT*sep}, \eqref{domT*mixed}, or \eqref{domT*mixed+}, respectively. But this directly follows from \eqref{eqfg}.\\

Since $\dom\widetilde{A}^+=\{g=\Pt f\mid f\in \dom\widetilde{A}^*\}$ a proof of \eqref{domT+sep}, \eqref{domT+mixed}, and \eqref{domT+mixed+}
taking in account Lemma \ref{alpha_1}, relations \eqref{p}, \ follows immediately from \eqref{domT*sep} and \eqref{domT*mixed}, or \eqref{domT*mixed+}, respectively.
 \end{proof}
\begin{corollary}
$(1)$\ Let $\widetilde{A}$ has the domain \eqref{separate}. Then
$\widetilde{A}=\widetilde{A}^*$ if and only if  $\xi=\eta=1$.\\

$(2)$\ Let $\widetilde{A}$ has the domain \eqref{mix1} or \eqref{mix2}.
Set $\Delta=ad-bc$. Then
 $\widetilde{A}=\widetilde{A}^*$ if and only if for some
 $\varphi \in \mathbb R$ we have
  \begin{equation}\label{a=a*}
 \Delta= e^{2i\varphi} \mbox{ and }\, \left\{
 e^{-i\varphi}a,\, e^{-i\varphi}b,\, e^{-i\varphi}c,\, e^{-i\varphi}d
 \right\}
 \subset \mathbb R.
     \end{equation}
\end{corollary}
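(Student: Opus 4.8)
The plan is to treat the two cases separately, using the explicit descriptions of $\dom\widetilde A$ and $\dom\widetilde A^*$ from Proposition \ref{domT*}. Since both $\widetilde A$ and $\widetilde A^*$ act as $\tau_q$ on their respective domains, the equality $\widetilde A=\widetilde A^*$ is equivalent to the equality of the two domains, so the whole argument reduces to comparing boundary conditions.

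For part $(1)$, I would compare \eqref{separate} with \eqref{domT*sep}. The domain $\dom\widetilde A$ is cut out by $\alpha_1(f)\,\xi\cos\alpha - \alpha_2(f)\sin\alpha=0$ and $\beta_1(f)\,\eta\cos\beta - \beta_2(f)\sin\beta=0$, while $\dom\widetilde A^*$ is cut out by $\alpha_1(g)\cos\alpha - \alpha_2(g)\,\xi\sin\alpha=0$ and $\beta_1(g)\cos\beta - \beta_2(g)\,\eta\sin\beta=0$. Two separated boundary conditions at $-\infty$ of this form describe the same subspace precisely when the coefficient pairs $(\xi\cos\alpha,-\sin\alpha)$ and $(\cos\alpha,-\xi\sin\alpha)$ are proportional (using that one of $\cos\alpha,\sin\alpha$ is nonzero since $\alpha\in[0,2\pi)$, and invoking \eqref{nonzero} to guarantee nontriviality); a short computation gives $\xi\cos^2\alpha = \xi\sin^2\alpha$ forced only if $\cos\alpha\sin\alpha\neq0$, but in general proportionality of $(\xi\cos\alpha,-\sin\alpha)$ and $(\cos\alpha,-\xi\sin\alpha)$ with $|\xi|=1$ forces $\xi=1$ (or one of $\cos\alpha,\sin\alpha$ vanishes, in which case the condition is already $\xi$-independent — but then $\xi$ is not constrained by that endpoint; here one must be a little careful and note $\xi$ enters the definition of $\widetilde A$ only through that endpoint's condition, so when $\cos\alpha\sin\alpha=0$ the value of $\xi$ is immaterial and one may take $\xi=1$). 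The same analysis at $+\infty$ gives $\eta=1$. So $\widetilde A=\widetilde A^*$ iff $\xi=\eta=1$. I expect the mild subtlety here to be the degenerate subcases $\alpha,\beta\in\{0,\pi/2,\pi,3\pi/2\}$, where $\xi$ or $\eta$ drops out of the boundary condition and one should phrase the statement as: one may normalize $\xi=\eta=1$ in that case.

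For part $(2)$, assume first $\widetilde A$ is given by \eqref{mix1}, so $\binom{\alpha_1(f)}{\alpha_2(f)} = M\binom{\beta_1(f)}{\beta_2(f)}$ with $M=\begin{bmatrix} a&b\\ c&d\end{bmatrix}$, and by \eqref{domT*mixed+} the adjoint satisfies $\binom{\beta_1(g)}{\beta_2(g)} = N\binom{\alpha_1(g)}{\alpha_2(g)}$ with $N=\begin{bmatrix}\overline d& -\overline b\\ -\overline c&\overline a\end{bmatrix}$. Both are $2$-dimensional extensions, so $\widetilde A=\widetilde A^*$ iff the affine relation $\binom{\beta_1}{\beta_2}=N\binom{\alpha_1}{\alpha_2}$ is equivalent to $\binom{\alpha_1}{\alpha_2}=M\binom{\beta_1}{\beta_2}$. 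Since by Lemma \ref{HornoEnSeville} all values of the $\alpha_i,\beta_i$ are attained, and $M$ is invertible (as case $(i)$/$(iii)$ are exclusive here one checks $M$ is nonsingular; if $M$ is singular one is in case $(iii)$, excluded), this equivalence holds iff $N = M^{-1}$, i.e. $NM=I$. Now $N = (\det M)\,\overline{M^{-1}}^{\,\mathrm{?}}$ — more precisely, writing $\Delta=\det M = ad-bc$, one computes $M^{-1} = \tfrac1\Delta\begin{bmatrix} d&-b\\ -c&a\end{bmatrix}$, so the requirement $N=M^{-1}$ reads $\overline d = d/\Delta$, $\overline b = b/\Delta$, $\overline c = c/\Delta$, $\overline a = a/\Delta$, i.e. $\Delta\,\overline{a} = a$ and likewise for $b,c,d$. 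Taking moduli of any nonzero entry gives $|\Delta|=1$, so write $\Delta = e^{2i\varphi}$; then $e^{2i\varphi}\overline a = a$ says $\overline{e^{-i\varphi}a} = e^{-i\varphi}a$, i.e. $e^{-i\varphi}a\in\mathbb R$, and similarly for $b,c,d$, which is exactly \eqref{a=a*}. Conversely \eqref{a=a*} gives $N=M^{-1}$ by reversing these steps, and also guarantees $\Delta\neq0$ so that $M$ is indeed invertible. The case where $\widetilde A$ is given by \eqref{mix2} is handled identically, exchanging the roles of the $\alpha$'s and $\beta$'s and using \eqref{domT*mixed}. The main obstacle, such as it is, is bookkeeping: keeping the transpose/conjugate structure of $N$ straight and confirming that the $2$-dimensional count (so that matching coefficient matrices, not merely matching one row) is legitimate — this is where Lemma \ref{HornoEnSeville} and the dimension count from Lemma \ref{ArnstadtSued} are used.
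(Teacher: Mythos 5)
Your overall route is the same as the paper's: reduce $\widetilde{A}=\widetilde{A}^*$ to equality of the two domains and compare the boundary conditions of \eqref{separate}, resp.\ \eqref{mix1}, with those of $\dom\widetilde{A}^*$ from Proposition \ref{domT*}; in the mixed case both arguments end with $N=M^{-1}$, hence $|\Delta|=1$ and the reality of $e^{-i\varphi}a,\dots,e^{-i\varphi}d$. Your appeal to Lemma \ref{HornoEnSeville}, to see that every pair $(\beta_1(f),\beta_2(f))$ is attained on $\dom\widetilde{A}$, is in fact a cleaner substitute for the paper's contradiction argument establishing \eqref{Nuernberg222}. There is, however, one concretely wrong step in part (2): you justify the invertibility of the coefficient matrix $M$ in \eqref{mix1} by saying that a singular $M$ would put you in case $(iii)$. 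That is false: case $(i)$ only requires the matrix with entries $a_1,b_1,c_1,d_1$ in \eqref{2dim} to be nondegenerate, so $M$ can be singular in case $(i)$ --- for instance $M=0$ yields the legitimate $2$-dimensional extension $\alpha_1(f)=\alpha_2(f)=0$, which is not of the separated form \eqref{abc}. The invertibility of $M$ must instead be derived from the hypothesis $\dom\widetilde{A}=\dom\widetilde{A}^*$: the paper shows that proportional rows of $M$ would turn \eqref{mix1} combined with \eqref{domT*mixed+} into three independent conditions, i.e.\ a $1$-dimensional extension, a contradiction; alternatively, in your graph picture, equality of $\{(Mv,v)\colon v\in\mathbb{C}^2\}$ with $\{(w,Nw)\colon w\in\mathbb{C}^2\}$ already forces $M$ to be surjective, since the second set projects onto all of $\mathbb{C}^2$ in its first component. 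Either repair closes the gap; the rest of your computation ($\Delta\overline{a}=a$, etc.) is exactly the paper's.

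In part (1), your claim that proportionality of $(\xi\cos\alpha,-\sin\alpha)$ and $(\cos\alpha,-\xi\sin\alpha)$ forces $\xi=1$ is not what the computation gives: the cross term is $(1-\xi^2)\sin\alpha\cos\alpha$, so when $\sin\alpha\cos\alpha\neq0$ one only obtains $\xi=\pm1$ (and indeed $(\xi,\alpha)=(-1,\alpha)$ parametrizes the same self-adjoint domain as $(1,\pi-\alpha)$). This imprecision is already present in the corollary itself, whose proof in the paper consists of the single remark that the assertion ``follows immediately'' from comparing the two domains, so you are no worse off than the paper here; but a fully correct statement needs either the conclusion $\xi,\eta\in\{1,-1\}$ or a prior normalization of the parametrization \eqref{separate}, as you yourself half-observe for the degenerate angles.
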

\begin{proof}
Let us note that $\widetilde{A}=\widetilde{A}^*$ if and only if
$\dom\widetilde{A}=\dom\widetilde{A}^*$. Assertion
$(1)$\ follows immediately if one compares $\dom\widetilde{A}$ and
$\dom\widetilde{A}^*$.\\

$(2)$\ Assume $\widetilde{A}$ has domain \eqref{mix1}.
Obviously, if \eqref{a=a*} holds, then we have
 $$
\begin{bmatrix}
   a& b\\
   c&d
   \end{bmatrix} \begin{bmatrix}
    \overline{d}& -\overline{b}\\
    -\overline{c}&\overline{a}
   \end{bmatrix} =
   \begin{bmatrix}
   1& 0\\
   0&1
   \end{bmatrix}
 $$
 and
 $\widetilde{A}=\widetilde{A}^*$, see \eqref{domT*mixed+}.

For the contrary, we assume $\dom\widetilde{A}=\dom\widetilde{A}^*$.
First, we will show that in this case the matrix
\begin{equation}\label{Palma}
\begin{bmatrix}
    a&b\\
    c&d   \end{bmatrix}
\end{equation}
has full rank. Indeed, assume that there exists $\eta \in \mathbb C$
with $c=\eta a$ and $d = \eta b$. From \eqref{mix1}, \eqref{domT*mixed+}
we obtain for $f\in \dom \widetilde{A}$
$$
\alpha_2(f) =\eta \alpha_1 (f), \quad \beta_1(f) =
\overline{b}(\overline{\eta} -\eta)  \alpha_{1}(f)
\quad  \mbox{and} \quad \beta_2(f) =\overline{a}(\overline{\eta} -\eta)  \alpha_{1}(f).
$$
But this implies that $\widetilde{A}$ is a $1$-dimensional extension
of $A$, a contradiction. Hence, the matrix in \eqref{Palma} has full rank.

There are two vectors $f,g\in\dom\widetilde{A}$ such that the vectors
\begin{equation}\label{Nuernberg222}
\begin{pmatrix}
     {\alpha_1(f)}\\
      {\alpha_2(f)}
\end{pmatrix}\quad \mbox{and} \quad
\begin{pmatrix}{\alpha_1(g)}\\ {\alpha_2(g)}
   \end{pmatrix} \quad \mbox{are linearly independent.}
   \end{equation}
Indeed, assume that all such vectors are linearly dependent. Since
$\dom \widetilde{A}\neq \dom A$ and by \eqref{Sevilla} there is a vector $f_0\in\dom \widetilde{A}$ such that $|\alpha_1(f_0)|+|\alpha_2(f_0)|\neq 0.$ Then for each $f\in\dom\widetilde{A}$ there exists a number $\lambda(f) \in \mathbb C$ with
\begin{equation}\label{eqbeta222}
  \begin{pmatrix}
     {\alpha_1(f)}\\
      {\alpha_2(f)}
\end{pmatrix}=\lambda(f)\,\begin{pmatrix}
     {\alpha_1(f_0)}\\
      {\alpha_2(f_0)}
\end{pmatrix}.
\end{equation}
and, from \eqref{mix1} and the fact that the matrix in \eqref{Palma}
has full rank, we deduce
\begin{equation}\label{eqalpha222}
  \begin{pmatrix}
     {\beta_1(f)}\\
      {\beta_2(f)}
\end{pmatrix}=\lambda(f)\,\begin{pmatrix}
     {\beta_1(f_0)}\\
      {\beta_2(f_0)}
\end{pmatrix}.
\end{equation}
 Using \eqref{eqalpha222} and \eqref{eqbeta222} one can conclude that the  functions $f$ from the  domain of $\widetilde{A}$ satisfy the following system
 $$
 \begin{cases}
   \alpha_1(f)\alpha_2(f_0)-\alpha_2(f)\alpha_1(f_0)=0,\\
   \beta_1(f)\beta_2(f_0)-\beta_2(f)\beta_1(f_0)=0,
 \end{cases}
 $$
 that is, the boundary conditions are separated, a contradiction.
 Hence there are two vectors $f$ and $g$ in $\dom \widetilde{A}$ such
 that \eqref{Nuernberg222} holds.
As $\dom\widetilde{A}=\dom\widetilde{A}^*$,
both boundary conditions \eqref{mix1} and \eqref{domT*mixed+} hold, that is,
for $f\in \dom\widetilde{A}$ we have
$$
  \begin{pmatrix}
 \alpha_1(f)\\ \alpha_2(f)
\end{pmatrix}  =\begin{bmatrix}
  a&b\\
  c&d
\end{bmatrix}
\begin{pmatrix}
 \beta_1(f)\\ \beta_2(f)
\end{pmatrix}=\begin{bmatrix}
  a&b\\
  c&d
\end{bmatrix}\,\begin{bmatrix}
    \overline{d}& -\overline{b}\\
    -\overline{c}&\overline{a}
   \end{bmatrix}\,\begin{pmatrix}
     \alpha_1(f)\\ \alpha_2(f)
   \end{pmatrix}.
$$
The matrix in \eqref{Palma} has full rank  and from \eqref{Nuernberg222} we obtain
$|\Delta|=1$. That is $\Delta=e^{2i\varphi}$, $\varphi=\frac{1}{2}\arg \Delta$. In particular it follows
 $$
  \begin{bmatrix}
a&b\\
c&d
  \end{bmatrix}=\frac{1}{\overline{\Delta}}\,\begin{bmatrix}
    \overline{a}&\overline{b}\\
    \overline{c}&\overline{d}
   \end{bmatrix}.
$$
Hence,
$$
e^{-i\varphi}\,a =e^{-i\varphi}\,
\frac{\overline{a}}{\overline{\Delta}}=
\frac{\overline{a}}{e^{-i\varphi}} =
\overline{e^{-i\varphi}\,a}
$$
and $e^{-i\varphi}\,a$ is real. By similar arguments, one conclude
 $e^{-i\varphi}\,b \in \mathbb R$, $e^{-i\varphi}\,c
 \in \mathbb R$ and $e^{-i\varphi}\,d \in \mathbb R$ and \eqref{a=a*} holds.
The case when $\dom\widetilde{A}$ is defined by \eqref{mix2} can be proved by the same arguments.
\end{proof}

\begin{corollary}
\begin{itemize}
\item[\rm (1)]
Let $\widetilde{A}$ has the domain \eqref{separate}. Then
\begin{enumerate}
\item[$(i)$] for $\alpha\in \left\{0,\,\pi/2,\,\pi,\,3\pi/2\right\} $ or  $\beta\in \left\{0,\,\pi/2,\,\pi,\,3\pi/2\right\} $
the operator $\widetilde{A}$ is $\Pt$-selfadjoint if and only if
$$
\alpha+\beta=0\ (\, \hspace*{-9pt}\mod{\pi}).
$$
\item[$(ii)$] For $\alpha,\beta\notin \left\{0,\,\pi/2,\,\pi,\,3\pi/2\right\} $ the operator $\widetilde{A}$ is $\Pt$-selfadjoint if and only if  one of the following conditions holds:
    \begin{eqnarray}
    \alpha+\beta&=&0\ (\hspace*{-9pt}\mod{\pi}),\ \xi\,\eta=1
    \mbox{ or}\label{1s}\\
    |\alpha-\beta|&=&0\ (\hspace*{-9pt}\mod{\pi}),\ \xi\eta=-1.\label{4s}
    \end{eqnarray}
 \end{enumerate}
\item[\rm (2)]
Let $\widetilde{A}$ has the domain given by \eqref{mix1} or \eqref{mix2}. Then the operator $\widetilde{A}$ is $\Pt$-selfadjoint
   if and only if
\begin{equation}\label{mixp}
d=\overline{a},\quad b,\,c\in\mathbb R.
\end{equation}
\end{itemize}
\end{corollary}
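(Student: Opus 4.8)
The plan is to reduce $\Pt$-selfadjointness to an equality of domains and then to finish with $2\times2$ linear algebra and elementary trigonometry. Since $\widetilde A$ is a restriction of the maximal operator $A^*$ acting via $\tau_q$, and since $\widetilde A^+=\cP\widetilde A^*\cP$ also acts via $\tau_q$ (here one uses that $q$ is even, so that $\cP A^*\cP=A^*$), the operator $\widetilde A$ is $\Pt$-selfadjoint, i.e.\ $\widetilde A=\widetilde A^+$, if and only if $\dom\widetilde A=\dom\widetilde A^+$. I would therefore simply compare the descriptions of $\dom\widetilde A^+$ obtained in Proposition \ref{domT*} with \eqref{separate}, \eqref{mix1} and \eqref{mix2}. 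Throughout, Lemma \ref{HornoEnSeville} guarantees that $f\mapsto(\alpha_1(f),\alpha_2(f),\beta_1(f),\beta_2(f))$ is onto $\mathbb C^4$; hence the block of boundary conditions on $(\alpha_1,\alpha_2)$ and the block on $(\beta_1,\beta_2)$ may be compared separately, and in the mixed case $f\mapsto(\beta_1(f),\beta_2(f))$ restricted to $\dom\widetilde A$ is already onto $\mathbb C^2$ (combine \eqref{mix1} with Lemma \ref{HornoEnSeville}).

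For part $(2)$, assume $\dom\widetilde A$ is given by \eqref{mix1}; by \eqref{domT+mixed+}, $\dom\widetilde A^+$ is described by the same form with $(a,b,c,d)$ replaced by $(\overline d,\overline b,\overline c,\overline a)$. Since $(\beta_1(f),\beta_2(f))$ runs through all of $\mathbb C^2$ as $f$ runs through $\dom\widetilde A$, the two domains coincide precisely when these two linear relations coincide, that is, when $d=\overline a$ and $b,c\in\mathbb R$; this is \eqref{mixp}. The case \eqref{mix2}, compared with \eqref{domT+mixed}, is identical.

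For part $(1)$, let $\dom\widetilde A$ be given by \eqref{separate} and $\dom\widetilde A^+$ by \eqref{domT+sep}. The $(\alpha_1,\alpha_2)$-condition of \eqref{separate} has coefficient vector $(\xi\cos\alpha,-\sin\alpha)$ and that of \eqref{domT+sep} has $(\cos\beta,\eta\sin\beta)$, while the $(\beta_1,\beta_2)$-conditions have coefficient vectors $(\eta\cos\beta,-\sin\beta)$ and $(\cos\alpha,\xi\sin\alpha)$; because $|\xi|=|\eta|=1$, none of these four vectors vanishes, so $\dom\widetilde A=\dom\widetilde A^+$ is equivalent to proportionality of both pairs, i.e.\ to
\[
\xi\eta\,\cos\alpha\sin\beta+\sin\alpha\cos\beta=0
\qquad\text{and}\qquad
\xi\eta\,\sin\alpha\cos\beta+\cos\alpha\sin\beta=0 .
\]
If $\alpha$ or $\beta$ lies in $\{0,\pi/2,\pi,3\pi/2\}$, then one of $\sin\alpha,\cos\alpha,\sin\beta,\cos\beta$ vanishes, and a short case check (symmetric under $\alpha\leftrightarrow\beta$, using $\xi\eta\ne0$) shows the system reduces to $\sin(\alpha+\beta)=0$ with no constraint on $\xi,\eta$; this gives $(i)$. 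Otherwise all four of $\sin\alpha,\cos\alpha,\sin\beta,\cos\beta$ are nonzero; the first equation then expresses $\xi\eta$ as a ratio of nonzero reals while $|\xi\eta|=1$, forcing $\xi\eta\in\{1,-1\}$, and substituting back the system becomes $\sin(\alpha+\beta)=0$ when $\xi\eta=1$ and $\sin(\alpha-\beta)=0$ when $\xi\eta=-1$ — precisely \eqref{1s} and \eqref{4s}; the converse is a direct substitution. This yields $(ii)$.

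The only genuinely non-routine point is the separated case: one must justify that ``the two boundary conditions cut out the same line'' is the same as ``their coefficient vectors are proportional'' — this is exactly where the normalisation $|\xi|=|\eta|=1$ (ruling out vanishing coefficient vectors) enters — and one must organise the degenerate-angle subcases so that no solution is lost or spuriously gained. I expect this bookkeeping to be the main, though still modest, obstacle; the mixed case and the reduction to domain equality are essentially immediate.
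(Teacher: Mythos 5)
Your proposal is correct and follows essentially the same route as the paper: reduce $\Pt$-selfadjointness to the domain equality $\dom\widetilde A=\dom\widetilde A^+$, invoke the description of $\dom\widetilde A^+$ from Proposition \ref{domT*}, and compare the boundary conditions using the surjectivity of the boundary map (Lemma \ref{HornoEnSeville} resp.\ \eqref{nonzero}), which in the separated case yields exactly the determinant/proportionality relations $\xi\eta\cos\alpha\sin\beta+\sin\alpha\cos\beta=0$ and $\xi\eta\sin\alpha\cos\beta+\cos\alpha\sin\beta=0$ and in the mixed case forces equality of the two coefficient matrices. The only cosmetic difference is that you treat the degenerate angles via the two determinant equations uniformly, whereas the paper works out the representative case $\alpha=0$ and appeals to symmetry for the rest.
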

\begin{proof} Since both $\widetilde{A}$ and $\widetilde{A}^+$ are restrictions of the same maximal operator we have
\begin{equation}\label{pdom}
\widetilde{A}=\widetilde{A}^+\quad \Longleftrightarrow\quad \dom\widetilde{A}=\dom\widetilde{A}^+.
\end{equation}

$(1)$\  Assume $\widetilde{A}$ has the domain \eqref{separate}.
We prove the statement $(i)$ for  $\alpha=0$.  For a proof of the other cases,
i.e.,  $\alpha\in \left\{\pi/2,\,\pi,\,3\pi/2\right\} $, $\beta\in \left\{0,\,\pi/2,\,\pi,\,3\pi/2\right\}$ one can use similar arguments.

 Let $\alpha =0$.
 We will show  $A=A^+$ if and only if $\beta=0$ or $\beta=\pi$.

If $\beta=0$ or $\beta=\pi$ then, by \eqref{separate} and \eqref{domT+sep}
we see immediately
 $\dom\widetilde{A}=\dom\widetilde{A}^+$ and according to \eqref{pdom}   $\widetilde{A}$ is $\Pt$-selfadjoint.

 Conversely, assume $\widetilde{A}=\widetilde{A}^+$.
  Then $\alpha =0$ implies $\alpha_1(f)=\beta_1(f)=0$ for
  all $f\in \dom  \widetilde{A} = \dom  \widetilde{A}^+$.
  Hence from \eqref{nonzero} it follows that $\sin\beta=0$, that is, either $\beta=0$ or $\beta=\pi$ and statement $(i)$  is proved.

In order to show $(ii)$ let $\alpha,\beta\notin \left\{0,\,\pi/2,\,\pi,\,3\pi/2\right\}$ and assume $\widetilde{A}$ is $\Pt$-selfadjoint. Then
for $f\in \dom \widetilde{A}=\dom \widetilde{A}^+$
the boundary conditions   \eqref{separate} and \eqref{domT+sep}  give
\begin{eqnarray*}\label{first}
\left\{\begin{split}
 \alpha_1(f)\,\xi \cos \alpha - \alpha_2(f)\,\sin \alpha&=0,\\
   {\alpha_1(f)}\,  \overline{\eta} \cos \beta + {\alpha_2 (f)}\,  \sin \beta&=0,
\end{split}\right.
\end{eqnarray*}
\begin{eqnarray*}\label{second}
\left\{\begin{split}
 \beta_1(f)\, \eta \cos \beta -\beta_2 (f)\,  \sin \beta&=0.\\
{\beta_1(f)}\, \overline{\xi} \cos \alpha +  {\beta_2(f)}\,\sin \alpha&=0.
\end{split}\right.
\end{eqnarray*}
Taking into account \eqref{nonzero} we obtain
$$
\eta \cos\beta\sin\alpha+\overline{\xi}\sin\beta\cos\alpha=0,
$$
or, since $|\xi|=|\eta|=1$,
$$
\xi\eta \cos\beta\sin\alpha+ \sin\beta\cos\alpha=0.
$$
By the conditions $\alpha,\beta\notin \left\{0,\,\pi/2,\,\pi,\,3\pi/2\right\} $ and therefore
 all the numbers $\sin\alpha$, $\cos\alpha$, $\sin\beta$, $\cos\beta$ are nonzero. Therefore $\xi\eta\in\mathbb R$, that is, $ \xi\eta=\pm 1$. Hence
 we have two possibilities:
 $$
 \sin (\alpha-\beta)=0,\quad \mbox{ or } \quad \sin(\alpha+\beta)=0.
 $$
 The latter is equivalent to \eqref{1s}, \eqref{4s}.

 The converse can be checked directly.

 \ \\
$(2)$\ Let $\widetilde{A}$ has the domain \eqref{mix1}. Then it is $\Pt$-selfadjoint if and only if functions $f$ from $\dom \widetilde{A}$
 satisfy also conditions \eqref{domT+mixed+}. Therefore
$$
\begin{bmatrix}
    \overline{d}-a&\overline{b}-b\\
    \overline{c}-c&\overline{a}-d
   \end{bmatrix}\begin{pmatrix}
     {\beta_1(f)}\\
      {\beta_2(f)}
   \end{pmatrix}=0.
$$
Since $\widetilde{A}$ is a $2$-dimensional extension of $A$   \eqref{mixp} holds.

The converse statement can be checked by direct calculations.
A proof for $\widetilde{A}$ with domain \eqref{mix2} is similar.
\end{proof}
\begin{proposition}\label{1412}
\begin{itemize}
\item[\rm (1)] Let $\widetilde{A}$ has the domain  \eqref{separate}.
Then $\widetilde{A}$ is $\PT$ symmetric if and only if it is $\Pt$-selfadjoint.
\item[\rm (2)] Let $\widetilde{A}$ has the domain given by \eqref{mix1}
or \eqref{mix2}. Then
$\widetilde{A}$ is $\PT$ symmetric if and only if for some $\varphi \in \mathbb R$
\begin{equation}\label{mixpt}
\begin{bmatrix}
  a&b\\
  c&d
\end{bmatrix}
=e^{i\varphi}\begin{bmatrix}
    \alpha&\beta\\
    \gamma&\overline{\alpha}
   \end{bmatrix},\quad \beta,\gamma \in\mathbb R,\quad |\alpha|^2-\beta\gamma=1.
\end{equation}
\end{itemize}
\end{proposition}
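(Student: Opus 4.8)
The plan is to handle the two cases separately, in each case reducing ``$\widetilde A$ is $\PT$ symmetric'' to an identity between the two systems of boundary conditions: the one defining $\dom\widetilde A$ and the one obtained by applying $\PT$ to it, using Lemma \ref{alpha_1}, relation \eqref{pt}. Since $\widetilde A$ and its image under $\PT$ are both restrictions of the same maximal operator $A^*$ (recall $A^*$ is $\PT$ symmetric by Lemma \ref{Jetztdoch}), $\PT$ symmetry is equivalent to $\PT\dom\widetilde A=\dom\widetilde A$. First I would record the effect of $\PT$ on a boundary condition: if $f\in\dom\widetilde A$ then $g=\PT f$ satisfies, by \eqref{pt}, $\alpha_1(g)=\overline{\beta_1(f)}$, $\alpha_2(g)=-\overline{\beta_2(f)}$, $\beta_1(g)=\overline{\alpha_1(f)}$, $\beta_2(g)=-\overline{\alpha_2(f)}$. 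Substituting this into the system that defines $\dom\widetilde A$ and conjugating yields the system that $\PT\dom\widetilde A$ satisfies; the claim is then that this new system cuts out the same subspace of $\cD_\mmm$ as the original one, and one uses the nondegeneracy/rank conditions (together with \eqref{nonzero} in the separated case, and with the full-rank argument for \eqref{Palma} as in the previous corollary) to turn ``same subspace'' into an equality of the coefficient matrices up to scalar.

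For part $(1)$, separated boundary conditions \eqref{separate}, applying the substitution above to the two separated equations interchanges the roles of the $\alpha$'s and $\beta$'s and replaces $(\xi,\alpha,\eta,\beta)$ by data that, after complex conjugation, is exactly the data appearing in $\dom\widetilde A^+$ as computed in \eqref{domT+sep}. Concretely, the equation $\alpha_1(f)\xi\cos\alpha-\alpha_2(f)\sin\alpha=0$ becomes, for $g=\PT f$, $\overline{\beta_1(g)}\,\xi\cos\alpha+\overline{\beta_2(g)}\sin\alpha=0$, i.e.\ $\beta_1(g)\,\overline\xi\cos\alpha+\beta_2(g)\sin\alpha=0$, which is precisely (the conjugate of) a defining equation of $\dom\widetilde A^+$ in \eqref{domT+sep}; similarly for the second equation. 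Hence $\PT\dom\widetilde A=\dom\widetilde A^+$, so $\PT\dom\widetilde A=\dom\widetilde A$ if and only if $\dom\widetilde A=\dom\widetilde A^+$, i.e.\ if and only if $\widetilde A$ is $\Pt$-selfadjoint. (This explains the clean statement in $(1)$: the separated structure is preserved by $\PT$, and $\PT$ maps $\widetilde A$ to $\widetilde A^+$ on the nose.)

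For part $(2)$, mixed boundary conditions, say \eqref{mix1}: for $f\in\dom\widetilde A$ one has $(\alpha_1(f),\alpha_2(f))^\top=M(\beta_1(f),\beta_2(f))^\top$ with $M=\begin{bmatrix}a&b\\c&d\end{bmatrix}$. Putting $g=\PT f$ and using \eqref{pt}, this reads $(\overline{\beta_1(g)},-\overline{\beta_2(g)})^\top=M(\overline{\alpha_1(g)},-\overline{\alpha_2(g)})^\top$; conjugating and moving the sign flips into a conjugation by $S:=\diag(1,-1)$ gives $(\beta_1(g),\beta_2(g))^\top=S\overline M S(\alpha_1(g),\alpha_2(g))^\top$, which is a mixed boundary condition of type \eqref{mix2} with matrix $S\overline M S$. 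Since \eqref{mix1} can equivalently be rewritten (when $M$ is invertible) as \eqref{mix2} with matrix $M^{-1}$, the subspaces $\PT\dom\widetilde A$ and $\dom\widetilde A$ coincide if and only if $S\overline M S$ and $M^{-1}$ define the same boundary condition, and by the full-rank argument (exactly as in the proof of the previous corollary, using two functions with linearly independent $(\alpha_1,\alpha_2)$ data, which exist because the boundary conditions are genuinely mixed, not separated) this forces $S\overline M S=\mu M^{-1}$ for some scalar $\mu$. Writing $M=\mu^{1/2}N$ this is $S\overline N S=N^{-1}$, i.e.\ $N^{-1}S\overline N S=I$; unwinding this matrix identity for a $2\times2$ matrix $N=\begin{bmatrix}\alpha&\beta\\\gamma&\delta\end{bmatrix}$ gives $\delta=\overline\alpha$, $\beta=\overline\beta$, $\gamma=\overline\gamma$ and $\det N=\alpha\overline\alpha-\beta\gamma=1$; absorbing $\mu^{1/2}=e^{i\varphi}$ yields exactly \eqref{mixpt}. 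The case \eqref{mix2} is symmetric. Finally I would note that the converse in both parts is immediate: if $M$ has the form \eqref{mixpt}, then $S\overline M S=e^{i\varphi}S\overline N S=e^{i\varphi}N^{-1}=e^{2i\varphi}M^{-1}$, so $\PT$ maps $\dom\widetilde A$ into itself, and applying $(\PT)^2=I$ gives equality.

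The main obstacle is the rank/linear-independence bookkeeping in part $(2)$: one must rule out the degenerate situation in which $\PT\dom\widetilde A$ and $\dom\widetilde A$ happen to coincide as sets even though $S\overline M S$ and $M^{-1}$ are not proportional — this is exactly the possibility that the two ``mixed'' conditions secretly collapse to the same pair of separated conditions, and excluding it requires the existence of two functions in $\dom\widetilde A$ whose boundary data $(\alpha_1,\alpha_2)$ are linearly independent, which in turn rests on $\widetilde A$ being a genuinely $2$-dimensional, genuinely mixed extension (cases $(i)$/$(ii)$ versus $(iii)$ in Section \ref{Two} cannot occur simultaneously). Translating the single matrix equation $N^{-1}S\overline N S=I$ into the three scalar conditions in \eqref{mixpt} is then a short direct computation.
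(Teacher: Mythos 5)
Your argument is correct and follows essentially the same route as the paper's proof: reduce $\PT$ symmetry to $\PT\dom\widetilde{A}=\dom\widetilde{A}$, push the boundary conditions through \eqref{pt} (identifying $\PT\dom\widetilde{A}$ with $\dom\widetilde{A}^{+}$ in the separated case, and deriving the matrix identity $S\overline{M}S\,M=I$ in the mixed case via the same full-rank and linear-independence arguments that exclude the $1$-dimensional and separated degenerations). The one blemish is the normalization at the end of part (2): the subspace equality forces $S\overline{M}S=M^{-1}$ exactly (so your scalar $\mu$ equals $1$), and the phase $e^{i\varphi}$ in \eqref{mixpt} comes from $\det M=e^{2i\varphi}$ rather than from $\mu^{1/2}$ --- the identity $S\overline{N}S=N^{-1}$ only forces $|\det N|=1$, not $\det N=1$, and unwinding it with $\det N=e^{2i\varphi}$ yields precisely \eqref{mixpt}.
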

\begin{proof}
Taking into account that $A^*$ is a $\PT$ symmetry we conclude that $\widetilde{A}$ is a $\PT$ symmetry if and only if
\begin{equation}\label{ptdom}
\PT\dom \widetilde{A}=\dom\widetilde{A}.
\end{equation}
$(1)$\ According to   \eqref{pt} the latter means that
$f \in \dom \widetilde{A}$ also satisfies the conditions
\begin{eqnarray*}
 {\beta_1(f)}\, \overline{\xi} \cos \alpha +  {\beta_2(f)}\,\sin \alpha&=&0,\\
   {\alpha_1(f)}\,  \overline{\eta} \cos \beta + {\alpha_2 (f)}\,  \sin \beta&=&0
  \end{eqnarray*}
which coincide with boundary conditions \eqref{domT+sep} for $\widetilde{A}^+$. Hence statement $(1)$  follows.

$(2)$\ Let boundary conditions of $\widetilde{A}$ be  not separated,  for instance,  let the domain of $\widetilde{A}$ satisfies \eqref{mix1}. According to \eqref{pt} the equality $\PT\dom \widetilde{A}=\dom \widetilde{A}$   is equivalent to \eqref{mix1} with an additionally condition:
\begin{equation}\label{Fuerth}
\begin{pmatrix}
     {\beta_1(f)}\\ {\beta_2(f)}
   \end{pmatrix}=\begin{bmatrix}
    \overline{a}&-\overline{b}\\
    -\overline{c}&\overline{d}
   \end{bmatrix}\begin{pmatrix}
     {\alpha_1(f)}\\
      {\alpha_2(f)}
   \end{pmatrix} \quad \mbox{ for all } f\in \dom \widetilde{A}.
\end{equation}
Assume that the matrix
\begin{equation}\label{JuniFuerth}
\begin{bmatrix}
    a&b\\
    c&d   \end{bmatrix}
\end{equation}
has a rank less than two. Then there exists $\eta \in \mathbb C$
with $c=\eta a$ and $d = \eta b$ and from \eqref{mix1}, \eqref{Fuerth}
we obtain for $f\in \dom \widetilde{A}$
$$
\alpha_2(f) =\eta \alpha_1 (f), \quad \beta_1(f) =
(\overline{a}-\overline{b}\eta)  \alpha_{1}(f)
\quad  \mbox{and} \quad \beta_2(f) =-\overline{\eta} \beta_{1}(f).
$$
But this implies that $\widetilde{A}$ is a $1$-dimensional extension
of $A$, a contradiction. Hence, the matrix in \eqref{JuniFuerth} has full rank.
Together with \eqref{mix1} it follows that for
$ f\in \dom \widetilde{A}$ we have
\begin{equation}\label{e1}
\begin{pmatrix}
     {\beta_1(f)}\\ {\beta_2(f)}
   \end{pmatrix}=\begin{bmatrix}
    \overline{a}&-\overline{b}\\
    -\overline{c}&\overline{d}
   \end{bmatrix}\begin{bmatrix}
    {a}& {b}\\
     {c}& {d}
   \end{bmatrix}\begin{pmatrix}{\beta_1(f)}\\ {\beta_2(f)}
   \end{pmatrix},
\end{equation}
There are two vectors $f,g\in\dom\widetilde{A}$ such that the vectors
\begin{equation}\label{Nuernberg}
\begin{pmatrix}
     {\beta_1(f)}\\
      {\beta_2(f)}
\end{pmatrix}\quad \mbox{and} \quad
\begin{pmatrix}{\beta_1(g)}\\ {\beta_2(g)}
   \end{pmatrix} \quad \mbox{are linearly independent.}
   \end{equation}
Indeed, assume that all such vectors are linearly dependent. Since
$\dom \widetilde{A}\neq \dom A$ and by \eqref{Sevilla} there is a vector $f_0\in\dom \widetilde{A}$ such that $|\beta_1(f_0)|+|\beta_2(f_0)|\neq 0.$ Then for each $f\in\dom\widetilde{A}$ there exists a number $\lambda(f) \in \mathbb C$ with
\begin{equation}\label{eqbeta}
  \begin{pmatrix}
     {\beta_1(f)}\\
      {\beta_2(f)}
\end{pmatrix}=\lambda(f)\,\begin{pmatrix}
     {\beta_1(f_0)}\\
      {\beta_2(f_0)}
\end{pmatrix}.
\end{equation}
and, from \eqref{mix1} and the fact that the matrix in \eqref{JuniFuerth}
has full rank, we deduce
\begin{equation}\label{eqalpha}
  \begin{pmatrix}
     {\alpha_1(f)}\\
      {\alpha_2(f)}
\end{pmatrix}=\lambda(f)\,\begin{pmatrix}
     {\alpha_1(f_0)}\\
      {\alpha_2(f_0)}
\end{pmatrix}.
\end{equation}
 Using \eqref{eqalpha} and \eqref{eqbeta} one can conclude that the
   functions $f$ from the domain of $\widetilde{A}$ satisfy the following system:
 $$
 \begin{cases}
   \alpha_1(f)\alpha_2(f_0)-\alpha_2(f)\alpha_1(f_0)=0,\\
   \beta_1(f)\beta_2(f_0)-\beta_2(f)\beta_1(f_0)=0,
 \end{cases}
 $$
 that is, the boundary conditions are separated, a contradiction.
 Hence there are two vectors $f$ and $g$ in $\dom \widetilde{A}$ such
 that \eqref{Nuernberg} holds. Then from \eqref{e1} it follows that
 $$
 \begin{bmatrix}
    \overline{a}&-\overline{b}\\
    -\overline{c}&\overline{d}
   \end{bmatrix}\,\begin{bmatrix}
    {a}& {b}\\
     {c}& {d}
   \end{bmatrix}=I.
 $$
 Therefore the matrix
 $$
 \begin{bmatrix}
    {a}& {b}\\
     {c}& {d}
   \end{bmatrix}
 $$
 is nondegenerate and we have $\Delta:=ad-bc\neq 0$ with $|\Delta|=1$. If we set $\varphi:=\frac{1}{2}\arg \Delta$ then
   $\Delta =e^{2i\varphi}$ and we obtain
  $$
\begin{bmatrix}
    {a}& {b}\\
     {c}& {d}
   \end{bmatrix}=\begin{bmatrix}
    \overline{a}&-\overline{b}\\
    -\overline{c}&\overline{d}
   \end{bmatrix}^{-1} =\begin{bmatrix}
    \overline{d}/\overline{\Delta}&\overline{b}/\overline{\Delta}\\
    \overline{c}/\overline{\Delta}&\overline{a}/\overline{\Delta}
   \end{bmatrix}.
$$
This implies that $e^{-i\varphi}a=\overline{e^{-i\varphi}d}$, and the numbers $e^{-i\varphi}b$ and $e^{-i\varphi}c$  are real.
 Set
 $$
 \alpha := e^{-i\varphi}a, \quad \beta:= e^{-i\varphi}b\quad
 \mbox{and} \quad \gamma := e^{-i\varphi}c.
 $$
 Then we have $|\alpha|^2 -\beta\gamma=1$ and  one can rewrite boundary conditions \eqref{mix1} for the $\PT$ symmetric operator $\widetilde{A}$ in the form \eqref{mixpt}.

For the converse statement assume that
for some $\alpha \in \mathbb C$, $\beta, \gamma \in \mathbb R$ with
 $|\alpha |^2-
\beta \gamma =1$ \eqref{mixpt} is satisfied. Then, according to
  \eqref{pt}, we obtain for  $f\in \dom \widetilde{A}$
  \begin{eqnarray*}
 \begin{bmatrix}
    \alpha&\beta\\
    \gamma&\overline{\alpha}
   \end{bmatrix}\begin{pmatrix}{\beta_1(\PT f)}\\ {\beta_2(\PT f)}
   \end{pmatrix}&=&
  \begin{bmatrix}
    \alpha&\beta\\
    \gamma&\overline{\alpha}
   \end{bmatrix}
   \begin{bmatrix}
    1&0\\
    0&-1
   \end{bmatrix}
   \begin{pmatrix}{ \overline{\alpha_1(f)}}\\ { \overline{\alpha_2(f)}}
   \end{pmatrix}
   \\[1ex]
  &=&
  \begin{bmatrix}
    \alpha&\beta\\
    \gamma&\overline{\alpha}
   \end{bmatrix}
  \begin{bmatrix}
    1&0\\
    0&-1
   \end{bmatrix}
   e^{-i\varphi}
   \begin{bmatrix}
    \overline{\alpha}&\beta\\
    \gamma&\alpha
   \end{bmatrix}
   \begin{pmatrix}{ \overline{\beta_1(f)}}\\ { \overline{\beta_2(f)}}
   \end{pmatrix}\\[1ex]
   &=&e^{-i\varphi}
   \begin{bmatrix}
    1 &0 \\
    0&-1
   \end{bmatrix}
   \begin{pmatrix}{ \overline{\beta_1(f)}}\\ { \overline{\beta_2(f)}}
   \end{pmatrix} = e^{-i\varphi}
   \begin{pmatrix}{\alpha_1(\PT f)}\\ {\alpha_2(\PT f)}
   \end{pmatrix}
  \end{eqnarray*}
and, hence, $\PT \dom \widetilde{A}\subset \dom \widetilde{A}$. Then,
by $\PT^2=I$, \eqref{ptdom} holds.

\ \\
The case of a domain given by \eqref{mix2} can be proved in a similar way.
In the reasoning one just has to changes the roles of  $\alpha_1$, $\alpha_2$
and $\beta_1$, $\beta_2$.
\end{proof}

In the following corollaries we will describe the situations when two out
of the three properties  $\PT$ symmetry, selfadjointness and
$\Pt$-selfadjointness are satisfied. Due to the fact that for an extension
$\widetilde{A}$ with  domain  \eqref{separate} $\PT$ symmetry is equivalent
to $\Pt$-selfadjointness by Proposition \ref{1412}, there is only one case to consider for separated
boundary conditions.
\begin{corollary}\label{SpandauNeu}
Let $\widetilde{A}$ has the domain  \eqref{separate}.
Then $\widetilde{A}$ is $\PT$ symmetric, selfadjoint and  $\Pt$-selfadjoint if and only if
$$
\xi=\eta=1 \quad \mbox{and}\quad \alpha+\beta=0 \ (\, \hspace*{-9pt}\mod{\pi}).
$$
 \end{corollary}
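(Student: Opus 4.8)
The plan is to combine the three earlier characterizations. We already know from the corollary preceding this one that, for $\widetilde{A}$ with domain \eqref{separate}, self-adjointness holds if and only if $\xi=\eta=1$, and from Proposition \ref{1412}(1) that $\PT$ symmetry is equivalent to $\Pt$-selfadjointness. So the three properties hold simultaneously precisely when $\xi=\eta=1$ \emph{and} $\widetilde{A}$ is $\Pt$-selfadjoint. It therefore suffices to read off, from the previous corollary on $\Pt$-selfadjointness, what the $\Pt$-selfadjointness condition becomes once we impose $\xi=\eta=1$.

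Concretely, I would argue as follows. Suppose first that $\widetilde{A}$ is $\PT$ symmetric, selfadjoint and $\Pt$-selfadjoint. Selfadjointness forces $\xi=\eta=1$. Now look at the $\Pt$-selfadjointness criterion: in case $(i)$ (when $\alpha$ or $\beta$ lies in $\{0,\pi/2,\pi,3\pi/2\}$) $\Pt$-selfadjointness is equivalent to $\alpha+\beta=0\ (\!\!\mod\pi)$; in case $(ii)$ (when $\alpha,\beta\notin\{0,\pi/2,\pi,3\pi/2\}$) it is equivalent to \eqref{1s} or \eqref{4s}. Since $\xi\eta=1$, alternative \eqref{4s} (which requires $\xi\eta=-1$) is excluded, and \eqref{1s} reduces to $\alpha+\beta=0\ (\!\!\mod\pi)$. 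Thus in every case the surviving condition is exactly $\xi=\eta=1$ together with $\alpha+\beta=0\ (\!\!\mod\pi)$.

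Conversely, assume $\xi=\eta=1$ and $\alpha+\beta=0\ (\!\!\mod\pi)$. By the self-adjointness corollary, $\xi=\eta=1$ gives $\widetilde{A}=\widetilde{A}^*$. For $\Pt$-selfadjointness: if $\alpha$ or $\beta$ is in $\{0,\pi/2,\pi,3\pi/2\}$ we are in case $(i)$ and $\alpha+\beta=0\ (\!\!\mod\pi)$ is precisely the stated condition; otherwise we are in case $(ii)$ and $\alpha+\beta=0\ (\!\!\mod\pi)$ together with $\xi\eta=1$ is condition \eqref{1s}. Either way $\widetilde{A}$ is $\Pt$-selfadjoint, and then by Proposition \ref{1412}(1) it is also $\PT$ symmetric. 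Hence all three properties hold.

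I do not expect any real obstacle here: the statement is purely a bookkeeping consolidation of the two immediately preceding corollaries plus Proposition \ref{1412}(1). The only point requiring the slightest care is making sure that the degenerate cases where $\alpha$ or $\beta$ hits a multiple of $\pi/2$ are handled by clause $(i)$ rather than $(ii)$, so that the single clean condition $\alpha+\beta=0\ (\!\!\mod\pi)$ covers both regimes, and that imposing $\xi\eta=1$ genuinely kills the branch \eqref{4s}. Once that is noted, the proof is a two-line citation of earlier results in each direction.
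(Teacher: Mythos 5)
Your proposal is correct and is exactly the intended derivation: the paper states this corollary without proof as an immediate combination of the selfadjointness criterion ($\xi=\eta=1$), the $\Pt$-selfadjointness criterion (cases $(i)$ and $(ii)$, with $\xi\eta=1$ ruling out \eqref{4s}), and Proposition \ref{1412}(1). Nothing is missing.
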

In the case of mixed boundary conditions, there are more cases.
\begin{corollary}\label{Spand}
Let $\widetilde{A}$ has the domain given by \eqref{mix1}
or \eqref{mix2}. Then
\begin{itemize}
\item[\rm (1)] $\widetilde{A}$ is $\PT$ symmetric, selfadjoint and  $\Pt$-selfadjoint if and only if
\begin{equation}\label{Spandau}
a,b,c,d \in \mathbb R, \quad a=d \quad \mbox{and}\quad a^2-bc=1.
\end{equation}
\item[\rm (2)] $\widetilde{A}$ is selfadjoint and  $\Pt$-selfadjoint if and only if \eqref{Spandau} holds. In this case $\widetilde{A}$ is also
    $\PT$ symmetric.
\item[\rm (3)] $\widetilde{A}$ is selfadjoint and  $\PT$ symmetric if and only if
for some $\varphi \in \mathbb R$
    $$
     e^{-i\varphi}a, e^{-i\varphi}b, e^{-i\varphi}c, e^{-i\varphi}d \in \mathbb R, 
\quad d= e^{2i\varphi}\overline{a} \quad \mbox{and}\quad ad-bc= e^{2i\varphi}.
    $$
\item[\rm (4)] $\widetilde{A}$ is  $\Pt$-selfadjoint and $\PT$ symmetric if and only if
\begin{equation*}
b,c \in \mathbb R, \quad d=\overline{a} \quad \mbox{and}\quad |a|^2-bc=1.
\end{equation*}
\end{itemize}
 \end{corollary}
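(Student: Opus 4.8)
The plan is to treat Corollary~\ref{Spand} as a simple bookkeeping exercise built on top of the three preceding results: the corollary giving $\widetilde A=\widetilde A^*$ for mixed boundary conditions (the condition $\Delta=e^{2i\varphi}$ together with $e^{-i\varphi}a,e^{-i\varphi}b,e^{-i\varphi}c,e^{-i\varphi}d\in\mathbb R$), the corollary giving $\widetilde A=\widetilde A^+$ for mixed boundary conditions (the condition $d=\overline a$, $b,c\in\mathbb R$), and Proposition~\ref{1412}(2) giving $\PT$ symmetry (existence of $\varphi$ with $\begin{bmatrix}a&b\\c&d\end{bmatrix}=e^{i\varphi}\begin{bmatrix}\alpha&\beta\\\gamma&\overline\alpha\end{bmatrix}$, $\beta,\gamma\in\mathbb R$, $|\alpha|^2-\beta\gamma=1$). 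Each of the four items is obtained by intersecting two of these three descriptions, so I would simply compute each intersection. Throughout I will use $\Delta=ad-bc$.

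For item~(3) I would start from selfadjointness plus $\PT$ symmetry. Write the $\PT$ condition as $a=e^{i\varphi}\alpha$, $b=e^{i\varphi}\beta$, $c=e^{i\varphi}\gamma$, $d=e^{i\varphi}\overline\alpha$ with $\beta,\gamma\in\mathbb R$ and $|\alpha|^2-\beta\gamma=1$; this already forces $\Delta=e^{2i\varphi}(|\alpha|^2-\beta\gamma)=e^{2i\varphi}$ and $d=e^{i\varphi}\overline\alpha=e^{2i\varphi}\overline{e^{-i\varphi}\alpha}=e^{2i\varphi}\overline a$. For selfadjointness I invoke the mixed-case corollary: there is $\psi\in\mathbb R$ with $\Delta=e^{2i\psi}$ and $e^{-i\psi}a,\dots,e^{-i\psi}d\in\mathbb R$. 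Since $\Delta=e^{2i\varphi}=e^{2i\psi}$ we may take $\psi=\varphi$ (changing $\varphi$ by $\pi$ only flips the overall sign of $\alpha,\beta,\gamma$, which is harmless), whence $e^{-i\varphi}a=\alpha$ etc.\ are all real and $d=e^{2i\varphi}\overline a$, $ad-bc=e^{2i\varphi}$ --- exactly the stated conditions. The converse is immediate: the listed conditions are precisely the mixed-case selfadjointness criterion, and setting $\alpha=e^{-i\varphi}a$, $\beta=e^{-i\varphi}b$, $\gamma=e^{-i\varphi}c$ (all real) one checks $e^{-i\varphi}d=\overline\alpha$ and $|\alpha|^2-\beta\gamma=e^{-2i\varphi}\Delta=1$, giving \eqref{mixpt}.

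Item~(4) is the intersection of $\Pt$-selfadjointness ($d=\overline a$, $b,c\in\mathbb R$) with $\PT$ symmetry. Given $d=\overline a$ and $b,c\in\mathbb R$ one has $\Delta=|a|^2-bc$, and I must show this equals $1$ precisely when $\widetilde A$ is in addition $\PT$ symmetric. If $\PT$ symmetry holds, \eqref{mixpt} gives $\Delta=e^{2i\varphi}$ with $|\Delta|=1$; but $\Delta=|a|^2-bc$ is real, so $\Delta=\pm1$. Here I would rule out $\Delta=-1$: if $\Delta=-1$ then $e^{2i\varphi}=-1$, so $e^{i\varphi}=\pm i$, and \eqref{mixpt} forces $a=\pm i\alpha$ with $\alpha$ such that $\overline\alpha=e^{-i\varphi}d=\mp i d=\mp i\overline a=\overline{\pm i a}=\overline{a}\cdot\overline{(\pm i)}\cdot(\pm i)/(\pm i)$; cleaner: from $a=e^{i\varphi}\alpha$ and $d=e^{i\varphi}\overline\alpha$ together with $d=\overline a$ we get $e^{i\varphi}\overline\alpha=\overline{e^{i\varphi}\alpha}=e^{-i\varphi}\overline\alpha$, so either $\overline\alpha=0$ or $e^{2i\varphi}=1$, i.e.\ $\Delta=1$; the case $\alpha=0$ gives $a=d=0$ and $\Delta=-bc$ with $|\alpha|^2-\beta\gamma=-\beta\gamma=1$, so $bc=e^{2i\varphi}\beta\gamma=-e^{2i\varphi}=1$ when $\Delta=-1$, contradicting $\Delta=-bc=-1\ne1$ --- wait, this needs care, so I would instead argue directly that $\Delta=-1$ combined with $d=\overline a$, $b,c\in\mathbb R$ contradicts \eqref{mixpt} by comparing the $(1,1)$ and $(2,2)$ entries as above. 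Conversely, if $b,c\in\mathbb R$, $d=\overline a$, $|a|^2-bc=1$, then setting $\varphi=0$, $\alpha=a$, $\beta=b$, $\gamma=c$ we get exactly \eqref{mixpt}, so $\widetilde A$ is $\PT$ symmetric; and $d=\overline a$, $b,c\in\mathbb R$ is the $\Pt$-selfadjointness criterion. Items~(1) and~(2) then follow by combining: (2) is just (4) together with the observation that the $\Pt$-selfadjointness criterion $d=\overline a$, $b,c\in\mathbb R$ plus selfadjointness (mixed-case criterion with some $\varphi$) forces $e^{-i\varphi}a\in\mathbb R$ and $e^{i\varphi}\overline a=\overline a\Rightarrow e^{2i\varphi}=1$ or $a=0$, yielding $a,b,c,d\in\mathbb R$, $a=d$, $a^2-bc=\Delta=1$ --- i.e.\ \eqref{Spandau} --- and then $\PT$ symmetry comes for free by (4); (1) adds ``$\PT$ symmetric'' to (2) but that is automatic, so (1) and (2) have the same criterion \eqref{Spandau}. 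The main obstacle is the little sign-discrimination in (4) (excluding $\Delta=-1$): one has to use \emph{all three} of $d=\overline a$, $b,c\in\mathbb R$, and the precise form of \eqref{mixpt}, rather than just $|\Delta|=1$, to pin down $\Delta=+1$; everything else is routine matching of boundary-condition matrices.
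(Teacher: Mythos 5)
Your overall strategy is exactly the one the paper intends: the corollary is stated without a separate proof precisely because each item is the intersection of the three criteria already established for mixed boundary conditions, namely \eqref{a=a*} for selfadjointness, \eqref{mixp} for $\Pt$-selfadjointness, and \eqref{mixpt} for $\PT$ symmetry. Your treatments of items (1), (2) and (3) are correct and complete.

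The one place where your argument is not actually closed is the exclusion of $\Delta=-1$ in item (4). Comparing the $(1,1)$ and $(2,2)$ entries of \eqref{mixpt} with $d=\overline a$ gives $e^{i\varphi}\overline\alpha=e^{-i\varphi}\overline\alpha$, hence $\alpha=0$ or $e^{2i\varphi}=1$; this comparison alone does \emph{not} dispose of the case $\alpha=0$, and your inline attempt to do so ends in a consistent (non-contradictory) set of equations, as you yourself noticed. The missing step is the following: if $\alpha=0$ then $|\alpha|^2-\beta\gamma=1$ forces $\beta\gamma=-1$, so $\beta\neq0$, and since $e^{2i\varphi}=-1$ means $e^{i\varphi}=\pm i$, the entry $b=e^{i\varphi}\beta=\pm i\beta$ is purely imaginary and nonzero, contradicting $b\in\mathbb R$ from \eqref{mixp}. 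Your closing remark correctly identifies that $b,c\in\mathbb R$ must enter the argument, but the contradiction itself is never derived; with this one line added, item (4) and hence the whole proof is complete.
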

 \begin{remark}
 Corollary \ref{SpandauNeu} and Corollary \ref{Spand}, item (2), are already contained 
in \cite[Theorem 4 and Theorem 5]{AT10}. Here we use the opportunity to point
 out that the statement in \cite[Theorem 5]{AT10} is slightly incorrect.
Obviously, \eqref{Spandau} implies that the corresponding extension  $\widetilde{A}$
is $\PT$ symmetric and  selfadjoint (and at the same time $\Pt$-selfadjoint),
but the converse is not true: There are $\PT$ symmetric and  selfadjoint
 extensions  $\widetilde{A}$ which do not satisfy  \eqref{Spandau}, cf.\ 
Corollary \ref{Spand}, item (3). Hence, the correct version of \cite[Theorem 5]{AT10}
 is  Corollary \ref{Spand}, item (2).
 \end{remark}

\section{$3$-dimensional extensions} \label{Three}

Let $\widetilde{A}$ be a $3$-dimensional extension of $A$. Then there are numbers $a,b,c,d$, $|a|+|b|+|c|+|d|\ne 0$  such that
\begin{equation}\label{eq2}
\dom\widetilde{A}=\left\{
f\in\dom\cD_{\mmm}\mid a\,\alpha_1(f)+b\,\alpha_2(f)=c\,\beta_1(f)+d\,\beta_2(f)
\right\} .
\end{equation}

\begin{proposition}\label{p1}
Let $\widetilde{A}$ be a $3$-dimensional extension of $A$ with domain \eqref{eq2}. Then the following statements hold.
\begin{enumerate}
\item[$(1)$] Let $a\neq 0$. Then
\begin{equation}\label{3a}
 \dom\widetilde{A}^*=\left\{
 g\in\cD_{\mmm}\left|
 \begin{array}{rcl}
  \overline{a}\, \alpha_1(g)+\overline{b}\, \alpha_2(g)&=&0,\\
  &&\\
 \begin{bmatrix}
    0&-\overline{d}/ \overline{a}\\
    0&\overline{c}/ \overline{a}
  \end{bmatrix}\,\begin{pmatrix}
    \alpha_1(g)\\
    \alpha_2(g)
  \end{pmatrix}&=& \begin{pmatrix}
    \beta_1(g)\\
    \beta_2(g)
  \end{pmatrix}
 \end{array}
 \right.
 \right\}.
\end{equation}

\begin{equation}\label{3a+}
 \dom\widetilde{A}^+=\left\{
 h\in\cD_{\mmm} \left|
 \begin{array}{rcl}
  \overline{a}\, \beta_1(h)-\overline{b}\, \beta_2(h)&=&0,\\
  &&\\
 \begin{bmatrix}
    0&\overline{d}/ \overline{a}\\
    0&\overline{c}/ \overline{a}
  \end{bmatrix}\,\begin{pmatrix}
    \beta_1(h)\\
    \beta_2(h)
  \end{pmatrix}&=& \begin{pmatrix}
    \alpha_1(h)\\
    \alpha_2(h)
  \end{pmatrix}
 \end{array}
 \right.
 \right\}.
\end{equation}

\item[$(2)$] Let $b\neq 0$.
Then
\begin{equation}\label{3b}
 \dom\widetilde{A}^*=\left\{
 g\in\cD_{\mmm} \left|
 \begin{array}{rcl}
  \overline{a}\, \alpha_1(g)+\overline{b}\, \alpha_2(g)&=&0,\\
  &&\\
 \begin{bmatrix}
    \overline{d}/ \overline{b}&0\\
    -\overline{c}/ \overline{b}&0
  \end{bmatrix}\,\begin{pmatrix}
    \alpha_1(g)\\
    \alpha_2(g)
  \end{pmatrix}&=& \begin{pmatrix}
    \beta_1(g)\\
    \beta_2(g)
  \end{pmatrix}
 \end{array}
 \right.
 \right\}.
\end{equation}

\begin{equation}\label{3b+}
 \dom\widetilde{A}^+=\left\{
 h\in\cD_{\mmm} \left|
 \begin{array}{rcl}
  \overline{a}\, \beta_1(h)-\overline{b}\, \beta_2(h)&=&0,\\
  &&\\
 \begin{bmatrix}
    \overline{d}/ \overline{b}&0\\
    \overline{c}/ \overline{b}&0
  \end{bmatrix}\,\begin{pmatrix}
    \beta_1(g)\\
    \beta_2(g)
  \end{pmatrix}&=& \begin{pmatrix}
    \alpha_1(g)\\
    \alpha_2(g)
  \end{pmatrix}
 \end{array}
 \right.
 \right\}.
\end{equation}

\item[$(3)$]\ Let $c\neq 0$. Then
\begin{equation}\label{3c}
 \dom\widetilde{A}^*=\left\{
 g\in\cD_{\mmm} \left|
 \begin{array}{rcl}
  \overline{c}\, \beta_1(g)+\overline{d}\, \beta_2(g)&=&0,\\
  &&\\
 \begin{bmatrix}
    0&-\overline{b}/ \overline{c}\\
    0&\overline{a}/ \overline{c}
  \end{bmatrix}\, \begin{pmatrix}
    \beta_1(g)\\
    \beta_2(g)
  \end{pmatrix}&=&\begin{pmatrix}
    \alpha_1(g)\\
    \alpha_2(g)
  \end{pmatrix}
 \end{array}
 \right.
 \right\}.
\end{equation}

\begin{equation}\label{3c+}
 \dom\widetilde{A}^+=\left\{
 h\in\cD_{\mmm}\left|
 \begin{array}{rcl}
  \overline{c}\, \alpha_1(h)-\overline{d}\, \alpha_2(h)&=&0,\\
  &&\\
 \begin{bmatrix}
    0&\overline{b}/ \overline{c}\\
    0&\overline{a}/ \overline{c}
  \end{bmatrix}\, \begin{pmatrix}
    \alpha_1(h)\\
    \alpha_2(h)
  \end{pmatrix}&=&\begin{pmatrix}
    \beta_1(h)\\
    \beta_2(h)
  \end{pmatrix}
 \end{array}
 \right.
 \right\}.
\end{equation}

\item[$(4)$]\ Let $d\neq 0$. Then
\begin{equation}\label{3d}
 \dom\widetilde{A}^*=\left\{
 g\in\cD_{\mmm} \left|
 \begin{array}{rcl}
  \overline{c}\, \beta_1(g)+\overline{d}\, \beta_2(g)&=&0,\\
  &&\\
 \begin{bmatrix}
    \overline{b}/ \overline{d}&0\\
    -\overline{a}/ \overline{d}&0
  \end{bmatrix}\, \begin{pmatrix}
    \beta_1(g)\\
    \beta_2(g)
  \end{pmatrix}&=&\begin{pmatrix}
    \alpha_1(g)\\
    \alpha_2(g)
  \end{pmatrix}
 \end{array}
 \right.
 \right\}.
\end{equation}

\begin{equation}\label{3d+}
 \dom\widetilde{A}^+=\left\{
 h\in\cD_{\mmm} \left|
 \begin{array}{rcl}
  \overline{c}\, \alpha_1(h)-\overline{h}\, \alpha_2(h)&=&0,\\
  &&\\
 \begin{bmatrix}
    \overline{b}/ \overline{d}&0\\
    -\overline{a}/ \overline{d}&0
  \end{bmatrix}\, \begin{pmatrix}
    \alpha_1(h)\\
    \alpha_2(h)
  \end{pmatrix}&=&\begin{pmatrix}
    \beta_1(h)\\
    \beta_2(h)
  \end{pmatrix}
 \end{array}
 \right.
 \right\}.
\end{equation}

\end{enumerate}
\end{proposition}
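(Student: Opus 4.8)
The plan is to reduce the whole proposition to a single linear-algebra computation in $\mathbb C^4$. Introduce the coordinate map $\Theta\colon \cD_{\mmm} \to \mathbb C^4$, $\Theta(f) = (\alpha_1(f),\alpha_2(f),\beta_1(f),\beta_2(f))^\top$. By Lemma \ref{HornoEnSeville} this map is surjective, and by \eqref{Sevilla} its kernel equals $\dom A$; hence, writing $\Psi$ for the linear functional $\Psi(\alpha_1,\alpha_2,\beta_1,\beta_2) := a\alpha_1 + b\alpha_2 - c\beta_1 - d\beta_2$ on $\mathbb C^4$, the domain in \eqref{eq2} is exactly $\dom\widetilde A = \Theta^{-1}(\ker\Psi)$. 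Since $|a|+|b|+|c|+|d|\neq0$ we have $\Psi\neq0$, so $\dim\ker\Psi = 3$.

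First I would apply Corollary \ref{c1}: a function $g\in\cD_{\mmm}$ lies in $\dom\widetilde A^*$ if and only if the linear functional $\Phi_g(f) := \overline{\alpha_2(g)}\,\alpha_1(f) - \overline{\alpha_1(g)}\,\alpha_2(f) - \overline{\beta_2(g)}\,\beta_1(f) + \overline{\beta_1(g)}\,\beta_2(f)$ vanishes on $\dom\widetilde A$. By surjectivity of $\Theta$, this is the same as the induced functional on $\mathbb C^4$ vanishing on the hyperplane $\ker\Psi$, which happens precisely when $\Phi_g = \mu\,\Psi$ on $\mathbb C^4$ for some $\mu = \mu(g)\in\mathbb C$. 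Matching the four coefficients and conjugating (with $\nu := \overline\mu$) gives $\alpha_1(g) = -\nu\,\overline b$, $\alpha_2(g) = \nu\,\overline a$, $\beta_1(g) = -\nu\,\overline d$, $\beta_2(g) = \nu\,\overline c$.

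Next I would split into the four cases. In case $(1)$, where $a\neq0$, I eliminate $\nu = \alpha_2(g)/\overline a$ from the four relations; this turns them into exactly the three conditions in \eqref{3a}. The converse is immediate: any $g$ satisfying \eqref{3a} satisfies $\Phi_g = \overline\nu\,\Psi$ with $\nu := \alpha_2(g)/\overline a$, hence $\Phi_g$ vanishes on $\dom\widetilde A$ and $g\in\dom\widetilde A^*$ by Corollary \ref{c1}. Cases $(2)$, $(3)$, $(4)$ are handled in exactly the same way, eliminating $\nu$ through $b$, $c$, $d$, respectively, and yield \eqref{3b}, \eqref{3c}, \eqref{3d}. (As a built-in check against Proposition \ref{ext}, the three linear conditions cutting out $\dom\widetilde A^*$ in each case are visibly independent, so $\widetilde A^*$ is a $1$-dimensional bi-extension of $A$.) For the Krein-space adjoint I would use $\dom\widetilde A^+ = \cP\,\dom\widetilde A^*$: a function $h$ lies in $\dom\widetilde A^+$ iff $\cP h\in\dom\widetilde A^*$, and substituting $\alpha_1(\cP h) = \beta_1(h)$, $\beta_1(\cP h) = \alpha_1(h)$, $\alpha_2(\cP h) = -\beta_2(h)$, $\beta_2(\cP h) = -\alpha_2(h)$ from \eqref{p} into \eqref{3a}--\eqref{3d} turns them directly into \eqref{3a+}--\eqref{3d+}.

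Apart from the routine conjugate- and sign-bookkeeping replicated over the four symmetric cases, the only genuine step is the implication ``$\Phi_g$ vanishes on $\dom\widetilde A$ $\Longrightarrow$ $\Phi_g$ is a scalar multiple of $\Psi$''; this rests on the surjectivity of $\Theta$ (Lemma \ref{HornoEnSeville}) together with the fact that $\dom\widetilde A$ has codimension one in $\cD_{\mmm}$ modulo $\dom A$, which is precisely where the hypothesis that $\widetilde A$ is a $3$-dimensional extension enters.
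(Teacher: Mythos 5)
Your proposal is correct and follows essentially the same route as the paper: both arguments rest on Corollary \ref{c1} together with the surjectivity of the boundary map from Lemma \ref{HornoEnSeville}, and both obtain $\dom\widetilde{A}^+$ from $\dom\widetilde{A}^*$ via $\cP$ and the relations \eqref{p}. The only cosmetic difference is that you package the coefficient comparison as ``a functional vanishing on the hyperplane $\ker\Psi$ must be a scalar multiple of $\Psi$,'' whereas the paper extracts the same four relations by evaluating the identity \eqref{w} on explicit test functions supplied by Lemma \ref{HornoEnSeville}.
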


\begin{proof}
Let us prove $(1)$. The others one can be shown in a similar manner.

Since $a\neq 0$ we can express $\alpha_1(f)$ for $f \in \dom\widetilde{A}$:
$$
\alpha_1(f)=-\frac{b}{a}\ \alpha_2(f)+\frac{c}{a}\ \beta_1(f)+
\frac{d}{a}\ \beta_2(f).
$$
Then, by \eqref{w}, $g\in \dom  \widetilde{A}^*$ if and only if
$$
\beta_2(f)\left(\overline{\beta_1(g)}+\frac{d}{a}\ \overline{\alpha_2(g)} \right) - \beta_1(f)\left(\overline{\beta_2(g)}-\frac{c}{a}\ \overline{\alpha_2(g)} \right)=\alpha_2(f)\left(\overline{\alpha_1(g)}+\frac{b}{a}\ \overline{\alpha_2(g)} \right)
$$
for all $f\in \dom  \widetilde{A}$.
Then by Lemma \ref{HornoEnSeville}, there exists $f\in \dom \widetilde{A}$
such that
\begin{eqnarray*}
\begin{split} \alpha_1(f)&= -\frac{b}{a},\\
\beta_1(f)&= 0,
\end{split}
\quad & \quad
\begin{split}
\alpha_2(f)&= 1 , \\
 \beta_2(f)&=0
\end{split}
\end{eqnarray*}
and, hence, $g\in \dom  \widetilde{A}^*$  has to satisfy
$$
\overline{\alpha_1(g)}+\frac{b}{a}\ \overline{\alpha_2(g)} =0.
$$
In a similar way, we obtain
$$
\overline{\beta_1(g)}+\frac{d}{a}\ \overline{\alpha_2(g)} =0
\quad \mbox{and} \quad
\overline{\beta_2(g)}-\frac{c}{a}\ \overline{\alpha_2(g)} =0
$$
and  \eqref{3a} is proved. For a proof of \eqref{3a+} we use the relation $\dom\widetilde{A}^+=\Pt\dom\widetilde{A}^*$ and Lemma \ref{alpha_1}.
\end{proof}

\begin{proposition}\label{pr}
The $3$-dimensional extension $\widetilde{A}$ with domain \eqref{eq2}  is a $\PT$ symmetry if and only if
\begin{equation}\label{eq3}
|a|=|c|,\quad |b|=|d|,\quad {\mbox and }\quad  a\overline{d}+b\overline{c}=0.
\end{equation}
\end{proposition}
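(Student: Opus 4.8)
The plan is to reduce the $\PT$ symmetry of $\widetilde A$ to a proportionality relation between two covectors in $\mathbb C^4$. Since $\widetilde A\subset A^*$ and $A^*$ is $\PT$ symmetric by Lemma \ref{Jetztdoch}, exactly the observation recorded in \eqref{ptdom} in the proof of Proposition \ref{1412} shows that $\widetilde A$ is $\PT$ symmetric if and only if $\PT\dom\widetilde A=\dom\widetilde A$, and since $(\PT)^2=I$ this is equivalent to the single inclusion $\PT\dom\widetilde A\subseteq\dom\widetilde A$.

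First I would rewrite the condition $\PT f\in\dom\widetilde A$ as a boundary condition on $f$ itself. Substituting the identities \eqref{pt} of Lemma \ref{alpha_1} into \eqref{eq2} applied to $\PT f$ yields
$$
a\,\overline{\beta_1(f)}-b\,\overline{\beta_2(f)}-c\,\overline{\alpha_1(f)}+d\,\overline{\alpha_2(f)}=0 ,
$$
and, taking complex conjugates, this is equivalent to
$$
-\overline{c}\,\alpha_1(f)+\overline{d}\,\alpha_2(f)+\overline{a}\,\beta_1(f)-\overline{b}\,\beta_2(f)=0 .
$$
By Lemma \ref{HornoEnSeville} the vector $\bigl(\alpha_1(f),\alpha_2(f),\beta_1(f),\beta_2(f)\bigr)$ ranges over all of $\mathbb C^4$, while $\dom\widetilde A$ consists of those $f$ for which this vector is annihilated by the nonzero covector $(a,b,-c,-d)$. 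Hence $\PT\dom\widetilde A\subseteq\dom\widetilde A$ holds precisely when every vector annihilated by $(a,b,-c,-d)$ is also annihilated by $(-\overline c,\overline d,\overline a,-\overline b)$; since the kernel of a nonzero linear functional on $\mathbb C^4$ is contained in that of another exactly when the two are proportional, this means there is $\lambda\in\mathbb C$ with
$$
\overline c=-\lambda a,\qquad \overline d=\lambda b,\qquad \overline a=-\lambda c,\qquad \overline b=\lambda d .
$$

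It then remains to verify that the existence of such a $\lambda$ is equivalent to \eqref{eq3}. For the forward implication, $\lambda\neq 0$ (otherwise $a=b=c=d=0$, contradicting $|a|+|b|+|c|+|d|\neq0$); the pair $\overline c=-\lambda a$, $\overline a=-\lambda c$ forces $|a|=|c|$ (with $a=0$ exactly when $c=0$), the pair $\overline d=\lambda b$, $\overline b=\lambda d$ forces $|b|=|d|$, and $a\overline d+b\overline c=\lambda ab-\lambda ab=0$; moreover $|\lambda|=1$ as soon as one of $a,c$ (respectively $b,d$) is nonzero. Conversely, given \eqref{eq3}: if $a\neq0$ one sets $\lambda:=-\overline c/a$, which has modulus $1$ because $|a|=|c|$, and checks the four equalities by direct substitution — the equality $\overline d=\lambda b$ being just a rearrangement of $a\overline d+b\overline c=0$, and $\overline b=\lambda d$ following from $\overline d=\lambda b$ by conjugation using $|\lambda|=1$; if $a=0$ then $c=0$ and necessarily $b\neq0$ (else all coefficients vanish), and $\lambda:=\overline d/b$ works by the same computation. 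The only genuinely delicate point is keeping track of the vanishing patterns of $a,b,c,d$ in this final case distinction; everything else is routine substitution.
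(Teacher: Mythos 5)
Your proof is correct. The first half coincides with the paper's reduction: both you and the authors use Lemma \ref{Jetztdoch} together with $(\PT)^2=I$ to reduce $\PT$ symmetry to the inclusion $\PT\dom\widetilde A\subseteq\dom\widetilde A$, and both translate $\PT f\in\dom\widetilde A$ via \eqref{pt} into the conjugated boundary condition \eqref{eq4}. Where you diverge is in establishing the equivalence of ``\eqref{eq2} implies \eqref{eq4}'' with \eqref{eq3}. The paper argues in two separate directions: sufficiency by a case analysis ($a=c=0$, $b=d=0$, $abcd\neq0$) with an algebraic rearrangement of the two conditions, and necessity by producing test functions $f_1,f_2,f_3$ in $\dom\widetilde A$ with prescribed vanishing patterns of the boundary functionals. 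You instead invoke Lemma \ref{HornoEnSeville} once, observe that the boundary map is onto $\mathbb C^4$, and reduce everything to the statement that the kernel of one nonzero linear functional on $\mathbb C^4$ is contained in the kernel of another exactly when the two covectors are proportional; the equivalence of the resulting four scalar relations with \eqref{eq3} is then a self-contained computation. This buys a unified treatment of both implications and eliminates the paper's case distinctions on the coefficients (replacing them by the single, and correctly handled, distinction $a\neq0$ versus $a=0$ in the converse); the paper's version, in exchange, stays closer to the boundary-condition formalism used throughout Sections \ref{Two}--\ref{Four} and makes the role of the separated test functions explicit. Both arguments rest on the same surjectivity fact, which the paper uses only implicitly when asserting the existence of $f_1,f_2,f_3$.
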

\begin{proof}
We assume that the relation \eqref{eq3} holds and show that  $\widetilde{A}$ is a $\PT$-symmetry, or, what is equivalent (see Lemma \ref{Jetztdoch}), $\PT \dom  \widetilde{A}=\dom
 \widetilde{A}$. Since $(\PT)^2=I$ it is sufficient to show that
 $\PT \dom  \widetilde{A}\subset \dom
 \widetilde{A}$, that is, \eqref{eq2} implies for $f\in \dom  \widetilde{A}$
 $$
 a\,\alpha_1(\PT f)+b\,\alpha_2(\PT f)=c\,\beta_1(\PT f)+d\,\beta_2(\PT f),
 $$
 or, equivalently (see Lemma \ref{alpha_1})

 \begin{equation}\label{eq4}
 \overline{c}\,\alpha_1(f)-\overline{d}\,\alpha_2(f)=\overline{a}\,\beta_1(f)
 -\overline{b}\,\beta_2(f).
 \end{equation}
Consider 3 cases:
\begin{enumerate}
 \item[(i)]\ $a=c=0$.  Then from $|b|=|d|$ it follows directly that \eqref{eq3} implies \eqref{eq4}.
 \item[(ii)]\ $b=d=0$. Analogously to $(i)$, from $|a|=|c|$ it follows directly  that \eqref{eq3} implies \eqref{eq4}.
 \item[(iii)]\ $abcd\not= 0$. In this case one can rewrite \eqref{eq3} and \eqref{eq4} in forms \eqref{eq5} and \eqref{eq6} respectively:
  \begin{eqnarray}
   -\frac{a}{b}\,(\alpha_1(f)-\frac{c}{a}\,\beta_1(f))&=&\alpha_2(f)-\frac{d}{b}\,
   \beta_2(f),\label{eq5}\\
   &&\nonumber \\
   \frac{\overline{c}}{\overline{d}}\,(\alpha_1(f)-
   \frac{\overline{a}}{\overline{c}}\,
   \beta_1(f))&=&\alpha_2(f)-\frac{\overline{b}}{\overline{d}}\,
   \beta_2(f).\label{eq6}
  \end{eqnarray}
\end{enumerate}
Now from \eqref{eq3} it follows that \eqref{eq5} implies \eqref{eq6}. Therefore $\widetilde{A}$ is a $\PT$ symmetry.\\

To prove the  converse assume that both \eqref{eq2} and \eqref{eq4} hold.
 Then (cf.\ Lemma \ref{Jetztdoch}) there exists a function $f_1\in\dom \widetilde{A}$ with $\alpha_1(f_1)=\beta_1(f_1)=0$ and
 $|\alpha_2(f_1)|+|\beta_2(f_1)|\ne 0$ and we have
\begin{eqnarray*}
b\,\alpha_2(f_1)- {d}\,
   \beta_2(f_1)&=&0,\\
 &&\\
\overline{d}\,\alpha_2(f_1)- {\overline{b}} \,
   \beta_2(f_1)&=&0.
\end{eqnarray*}
Since at least one of the numbers $\alpha_2(f_1),\ \beta_2(f_1) $ is nonzero we have   $|b|=|d|$.
Analogously using a function $f_2\in\dom \widetilde{A}$ such that $\alpha_2(f_2)=\beta_2(f_2)=0$  and $|\alpha_1(f_2)|+|\beta_1(f_2)|\ne 0$ we obtain $|a|=|c|$.\\

If $abcd=0$ the equality $a\overline{d}+b\overline{c}=0$ is trivial.\\

Let $abcd\not =0$. Consider a vector $f_3\in\dom \widetilde{A}$  such that
${\alpha_1(f_3)-
   \frac{c}{a}\,
   \beta_1(f_3)\not =0}$ and
${\alpha_2(f_3)-\frac{d}{b}\,
   \beta_2(f_3)\not =0}$. Then from \eqref{eq5} and \eqref{eq6} it follows that ${a\overline{d}+b\overline{c}=0}$.
\end{proof}

The operator  $\widetilde{A}$ is a $3$-dimensional extension of $A$ but
the kernel of $A^*-\lambda$ for non-real $\lambda$ equals $2$, see Lemma \ref{ArnstadtSued}, and we obtain
$\mathbb C \setminus \mathbb R \subset \sigma_p(\widetilde{A})$. Hence
the resolvent set of $\widetilde{A}$ is empty and the following
theorem is shown.
\begin{theorem}\label{pr2}
Let $\widetilde{A}$ be a $3$-dimensional extension of $A$ with domain \eqref{eq2}. Then
$$
\sigma(\widetilde{A}) = \mathbb C.
$$
In particular (cf.\ Proposition \ref{pr}), there are $\PT$ symmetric
 $3$-dimensional extensions of $A$ with empty resolvent set.
\end{theorem}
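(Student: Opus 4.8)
The plan is to exploit the mismatch between the ``size'' of a $3$-dimensional extension and the two-dimensional defect spaces of $A$. Concretely, for any $\lambda\in\dC\setminus\dR$ one knows from Lemma \ref{ArnstadtSued} that $\dim\ker(A^*-\lambda)=2$. Since $\widetilde{A}$ is a $3$-dimensional extension of $A$ (so $\dom\widetilde{A}$ is obtained from $\dom A$ by imposing a single scalar boundary condition \eqref{eq2}), while $\dom A^*=\cD_\mmm$ is obtained by imposing none, the space $\ker(A^*-\lambda)$ — which has dimension $2$ — cannot be annihilated by a single linear functional. Hence $\ker(A^*-\lambda)\cap\dom\widetilde{A}\neq\{0\}$, which shows $\lambda\in\sigma_p(\widetilde A)$.

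The key steps, in order, would be: \emph{(i)} Fix $\lambda\in\dC\setminus\dR$ and recall $\ker(A^*-\lambda)$ is spanned by two linearly independent solutions $\phi_1,\phi_2$ of $\tau_q y=\lambda y$ lying in $\cD_\mmm$ (limit circle at both ends). \emph{(ii)} Observe that the defining condition in \eqref{eq2} for membership of $f$ in $\dom\widetilde{A}$ is a single homogeneous linear equation, namely $\ell(f):=a\,\alpha_1(f)+b\,\alpha_2(f)-c\,\beta_1(f)-d\,\beta_2(f)=0$, in the four boundary functionals $\alpha_1,\alpha_2,\beta_1,\beta_2$, which are linear on $\cD_\mmm$. \emph{(iii)} Restricting $\ell$ to the two-dimensional space $\span\{\phi_1,\phi_2\}$ gives a linear functional on a two-dimensional space, hence has a nontrivial kernel: there is $0\neq\psi\in\span\{\phi_1,\phi_2\}$ with $\ell(\psi)=0$, i.e.\ $\psi\in\dom\widetilde{A}$ and $\widetilde{A}\psi=A^*\psi=\lambda\psi$. \emph{(iv)} Conclude $\dC\setminus\dR\subseteq\sigma_p(\widetilde{A})$, so $\rho(\widetilde{A})=\emptyset$; since $\sigma(\widetilde A)$ is closed and already contains $\dC\setminus\dR$, it must equal all of $\dC$. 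The final sentence about $\PT$ symmetric examples follows by combining with Proposition \ref{pr}: choose, e.g., $a=c=1$, $b=d=0$, which satisfies \eqref{eq3}, so the corresponding extension is $\PT$ symmetric and has empty resolvent set.

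There is essentially no obstacle here: the argument is the elementary observation that a single scalar constraint cannot pin down a two-dimensional subspace. The one point deserving a word of care is \emph{(iii)}, the well-definedness step — one should note that the boundary functionals $\alpha_i,\beta_i$ are genuinely defined on all of $\cD_\mmm$ (which is guaranteed by the limit-circle hypothesis and the existence of the limits $[f,g]_{\pm\infty}$ recorded after \eqref{Wolfsburg}), so that $\ell$ makes sense on $\span\{\phi_1,\phi_2\}\subseteq\cD_\mmm$. It is worth remarking that the same reasoning applies verbatim to the $1$-dimensional extensions of Section \ref{Four}, where the boundary conditions cut the defect down by three constraints from a four-dimensional ambient space, still leaving the two-dimensional $\ker(A^*-\lambda)$ only partially constrained; but for the present theorem the $3$-dimensional case alone is what is asserted.
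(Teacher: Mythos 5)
Your argument is correct and is essentially the paper's own proof: the authors likewise observe that $\dom\widetilde{A}$ has codimension one in $\cD_\mmm$ while $\dim\ker(A^*-\lambda)=2$ for nonreal $\lambda$ (Lemma \ref{ArnstadtSued}), so $\dC\setminus\dR\subset\sigma_p(\widetilde{A})$ and the closed spectrum is all of $\dC$, with the $\PT$ symmetric examples supplied by Proposition \ref{pr}. Your write-up merely makes the dimension count and the closedness step explicit, and adds a concrete choice of coefficients satisfying \eqref{eq3}.
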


\section{$1$-dimensional extensions}\label{Four}

The domain of a $1$-dimensional extension $\widetilde{A}$ is defined by $3$ independent relations between $\alpha_1(f)$, $\alpha_2(f)$, $\beta_1(f)$, $\alpha_1(f)$ for $f\in \dom\widetilde{A}$. From Proposition \ref{ext} it follows that $\widetilde{A}$ is $1$-dimensional extension of $A$ if and only if its adjoint is $3$-dimensional extension of $A$; it remains to apply  Proposition \ref{p1}. Hence we have two different cases.
  \begin{equation}\label{1a}
 (I)\quad \dom\widetilde{A}=\left\{
 f\in\cD_{\mmm} \left|
 \begin{array}{rcl}
   {a_1}\, \alpha_1(f)+ {b_1}\, \alpha_2(f)&=&0,\quad |a_1|+|b_1|\neq 0,\\
  &&\\
 \begin{bmatrix}
    \alpha &\beta \\
    \gamma&\delta
  \end{bmatrix}\,\begin{pmatrix}
    \alpha_1(f)\\
    \alpha_2(f)
  \end{pmatrix}&=& \begin{pmatrix}
    \beta_1(f)\\
    \beta_2(f)
  \end{pmatrix}
 \end{array}
 \right.
 \right\}.
\end{equation}

\begin{equation}\label{2a}
(II)\quad \dom\widetilde{A}=\left\{
 f\in\cD_{\mmm} \left|
 \begin{array}{rcl}
  c_1\, \beta_1(g)+d_1\, \beta_2(g)&=&0,\quad |c_1|+|d_1|\neq 0,\\
  &&\\
 \begin{bmatrix}
     \alpha &\beta \\
    \gamma&\delta
  \end{bmatrix}\,\begin{pmatrix}
    \beta_1(f)\\
    \beta_2(f)
  \end{pmatrix}&=& \begin{pmatrix}
    \alpha_1(g)\\
    \alpha_2(g)
  \end{pmatrix}
 \end{array}
 \right.
 \right\}.
\end{equation}

One can check directly using \eqref{w} that the following proposition is true.

\begin{proposition}\label{1*}
\begin{enumerate}
\item[$(i)$]\ Let $\widetilde{A}$ has domain \eqref{1a}. Then
\begin{equation*}
  \dom\widetilde{A}^*=\left\{
  g\in\cD_{\mmm}\mid a\,\alpha_1(g)+b\,\alpha_2(g)+c\,\beta_1(g)+d\,\beta_2(g)=0,
  \right\}
  \end{equation*}
  with
   \begin{equation}\label{lll2l}
  a=\overline{a_1},\quad b=\overline{b_1},\quad \mbox{ and }\quad
  \begin{pmatrix}
    c\\ d
  \end{pmatrix}=\begin{bmatrix}
    \overline{\delta}&-\overline{\gamma}\\
    -\overline{\beta}&\overline{\alpha}
  \end{bmatrix}\,\begin{pmatrix}
    \overline{a_1}\\ \overline{b_1}
  \end{pmatrix}.
  \end{equation}
\begin{equation*}
  \dom\widetilde{A}^+=\Pt\dom\widetilde{A}^*= \left\{
  h\in\cD_{\mmm}\mid a\,\beta_1(h)-b\,\beta_2(h)+c\,\alpha_1(h)-d\,\alpha_2(h)=0,
  \right\}
  \end{equation*}
  where $   a,\ b,\ c,\ d$  are the same as above.

\item[$(ii)$]\ Let $\widetilde{A}$ has domain \eqref{2a}. Then
\begin{equation*}
  \dom\widetilde{A}^*=\left\{
  g\in\cD_{\mmm}\mid a\,\alpha_1(g)+b\,\alpha_2(g)+c\,\beta_1(g)+d\,\beta_2(g)=0,
  \right\}
  \end{equation*}
  with
  \begin{equation}\label{lll1l}
  c=\overline{c_1},\quad d=\overline{d_1},\quad \mbox{ and }\quad
  \begin{pmatrix}
    a\\ b
  \end{pmatrix}=\begin{bmatrix}
    \overline{\delta}&-\overline{\gamma}\\
    -\overline{\beta}&\overline{\alpha}
  \end{bmatrix}\,\begin{pmatrix}
    \overline{c_1}\\ \overline{d_1}
  \end{pmatrix}.
  \end{equation}
  \begin{equation*}
  \dom\widetilde{A}^+=\Pt\dom\widetilde{A}^*=\left\{
  h\in\cD_{\mmm}\mid a\,\beta_1(h)-b\,\beta_2(g)+c\,\alpha_1(g)-d\,\alpha_2(g)=0,
  \right\}
  \end{equation*}
where $   a,\ b,\ c,\ d$  the same as above.
\qed
\end{enumerate}

\end{proposition}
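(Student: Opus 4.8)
The plan is to reduce the entire statement to the single bilinear identity \eqref{w} of Corollary~\ref{c1}, exactly as the sentence preceding the proposition already hints, and then to pass to $\widetilde{A}^+$ by means of the parity operator. Throughout, the key structural input is that a $1$-dimensional extension is determined by its boundary data $\bigl(\alpha_1(f),\alpha_2(f),\beta_1(f),\beta_2(f)\bigr)$ modulo $\dom A$, by \eqref{Sevilla} and Lemma~\ref{HornoEnSeville}.

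\emph{Parametrizing the boundary data in case $(i)$.} Let $\dom\widetilde{A}$ be given by \eqref{1a}. Because $|a_1|+|b_1|\neq 0$, the relation $a_1\alpha_1(f)+b_1\alpha_2(f)=0$ confines $\bigl(\alpha_1(f),\alpha_2(f)\bigr)$ to the one-dimensional line spanned by $(b_1,-a_1)$, and the two remaining relations of \eqref{1a} then express $\beta_1(f),\beta_2(f)$ as the corresponding linear combinations. Hence the boundary quadruple of $f\in\dom\widetilde{A}$ runs precisely through the line in $\mathbb C^4$ spanned by
$$
v:=\bigl(b_1,\ -a_1,\ \alpha b_1-\beta a_1,\ \gamma b_1-\delta a_1\bigr),
$$
and, conversely, Lemma~\ref{HornoEnSeville} shows that every scalar multiple of $v$ is attained by some $f\in\dom\widetilde{A}$ (two such $f$ differing only by an element of $\dom A$, by \eqref{Sevilla}). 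The degenerate subcases $a_1=0$ and $b_1=0$ yield the same line and need no separate discussion.

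\emph{Computing $\widetilde{A}^*$.} By Corollary~\ref{c1}, $g\in\dom\widetilde{A}^*$ if and only if
$$
\overline{\beta_1(g)}\,\beta_2(f)-\overline{\beta_2(g)}\,\beta_1(f)=\overline{\alpha_1(g)}\,\alpha_2(f)-\overline{\alpha_2(g)}\,\alpha_1(f)
$$
for all $f\in\dom\widetilde{A}$. Since both sides are linear in the boundary quadruple of $f$, the whole family of $f\in\dom\widetilde{A}$ can be replaced by the single test vector $v$; substituting the components of $v$, collecting coefficients, and taking complex conjugates turns the condition into one $\mathbb C$-linear relation $a\,\alpha_1(g)+b\,\alpha_2(g)+c\,\beta_1(g)+d\,\beta_2(g)=0$ whose coefficients are exactly those in \eqref{lll2l}. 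As a consistency check, this is a single relation, so $\widetilde{A}^*$ is a $3$-dimensional bi-extension of $A$, in accordance with Proposition~\ref{ext}: the domain \eqref{1a} encodes $3$ independent relations and $4-3=1$.

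\emph{The $\cP$-adjoint and the second case.} Finally, since $\cP$ is the fundamental symmetry we have $\dom\widetilde{A}^+=\cP\,\dom\widetilde{A}^*$, so $h\in\dom\widetilde{A}^+$ iff $\cP h\in\dom\widetilde{A}^*$; inserting $\alpha_1(\cP h)=\beta_1(h)$, $\alpha_2(\cP h)=-\beta_2(h)$, $\beta_1(\cP h)=\alpha_1(h)$, $\beta_2(\cP h)=-\alpha_2(h)$ from \eqref{p} of Lemma~\ref{alpha_1} into the relation just obtained immediately gives $a\,\beta_1(h)-b\,\beta_2(h)+c\,\alpha_1(h)-d\,\alpha_2(h)=0$, the asserted description of $\dom\widetilde{A}^+$. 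Case $(ii)$, with $\dom\widetilde{A}$ given by \eqref{2a}, is handled by the identical argument after interchanging the roles of $\alpha_j$ and $\beta_j$: now $\bigl(\beta_1(f),\beta_2(f)\bigr)$ lies on the line spanned by $(d_1,-c_1)$ and $\alpha_1(f),\alpha_2(f)$ are the corresponding combinations, leading to \eqref{lll1l}. The only genuine labor is the coefficient bookkeeping in the middle step — and keeping the signs aligned with \eqref{lll2l} — but there is no conceptual obstacle; everything follows from Corollary~\ref{c1}, Lemma~\ref{HornoEnSeville}, Lemma~\ref{alpha_1}, \eqref{Sevilla} and Proposition~\ref{ext}.
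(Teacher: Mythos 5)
Your strategy is exactly the one the paper itself invokes---its entire ``proof'' is the sentence that the proposition ``can be checked directly using \eqref{w}''---and your organization of that check is sound: by \eqref{Sevilla} and Lemma \ref{HornoEnSeville} the boundary quadruples of $f\in\dom\widetilde{A}$ fill precisely the line $\mathbb{C}v$ with $v=(b_1,-a_1,\alpha b_1-\beta a_1,\gamma b_1-\delta a_1)$, so \eqref{w} only needs to be tested against the single vector $v$; the dimension count via Proposition \ref{ext} and the passage to $\widetilde{A}^+$ through $\cP$ and \eqref{p} of Lemma \ref{alpha_1} are likewise correct.

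The gap is that you stop exactly where the content of the proposition lies: you assert that ``collecting coefficients'' reproduces \eqref{lll2l}, but it does not. Substituting $v$ into \eqref{w} and conjugating gives
$$
\overline{a_1}\,\alpha_1(g)+\overline{b_1}\,\alpha_2(g)+\bigl(\overline{\gamma}\,\overline{b_1}-\overline{\delta}\,\overline{a_1}\bigr)\beta_1(g)+\bigl(\overline{\beta}\,\overline{a_1}-\overline{\alpha}\,\overline{b_1}\bigr)\beta_2(g)=0,
$$
so the pair $(c,d)$ comes out as the \emph{negative} of what \eqref{lll2l} prints; since $a,b$ are pinned to $\overline{a_1},\overline{b_1}$, this is not an overall rescaling of the relation. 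A concrete test: take $a_1=1$, $b_1=0$, $\alpha=\gamma=\delta=0$, $\beta=1$; then $f\in\dom\widetilde{A}$ has boundary data $(0,t,t,0)$ and \eqref{w} reduces to $\alpha_1(g)+\beta_2(g)=0$, whereas \eqref{lll2l} yields $\alpha_1(g)-\beta_2(g)=0$. The same discrepancy appears if you compute $(\widetilde{A}^*)^*$ from Proposition \ref{p1} and compare it with $\widetilde{A}$, and it recurs in \eqref{lll1l} for case $(ii)$. So you must actually carry out the bookkeeping you defer; doing so you will either have to correct your claimed coefficients or conclude that \eqref{lll2l} and \eqref{lll1l} as printed carry a spurious sign in the $2\times 2$ matrix. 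Until that single computation is done and the sign reconciled, the proof does not establish the stated formulas.
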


\begin{remark}
From Lemma \ref{la} and Proposition \ref{pr} follow that a $1$-dimensional extension $\widetilde{A}$ is $\PT$-symmetry if and only if the numbers
$a, b, c ,d$
defined in Proposition \ref{1*}
 \eqref{lll2l} and \eqref{lll1l}, respectively, satisfy \eqref{eq3}.

\end{remark}

If $\widetilde{A}$ is a $1$-dimensional extension of $A$, then
$\widetilde{A}^*$ is a $3$-dimensional extension of $A$ with empty
resolvent set, see Theorem \ref{pr2} and we obtain the following.
\begin{theorem}
Let $\widetilde{A}$ be a $1$-dimensional extension of $A$. Then
$$
\sigma(\widetilde{A}) = \mathbb C.
$$
In particular, there are $\PT$ symmetric
 $1$-dimensional extensions of $A$ with empty resolvent set.
\end{theorem}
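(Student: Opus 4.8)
The plan is to reduce this statement directly to the $3$-dimensional case already handled in Theorem \ref{pr2}. First I would recall from Proposition \ref{ext} (applied with $p=4$, $r=1$) that if $\widetilde{A}$ is a $1$-dimensional bi-extension of $A$, then $\widetilde{A}^*$ is a $3$-dimensional bi-extension of $A$; indeed $\widetilde{A}^*$ has the form \eqref{eq2} by Proposition \ref{p1}. Theorem \ref{pr2} then gives $\sigma(\widetilde{A}^*)=\mathbb{C}$, and in particular $\mathbb{C}\setminus\mathbb{R}\subset\sigma_p(\widetilde{A}^*)$ since $\dim\ker(A^*-\lambda)=2$ for non-real $\lambda$ by Lemma \ref{ArnstadtSued}, forcing every non-real $\lambda$ to be an eigenvalue of the $3$-dimensional extension $\widetilde{A}^*$.

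The second step is to transfer this spectral information from $\widetilde{A}^*$ back to $\widetilde{A}$. For a densely defined closed operator in a Hilbert space one has $\lambda\in\sigma_p(\widetilde{A}^*)$ implies $\overline{\lambda}\in\sigma_r(\widetilde{A})\cup\sigma_p(\widetilde{A})$, and in any case $\overline{\lambda}\in\sigma(\widetilde{A})$, because $\ker(\widetilde{A}^*-\lambda)=\ran(\widetilde{A}-\overline{\lambda})^{[\perp]}$ in the Hilbert-space sense, so a nontrivial kernel of $\widetilde{A}^*-\lambda$ means $\widetilde{A}-\overline{\lambda}$ is not surjective with dense range, hence not boundedly invertible. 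Since the non-real $\lambda$'s are closed under conjugation, we conclude $\mathbb{C}\setminus\mathbb{R}\subset\sigma(\widetilde{A})$, and as $\sigma(\widetilde{A})$ is closed it follows that $\sigma(\widetilde{A})=\mathbb{C}$.

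Finally, for the ``in particular'' clause: by the Remark following Proposition \ref{1*}, a $1$-dimensional extension $\widetilde{A}$ is $\PT$-symmetric precisely when the associated numbers $a,b,c,d$ from \eqref{lll2l} or \eqref{lll1l} satisfy \eqref{eq3}, and such choices clearly exist (e.g.\ take $a_1,b_1$ and $\alpha,\beta,\gamma,\delta$ so that the resulting $a,b,c,d$ meet $|a|=|c|$, $|b|=|d|$, $a\overline{d}+b\overline{c}=0$). For any such $\widetilde{A}$ the first part gives $\sigma(\widetilde{A})=\mathbb{C}$, so the resolvent set is empty.

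I do not anticipate a serious obstacle here; the only point requiring a little care is the passage from $\sigma_p(\widetilde{A}^*)$ to $\sigma(\widetilde{A})$ — one must be slightly careful because an eigenvalue of the adjoint need not be an eigenvalue of the operator itself, only a point of the (residual or point) spectrum — but this is exactly the standard Hilbert-space duality $\ker(\widetilde{A}^*-\lambda)=(\ran(\widetilde{A}-\overline{\lambda}))^\perp$, and it suffices for the claim $\sigma(\widetilde{A})=\mathbb{C}$. Alternatively one can bypass this entirely by repeating the argument of Theorem \ref{pr2} directly on $\widetilde{A}$: any $1$-dimensional bi-extension of $A$ still contains the $4$-dimensional space $\ker(A^*-\lambda)\oplus\cdots$ only partially, but more simply, since $\widetilde{A}\subset A^*$ and one can solve, for each non-real $\lambda$, the boundary conditions defining $\dom\widetilde{A}$ within the $2$-dimensional solution space $\ker(A^*-\lambda)$ — three relations on a $2$-dimensional space generically still leave room once combined with the codimension count, but the clean route is the adjoint argument above.
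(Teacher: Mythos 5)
Your proposal is correct and follows essentially the same route as the paper: pass to $\widetilde{A}^*$, which by Proposition \ref{ext} is a $3$-dimensional bi-extension and hence has empty resolvent set by Theorem \ref{pr2}, then transfer back to $\widetilde{A}$ via the duality $\ker(\widetilde{A}^*-\lambda)=\bigl(\ran(\widetilde{A}-\overline{\lambda})\bigr)^\perp$. The paper leaves that last transfer step implicit, so your explicit justification (and the closedness-of-spectrum remark) is a welcome filling-in of detail rather than a different argument.
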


\subsection*{Acknowledgement}
The authors thank Johannes Sj\"{o}strand for fruitful discussions which
were the starting point for the present paper.

Tomas Ya.\ Azizov\\
Department of Mathematics\\
 Voronezh State University\\
  Universitetskaya pl.~1\\
   394006 Voronezh\\
 Russia\\
 azizov@math.vsu.ru

\ \\
Carsten Trunk\\
Institut f\"{u}r Mathematik\\
 Technische Universit\"{a}t
 Ilmenau\\
   Postfach 10 05 65\\
    D-98684~Ilmenau\\
    Germany\\
carsten.trunk@.tu-ilmenau.de

\end{document}